\newtheorem{Xeorem}{Theorem}[section]
\newtheorem{definition}[Xeorem]{Definition}
\newtheorem{resultttt}[Xeorem]{\bf Quantum Indeterminacy Postulate}
\newtheorem{lemma}[Xeorem]{Lemma}
\begin{document}

\def\crta{\vrule height1.41ex depth-1.27ex width0.34em}
\def\dj{d\kern-0.36em\crta}
\def\Crta{\vrule height1ex depth-0.86ex width0.4em}
\def\Dj{D\kern-0.73em\Crta\kern0.33em}
\dimen0=\hsize \dimen1=\hsize \advance\dimen1 by 40pt

\title{Quantum Contextual Hypergraphs, Operators, Inequalities, and Applications in Higher Dimensions}

\author{Mladen Pavi\v ci\'c}
\email{mpavicic@irb.hr}
\homepage{https://www2.irb.hr/users/mpavicic;
  https://mladen-pavicic.github.io/web/; Orcid: 0000-0003-1915-6702}

\affiliation{{Center} of Excellence for Advanced
  Materials and Sensors, Institute Ru{\dj}er
  Bo\v skovi\'c, 10000 Zagreb, Croatia}
\affiliation{Institute of Physics, 10000 Zagreb, Croatia}
\affiliation{Nano Optics, Department of Physics, Humboldt University,
12489 Berlin, Germany}

\begin{abstract}
  Quantum contextuality plays a significant role in
  supporting quantum computation and quantum information theory.
  The key tools for this are the Kochen--Specker and
  non-Kochen--Specker contextual sets. Traditionally, their
  representation has been predominantly operator-based, mainly
  focusing on specific constructs in dimensions ranging from three
  to eight. However, nearly all of these constructs can be represented
  as low-dimensional hypergraphs. This study demonstrates how to
  generate contextual hypergraphs in any dimension using various
  methods, particularly those that do not scale in complexity
  with increasing dimensions. Furthermore, we introduce innovative
  examples of hypergraphs extending to dimension 32. Our methodology
  reveals the intricate structural properties of hypergraphs,
  enabling precise quantifications of contextuality of implemented
  sets. Additionally, we investigate several promising applications
  of hypergraphs in quantum communication and quantum computation,
  paving the way for future breakthroughs in the field.
\end{abstract}

\keywords{quantum contextuality; MMP hypergraphs; 
  Kochen--Specker sets; non-Kochen--Specker sets}

\maketitle 

\section{Introduction}\label{sec:intro}

Quantum contextuality is a property of a set of quantum
states that precludes assignments of predetermined
0--1 values to them, i.e.,~it assumes the total absence of
two-valued states on them. The~set can be a Kochen--Specker
one~\cite{zimba-penrose,matsuno-07,pavicic-quantum-23}
or a non-Kochen--Specker one~\cite{pavicic-entropy-23}.
We will provide relevant definitions and references later~on.

Contextual sets have been implemented in various experiments
~\cite{simon-zeil00,michler-zeil-00,amselem-cabello-09,liu-09,d-ambrosio-cabello-13,ks-exp-03,lapkiewicz-11,zu-wang-duan-12,canas-cabello-8d-14,canas-cabello-14,zhan-sanders-16,li-zeng-17,li-zeng-zhang-19,frustaglia-cabello-16,zhang-zhang-19,h-rauch06,cabello-fillip-rauch-08,b-rauch-09,k-cabello-blatt-09,moussa-09,jerger-16},
which we will not discuss here, because most of them simply confirm
that implemented contextual sets are~contextual. 

Quantum contextual sets have found application in quantum
communication~\cite{cabello-dambrosio-11,nagata-05,saha-hor-19},
quantum computation~\cite{magic-14,bartlett-nature-14}, quantum
nonlocality~\cite{kurz-cabello-14}, quantum steering
~\cite{tavakoli-20}, and~lattice theory
~\cite{mp-7oa}, but mainly with
smallest sets and with little elaboration regarding how to scale them up. 
We shall discuss several application proposals and possible
scale-ups later~on.

First of all, we shall provide hypergraph formulations for
Kochen--Specker as well as for non-Kochen--Specker contextual sets.
These formulations may arise, e.g.,~from
\begin{itemize}
\item operator-based sets
~\cite{klyachko-08,yu-oh-12,beng-blan-cab-pla12,xu-chen-su-pla15,ram-hor-14,cabell-klein-budr-prl-14};  
\item sets built by multiples of mutually orthogonal vectors,
  where at least one of the multiples contains less than $n$
  vectors, with $n$ being the dimension of space in which the sets
  reside \cite{magic-14}; or 
\item the so-called true-implies-false and true-implies-true
  sets~\cite{cabello-svozil-18,svozil-21}, etc. 
\end{itemize}

To make this introduction self-reliant, we need to introduce several
definitions. We do not assume that the reader is familiar with the
language and formalism of general hypergraphs or the particular
types of hypergraphs that we employ in this paper.
Therefore, we shall briefly and informally review some properties
of general hypergraphs, followed by a more rigorous definition of
the specific types that we use.

\subsection{\label{subsec:g-hypergraphs}General
  Hypergraphs}

A general hypergraph is a pair of a finite set of elements and a
family of subsets of these elements. The~elements are called
vertices of the hypergraph, and~the subsets are called hyperedges
of the hypergraph. Vertices might be represented by vectors,
operators, numbers, geometrical points, files in a database,
elements of a DNA sequence, or~other objects, and~hyperedges are
represented by a relation between the vertices contained in them,
such as orthogonality, inclusion, geometry, data records, genes,
etc.~\cite{berge-73,berge-89,bretto-13,voloshin-09}. Thus, a~set
of {\em vertices}, $V=\{v_1,v_2,\dots,v_k\}$ and a set of subsets
of $V$ (called {\em {hyperedges}}: $e_i$, $i=1,\dots,l$),
$E=\{e_1,e_2,\dots,e_l\}$, build a pair ${\cal{H}}=(V,E)$ called
a {\em hypergraph}; notations $V({\cal{H}})$ and $E({\cal{H}})$
are also in use. Graphically, a~hyperedge $e_j$ may be represented
as a continuous curve joining two elements/points/vertices if the
cardinality (number of elements) within the hyperedge is $|e_j|=2$,
by a loop if $|e_j|=1$, and~by a closed curve enclosing the elements
if $|e_j|>2$. Numerically, they can be represented by the incidence
matrices (\cite{berge-89}, p.~2, Figure~1), with columns as
hyperedges and rows as vertices. Intersections of hyperedge columns
with vertex rows contained in hyperedges are assigned `1', and
those not contained are assigned `0'.

The number of vertices within a hypergraph ($k$), i.e.,
the cardinality of $V$ ($|V|$), is called the {\em order} of
a hypergraph, and~the number of hyperedges within a hypergraph
($l$), i.e.,~the cardinality of $E$ ($|E|$), is called the
{\em size} of a~hypergraph. 

\subsection{\label{subsec:mmps}MMP~Hypergraphs}

The particular type of general hypergraph that we deal with in this paper
is the McKay--Megill--Pavi\v ci\'c hypergraph (MMPH).

\begin{definition}\label{def:MMP-dim}{\bf MMPH-dimension} 

  $n$
  ({\em MMPH
  -dim} $n$) is a predefined (for an assumed task or
  purpose) maximal possible number ($n$) of vertices within a
  hyperedge of an {\em MMPH}, even when none of the processed
  hyperedges include $n$ vertices. It is abbreviated as
  {\em MMPH-dim}.
\end{definition}

\begin{definition}\label{def:MMP-string}
  An {\bf {MMPH}} is a connected  hypergraph
  ${\cal{H}}=(V,E)$ (where $V=\{V_1,V_2,\dots,V_k\}$ is a set of
  {\em vertices} and $E=\{E_1,E_2,\dots,E_l\}$ sets of
  {\em hyperedges}) of {\em MMPH-dim} $n\ge 3$ in which
\begin{enumerate}
  \item Every vertex belongs to at least one hyperedge;
  \item Every hyperedge contains at least $2$ and at most $n$
    vertices;
  \item No hyperedge shares only one vertex with another
    hyperedge;
  \item Hyperedges may intersect with each other in at most $n-2$
    vertices;
  \item Numerically, an {\em MMPH} 
    is a string of ASCII
    characters corresponding to vertices and organized in
    substrings (separated by commas) corresponding to hyperedges;
    each string ends with a period {\rm(}$\!$`\/{\tt .}$\!$'\/{\rm)};
    one uses 90 characters: 
    {{\tt 1$\,$\dots$\,$9 A$\,$\dots$\,$Z a$\,$\dots$\,$z$\,$!$\,$"$\,$\#}$\,${\rm{\$}$\,$\%$\,$\&$\,$'$\,$( ) * - / : ; \textless\ = \textgreater\ ?
    @ [ {$\backslash$} ] \^{} \_ {`} {\{} {\textbar} \} $\sim$}};
    when exhausted, one reuses them prefixed by `{\/\tt +}$\!$' and
    then by `\/{\tt ++}$\!$', etc.; there is no limit on their
    length;
  \item Graphically, vertices are represented as dots and
    hyperedges as (curved) lines passing through them.
  \end {enumerate}
\end{definition}

The differences between the standard hypergraph formalism and the
MMPH formalism are illustrated in (\cite{pavicic-quantum-23}, Figure~1).

\subsection{\label{subsec:b-non-b}Non-Binary and
  Binary MMPHs}
\vspace{6pt}

\begin{definition} A $k$-$l$ {\em MMPH} of dim $n\ge 3$
  with $k$ vertices and $l$ hyperedges, whose $j$-th hyperedge
  contains $\kappa(j)$ vertices  $(2\le\kappa(j)\le n$,\
  $j=1,\dots,l)$, to which it is impossible to assign {1}s and
  {0}s in such a way that the following {\em rules}
  hold
\begin{enumerate}[label=(\roman*)]
\item No two vertices within any of its hyperedges may both be
assigned the value $1$;
\item In any of its hyperedges, not all vertices may be
assigned the value $0$.
\end{enumerate}
is called a  {\bf non-binary} {\em MMPH} {\rm(}{\bf NBMMPH}{\rm)}. 

\label{def:n-b}
\end{definition}

An NBMMPH is contextual as it does not allow predetermined
values 1 and 0 to be assigned to all vertices
by violating rules ({\em i}) and ({\em ii}).

\begin{definition} An {\em MMPH} to which it is possible to
  assign 1s and 0s to satisfy rules (i)
  and (ii) of Definition~\ref{def:n-b} is called a
  {\bf {binary}} {\em MMPH} {\rm(}{\bf {BMMPH}}{\rm)}.
\label{def:bin}
\end{definition}

A BMMPH is noncontextual as it does allow predetermined values
1 and 0 to be assigned to all vertices by satisfying rules
({\em i}) and ({\em ii}).

\begin{definition} A {\bf {critical NBMMPH}} is an
  {\em NBMMPH} that is minimal in the sense that removing
  any of its hyperedges transforms it into a {\em BMMPH}.
\label{def:critical}
\end{definition}

\begin{definition}\label{df:multi} {\bf {Vertex multiplicity}} is
  the number of hyperedges that vertex `$i$' belongs to; we denote
  it by $m(i)$.
\end{definition}

\begin{definition}\label{df:class} A {\bf {master}} is a
  non-critical {\em MMPH} that contains smaller critical and
  non-critical sub-{\em MMPH}s. A~collection of all sub-{\em MMPH}s
  of an {\em MMPH} master forms its {\em class}. 
\end{definition}

\begin{definition}\label{def:over-sub}
  Let ${\cal K}$ be a subset of an {\em MMPH} ${\cal H}=(V,E)$.
  A ${\cal K}$ from which at least one of vertices with vertex
  multiplicity $m=1$ is taken out is called a 
  {$\overline{\rm \bf subhypergraph}$}.
\end{definition}

\subsection{\label{subsec:structure}Hypergraph Structural
   Discriminators}

The principal distinction between an NBMMPH and a BMMPH is that,
in the latter, we can assign 1s to vertices to satisfy conditions
{\em (i)} and {\em (ii)} from Definition~\ref{def:n-b} and that
all hyperedges contain one of them, while, in the former, this is
not possible since there exists at least one hyperedge where the
conditions are violated, preventing any vertex from being assigned
the value 1. It has been found that the number of 1s that one can
assign to vertices in an NBMMPH vs.~a BMMPH plays a crucial role
in discriminating between these two MMPHs and in evaluating the
contextuality of the~former. 
 
\begin{definition}
  The {\bf MMPH classical vertex index} $HI_c$ is the number
  of $1$s that one can assign to the vertices of an {\em MMPH} so as
  to satisfy conditions {\em(i)} and {(ii)} from
  Definition~{\em\ref{def:n-b}}. The maximal (minimal) $HI_c$ is denoted
  as $HI_{cM}$ ($HI_{cm}$).
\label{def:ch-i}
\end{definition}

(Note: In ref.~\cite{pavicic-entropy-19}, some values of
$HI_c$ were incorrectly calculated due to an application problem
in our previous algorithm and program; algorithm and program
{\textsc{One}} used since (see~\cite{pavicic-quantum-23,pw-23a})
are substitutes for the previous ones.)

\begin{definition}
  The {\bf MMPH classical multiplexed vertex index} $HI^m_c$ is
  the number that we obtain when summing up all multiplicities of
  vertices of an $n$-dim {\em MMPH}, whose every hyperedge
  contains $n$ vertices, to~which we can assign {1}s so as to
  satisfy conditions (i) and (ii) from Definition~{\ref{def:n-b}}.
  The maximal (minimal) $HI^m_c$ is denoted as
  $HI^m_{cM}$ ($HI^m_{cm}$).
\label{def:ch-mi}
\end{definition}

We obtain $HI_c$ and $HI^m_c$ through the algorithm and its program
{\textsc{One}}, which assigns 1s to the vertices of an MMPH.
The algorithm randomly searches for a distribution of 1s satisfying
conditions {\em(i)} and {\em(ii)} from Definition~\ref{def:n-b}.
It starts with a randomly chosen hyperedge in which one vertex is
assigned 1 and the other vertices are assigned 0s and continues with
the connected hyperedges until all permitted vertices are assigned 1.
Multiplicities for the found 1s accumulated in the process are taken
into account. For~NBMMPHs, this means ``until a contradiction is
reached'', i.e.,~to a point at which no vertex from the remaining
hyperedges can be assigned 1; vertices within the latter hyperedges
are all assigned 0s. The~maximal number of \mbox{1s} ($HI_{cM}$,
$HI^m_{cM}$) is obtained by (up to 50,000) parallel runs with
reshuffled vertices and hyperedges. Because~we do not make use of
backtracking in the search algorithm to resolve conflicts, the
procedure does not exponentially increase the CPU time with an
increasing number of vertices. MMPHs with several thousand vertices
and hyperedges are processed within seconds on each CPU of a cluster
or a~supercomputer.

The probability of not finding the correct minimal or maximal $HI_c$
and $HI^m_c$ after so many runs is extremely small;
nevertheless, this small probability restrains our results,
meaning that slightly larger maxima and smaller minima might
be found in future computations for a chosen~hypergraph.

\begin{definition}The {\bf classical hyperedge number $l_c$}
  is the number of hyperedges that contain vertices that take
  part in building $HI_{c}$, and the maximal and minimal number
  of such hyperedges are $l_{cM}$ and $l_{cm}$, respectively.
\label{def:lcMm}
\end{definition}

We stress that, in~most cases, $l_{cM}$ hyperedges do not contain
$HI_{cM}$ vertices but a smaller number of them. Moreover, $l_{cm}$
hyperedges usually do not contain $HI_{cm}$ vertices but a
larger number of~them.

The classical vertex index $HI_{cM}$ of a hypergraph $\cal H$ is
related to the independence number of $\cal H$ introduced by
Gr{\"o}tschel, Lov{\'a}sz, and~Schrijver (GLS)
 (\cite{gro-lovasz-schr-81}, p.~192). They introduced the
definition for graphs, but it holds for hypergraphs as well,
with graph cliques transliterated into~hyperedges. 

\begin{definition}{\bf GLS ${\alpha}$}. The~independence
  number of $\cal H$ denoted by $\alpha({\cal H})$ is the maximum
  number of pairwise non-adjacent vertices.  
\label{def:alpha}
\end{definition}

The independence number $\alpha$ has been given several definitions
and names in the literature. For~instance, ``$\alpha({\cal H})$
is the size of the largest set of vertices of $\cal H$ such that
no two elements of the set are adjacent'' \cite{magic-14}.
Such a set is called an {\em independent} or a {\em stable}
set (\cite{amaral-cunha-18}, Definition~2.13) (\cite{berge-73}, p.~272, 428),
and $\alpha$ is also called a {\em stability number}
( \cite{berge-73}, p.~272, 428). In~such a set, no two vertices are
connected by a hyperedge. The definitions of these notions given by
Voloshin differ since his sets might include two or more vertices
from the same hyperedge (\cite{voloshin-09}, p.~151). 

\begin{lemma} $HI_{cM}({\cal H})=\alpha({\cal H})$.
  \label{lemma:alpha}
\end{lemma}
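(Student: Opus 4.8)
The plan is to prove the equality by two opposite inequalities, after first recognising that conditions (i) and (ii) of Definition~\ref{def:n-b} translate verbatim into hypergraph language. The key observation is that rule (i) — no hyperedge may carry two vertices valued $1$ — says exactly that the set $S$ of $1$-valued vertices is pairwise non-adjacent, since in an MMPH two vertices are adjacent precisely when some hyperedge contains both. Hence every $1$-distribution admissible under rule (i) is the indicator of a stable (independent) set $S$, and the index it yields is $HI_c=|S|$. In this dictionary the lemma asserts that the largest attainable $|S|$ equals the independence number $\alpha(\mathcal{H})$ of Definition~\ref{def:alpha}.

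First I would establish $HI_{cM}(\mathcal{H})\le\alpha(\mathcal{H})$. Given any admissible distribution, its support $S$ is stable by the observation above, so $|S|\le\alpha(\mathcal{H})$; maximising over admissible distributions preserves the bound. For the reverse inequality $HI_{cM}(\mathcal{H})\ge\alpha(\mathcal{H})$ I would pick a maximum stable set $S^{\star}$ with $|S^{\star}|=\alpha(\mathcal{H})$, value its vertices $1$ and all others $0$. Stability is rule (i); and because $S^{\star}$ is stable of largest size, no further vertex can be raised to $1$, so this is precisely the kind of saturated configuration on which the assignment process (program \textsc{One}) terminates and whose $1$s it counts. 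Its count is $|S^{\star}|=\alpha(\mathcal{H})$, which forces the reverse bound, and the two estimates give the asserted equality.

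The step I expect to be the genuine obstacle is reconciling rule (ii) — no hyperedge may be all-$0$ — with the independence number, which demands no covering at all. I would argue that (ii) never lowers the extremal count. Raising an extra vertex to $1$ (while respecting (i)) can only add covered hyperedges, so the maximiser sits on a \emph{maximal} stable set; for a BMMPH one must then show that a largest stable set can be chosen to be a two-valued state, i.e.\ to meet every hyperedge, which is where the structural constraints of Definition~\ref{def:MMP-string} (in particular the ban on single-vertex overlaps) enter and prevent a maximum stable set from leaving a hyperedge starved. For an NBMMPH rule (ii) is by Definition~\ref{def:n-b} unattainable on all hyperedges at once, so the process legitimately halts with the unreachable hyperedges all-$0$ exactly as described above; (ii) then acts as the search target rather than as a constraint that caps the $1$-count. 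In both regimes the binding constraint for the maximum is the stability rule (i), whence $HI_{cM}(\mathcal{H})=\alpha(\mathcal{H})$.
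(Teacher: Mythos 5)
Your first two paragraphs are, in expanded two-inequality form, exactly the paper's one-sentence proof: rule (i) makes the support of any admissible assignment a pairwise non-adjacent set, and conversely a maximum such set, assigned 1s, is admissible, whence $HI_{cM}=\alpha$. The flaw sits in your third paragraph, at precisely the spot you yourself call ``the genuine obstacle.'' For BMMPHs you keep rule (ii) as a hard constraint and assert that the structural conditions of Definition~\ref{def:MMP-string} (your ``ban on single-vertex overlaps'') force some \emph{maximum} stable set to meet every hyperedge. You never prove this, and it is false. Consider the bipartite graph with vertices $a_1,\dots,a_5$, $b_1,\dots,b_5$ and two-vertex hyperedges $\{a_i,b_j\}$ for $i\le 3$, $j\le 2$, $\{a_i,b_j\}$ for $i\in\{4,5\}$, $j\in\{3,4,5\}$, plus $\{a_4,b_2\}$. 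This is a legitimate 3-dim MMPH in the paper's sense: two-vertex hyperedges are allowed (cf.\ the pentagon of~\cite{klyachko-08}), every vertex lies in at least two hyperedges, and condition 3 cannot be read as banning single-vertex pairwise overlaps, since the bug of Figure~\ref{fig:bug} has such overlaps. It is binary: assigning 1 to every $a_i$ satisfies (i) and (ii). Yet its unique maximum stable set $\{a_1,a_2,a_3,b_3,b_4,b_5\}$, of size 6 (the unique minimum vertex cover being $\{b_1,b_2,a_4,a_5\}$), leaves the hyperedge $\{a_4,b_2\}$ entirely 0-valued, while the largest stable sets meeting every hyperedge are the two bipartition classes, of size 5. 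So if (ii) were a hard constraint on the assignments counted by $HI_{cM}$, the lemma itself would be false; no structural property of MMPHs can rescue the claim you hang this case on.

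The correct resolution --- the one the paper's proof uses silently, and that you adopt only for NBMMPHs --- is that (ii) does not cap the count at all: $HI_c$ is produced by the algorithm \textsc{One}, which maximizes 1s subject to (i) alone and leaves unreachable hyperedges all-0; the existence of such hyperedges is what registers the violation of (ii), not a reason to discard the assignment. Read that way, your two inequalities already complete the proof uniformly for binary and non-binary MMPHs, and the entire BMMPH branch of your third paragraph is unnecessary. As submitted, however, your argument routes a key case through an unproven and false structural assertion, so it does not go through.
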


\begin{proof} Via conditions (i) and (ii) from
  Definition~\ref{def:n-b} which Definition~\ref{def:ch-i} invokes,
  no two vertices to which one can assign 1 can belong to
  the same hyperedge. The~maximum number of such vertices,
  i.e.,~$H_{cM}({\cal H})$, is therefore the maximum number of
  pairwise non-adjacent vertices, i.e.,~according to
  Definition~\ref{def:alpha}, just $\alpha({\cal H})$.
\end{proof}  

\subsection{\label{subsec:coord}Coordinatization}

Vector or state or operator representation, i.e.,~a
{\em coordinatization} of vertices, is operationally required
for any implementation of an MMPH since a full coordinatization
of vertices transforms MMPH-dim $n$ into a dimension of a Hilbert
space determined by the vectors that each vertex is assigned to.
Whether we refer to an MMPH with or without a coordinatization
will be clear from the~context.

An $n$-dim $k$-$l$ NBMMPH ${\cal{H}}$ need not have
a coordinatization, but,~when it does, the~vertices in every
hyperedge have definite mutually orthogonal vectors assigned to
them. This means that each hyperedge  $E_j$, $j=1,\dots,l$ should
have not only $\kappa(j)$ vectors corresponding to its $\kappa(j)$
vertices specified, but~also $n-\kappa(j)$ ones that must exist in
each $E_j$ by the virtue of orthogonality in the $n$-dim space, so
as to form an orthogonal basis of the space. Such an extended
${\cal{H}}$ is called a {\em filled} ${\cal{H}}$. We can have a
filled MMPH even when a coordinatization does not exist. Then,
it means that, to each hyperedge that does not contain $n$ vertices,
vertices are added so that it~does. 

\begin{definition}\label{df:filled}
  A {\bf filled} $n$-dim {\em MMPH} is one that is derived
  from an $n$-dim {\em MMPH} with at least one hyperedge
  containing fewer than $n$ vertices by adding vertices to ensure
  that all hyperedges have precisely $n$ vertices.
\end{definition}
\begin{wrapfigure}{r}{0.3\textwidth}
\vspace{-20pt}
  \begin{center}
    \includegraphics[width=0.3\textwidth]{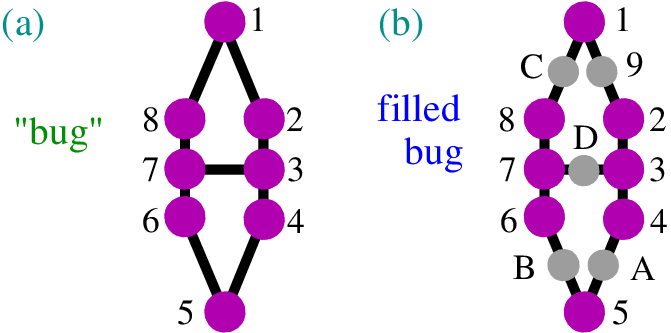}
  \end{center}
\vspace{-10pt}
  \caption{(\textbf{a})  The 8-7 NBMMPH 
    (\cite{pavicic-pra-22}, Supplementary Materials, Figure~3) or the
    {\em bug}; cf.~(\cite{koch-speck}, $\Gamma_1$); 
    \mbox{(\textbf{b}) filled} bug---13-7 BMMPH; grey dots represent
    vertices with $m=1$.}
\label{fig:bug}
\end{wrapfigure}

For instance, a~contextual critical NBMMPH
{\tt 12,234,45,56,\break 678,81,37}, or a ``bug'', as shown in
Figure~\ref{fig:bug}(a), obtains a coordinatization from its filled
MMPH {\tt 192,234,4A5,5B6,678,}\linebreak{\tt 8C1,3D7}. The latter
MMPH is not contextual; it is a BMMPH, as shown in
Figure~\ref{fig:bug}(b).

Our algorithms and programs can detect the contextuality of an
MMPH regardless of whether its coordinatization is given (or even
existent) or not. The handling of MMPHs using our algorithms
embedded in the programs SHORTD, MMPSTRIP, MMPSUBGRAPH, VECFIND,
STATES01, and others, without taking their coordinatization into
account, gives us a computational advantage over handling them
with a coordinatization, because processing bare hypergraphs is
faster than processing them with vectors assigned to their vertices.

\begin{definition}\label{df:coord}
  A {\em coordinatization} of an {\em MMPH} is a set
  of vectors/states assigned to its vertices, which is a subset
  of $n$-dim vectors in ${\cal H}^n$, $n\ge 3$, assigned to its
  vertices, provided that all hyperedges contain $n$ vertices,
  or~to vertices of its filled {\em MMPH} or of any of its
  {\em MMPH} masters whose hyperedges all contain $n$ vertices.
\end{definition}

Hence, an~NBMMPH whose hyperedges contain $m\le n$ vertices inherits
its coordinatization from the coordinatization of its master or of
its filled set (they may coincide, but~usually they do not). When
the method {\bf M1} (from Section~\ref{subsec:nonks}) for the
generation of MMPHs is applied,  a coordinatization is automatically
assigned to each contained MMPH by the very procedure of its
generation (method {\bf M2}) from master MMPHs, as we shall
see below. 

Note that the Kochen--Specker theorem below does not require a
coordinatization, although~its original proof involved one~\cite{koch-speck}.

\begin{Xeorem} {\bf Kochen--Specker Theorem.} 
  There exist $k$-$l$ {\em MMPH}s of dim $n\ge 3$ with $k$ vertices
  and $l$ hyperedges, whose $j$-th hyperedge contains $\kappa(i)$
  vertices  $(2\le\kappa(j)\le n$,\ $j=1,\dots,l)$ to which it is
  impossible to assign {1}s and {0}s in such a way that the
  following {\em rules} hold:
\begin{enumerate}[label=(\roman*)]
\item No two vertices within any of its hyperedges may both be
assigned the value $1$;
\item In any of its hyperedges, not all vertices may be
assigned the value $0$.
\end{enumerate}
Such {\em MMPH}s are called {\bf KS MMPH}s. All {\em KS MMPH}s
are {\em NBMMPH}s. Every {\em KS MMPH} is a {\bf proof} of the
{\em KS} theorem. 
\label{def:ks-theorem}
\end{Xeorem}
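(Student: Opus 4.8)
The plan is to read the statement as a pure existence claim. Since the remark preceding it notes that the theorem does \emph{not} require a coordinatization, it is enough to produce, for every $n\ge 3$, a single MMPH of MMPH-dim $n$ that admits no assignment of $1$s and $0$s satisfying rules (i) and (ii) of Definition~\ref{def:n-b} --- that is, one NBMMPH --- and to do so by a scheme uniform in $n$. I would therefore (1) exhibit explicit witnesses in the lowest dimensions with a short, computation-free certificate of non-colorability, and (2) invoke a generation method that yields such witnesses in every dimension at once.

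For the certificate I would use a parity argument expressed through the vertex multiplicities $m(i)$ of Definition~\ref{df:multi}. Rules (i) and (ii) together force each hyperedge to carry exactly one vertex valued $1$; hence, writing $S$ for the set of $1$-valued vertices and summing the $1$-incidences over all $l$ hyperedges, $\sum_{i\in S}m(i)=l$. Whenever every vertex has even multiplicity while $l$ is odd, the left side is even and the right side is odd, ruling out any valid assignment. The bug of Figure~\ref{fig:bug}(a) --- the $8$-$7$ set, Kochen and Specker's $\Gamma_1$ --- has all $m(i)=2$ and $l=7$, and the smallest four-dimensional KS set (the $18$-$9$ set) has all $m(i)=2$ and $l=9$; both are non-colorable by this one mechanism, giving base cases in dimensions three and four with no search at all. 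Via Lemma~\ref{lemma:alpha} the same fact reads: no independent vertex set realizing $HI_{cM}({\cal H})=\alpha({\cal H})$ can meet every hyperedge.

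To cover all $n\ge 3$ uniformly I would lean on the generation methods \textbf{M1} and \textbf{M2} of Section~\ref{subsec:nonks}, which emit MMPHs (with coordinatizations already attached) in arbitrary dimension directly from master sets; feeding them suitable masters produces non-colorable witnesses in every dimension, including the high-dimensional examples the paper pushes to dimension $32$. The parity certificate of the previous paragraph transfers verbatim to each output whose vertices all have even multiplicity and whose hyperedge count is odd, so no new non-colorability argument is needed per dimension --- precisely the non-scaling feature the paper advertises.

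The combinatorial existence demanded by the statement is thus not the hard part; the substance lies one level deeper. For these MMPHs to serve as genuine quantum KS proofs they must be \emph{coordinatizable}, i.e.\ realizable by vectors that form a mutually orthogonal basis of ${\cal H}^{n}$ within each (filled) hyperedge (Definition~\ref{df:coord}). Producing such vectors is a feasibility problem for a (generically polynomial) system of orthogonality constraints, and the genuinely delicate point --- which I expect to be the main obstacle --- is to solve it uniformly as $n$ grows, so that coordinatizability is guaranteed to survive the generation step rather than having to be re-established by hand in each new dimension. Establishing that the methods \textbf{M1}/\textbf{M2} always deliver a valid coordinatization, not merely a non-colorable skeleton, is where the real work of the proof resides.
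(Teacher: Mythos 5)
Your parity certificate is correct and, notably, does more work than the paper itself: the paper's entire proof of this theorem is the word ``Obvious'' plus citations to the literature where explicit proofs live. Your base cases check out. Rules \emph{(i)}--\emph{(ii)} force exactly one vertex valued $1$ per hyperedge, so counting incidences between the set $S$ of $1$-valued vertices and the hyperedges gives $\sum_{i\in S}m(i)=l$; in the bug all $m(i)=2$ with $l=7$, and in the 18-9 set all $m(i)=2$ with $l=9$, so an even number would equal an odd one. This is a genuine, self-contained, computation-free argument where the paper offers only a pointer to references --- a legitimately different and more informative route. (One terminological caveat: the bug witnesses the theorem's literal wording, but under the paper's own Definition~\ref{def:KSdef} it is a \emph{non-KS} NBMMPH, since not all of its hyperedges contain $n$ vertices; the 18-9 is a KS MMPH in the strict sense.)

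The step covering all $n\ge 3$, however, has a real gap, in three parts. First, you attribute to {\bf M1}/{\bf M2} a dimension-independence they do not have: the paper states explicitly that master generation from vector components is of exponential complexity and is limited in practice to dimension eight; the methods whose complexity ``does not scale with dimension'' are {\bf M7}/{\bf M8} (dimensional upscaling), and those are what produced the dim-27 and dim-32 examples. Second, even granting some generator, you give no argument that its output in each dimension has the even-multiplicity/odd-hyperedge-count structure your certificate requires; parity proofs are a special structural feature, not a generic one, so ``transfers verbatim'' is unjustified --- each dimension would need either a parity-structured witness or a separate non-colorability argument. Third, your closing paragraph misplaces the difficulty: as the paper remarks immediately before the theorem, the KS theorem here requires no coordinatization, so coordinatizability is not part of what must be proved; and in any case the worry is backwards for {\bf M1}, which builds hyperedges \emph{from} mutually orthogonal vector $n$-tuples, so every {\bf M1} output carries a coordinatization by construction. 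Two clean ways to close the gap: observe that MMPH-dim is merely a predefined bound (Definition~\ref{def:MMP-dim}), so under the theorem's literal wording a single witness such as the bug already serves every $n\ge 3$; or, to respect the paper's stronger intent that hyperedges of size $n$ occur, lift your two base cases through the dimension-raising constructions of {\bf M7}/{\bf M8}, which is essentially what the paper's cited references do.
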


\begin{proof} Obvious~\cite{koch-speck,budroni-cabello-rmp-22,zimba-penrose,pavicic-quantum-23}.
\end{proof}

This paper is organized as~follows.

In Section~\ref{subsec:ks-non-ks}, we introduce Kochen--Specker (KS)
and non-Kochen--Specker (non-KS) NBMMPHs.

In Section~\ref{subsec:nonks}, we present eight methods, {\bf M1--8},
which we make use of to generate~MMPHs.

In Section~\ref{subsec:ineq}, we consider the operator approach, which
yields several types of inequalities, and we prove several lemmas on
them. We also consider the fractional independence (packing) number
$\alpha^*({\cal H})$ 

 of an {MMP} and arrive at the
Quantum Indeterminacy Postulate \ref{postulate} and two different
types of statistics, Definitions~\ref{def:raw} and \ref{def:post}, which
satisfy Theorem \ref{eq:sum-edge}.

In Section~\ref{subsec:gen}, we review special cases of KS and non-KS
MMPHs in dimensions 3 to 32 and provide new instances of them. We
also introduce a new graphical presentation of higher-dimensional
MMPHs. 

In Section~\ref{subsec:app}, we give four possible applications of
higher-dimensional MMPHs. Section~\ref{appA} considers larger alphabet
communication, and we provide a detailed protocol for it;
Section~\ref{appB} presents the oblivious communication protocol;
Section~\ref{appC} discusses generalized Hadamard matrices;
and Section~\ref{appD} discusses stabilizer~operations.

In Section~\ref{sec:disc}, we discuss the results obtained in this
article. 

\section{\label{sec:results}Results}

We conduct an in-depth analysis of contextual sets, focusing
on their structure and properties. Our aim is not to create
blueprints for experiments that prove their contextuality (because,
after so many experiments carried out so far, new ones would not
reveal anything new) or to use contextuality do disprove hidden
variable models (which are now barely regarded as relevant).
We also do not seek to derive new inequalities, as~existing tools
already allow us to demonstrate contextual sets efficiently.
Lastly, we do not aim to design BB84-like cryptographic contextual
protocols because they cannot provide a quantum advantage over the
noncontextual protocols; an~eavesdropper can easily ignore
conditions {\em(i)} and {\em(ii)} of Definition \ref{def:n-b}
and mimic quantum measurement~outcomes.

Instead, we are interested in the structure and properties of
contextual sets and their generation. As for the generation of
the sets we focus on several methods in dimensions up to eight
and on the methods whose complexity does not scale with dimension
to obtain sets in higher dimensions (here up to 32) by using sets
generated in lower dimensions by the previous methods. While
carrying out such a unification of methods, we also discuss the
realistic implementation of the sets themselves and their
applications, as well as the tools used to manipulate~them.
  
\subsection{\label{subsec:ks-non-ks}Kochen--Specker
  vs.~Non-Kochen--Specker MMPHs}

When considering implementations or applications of MMPHs,
we primarily focus on a set of quantum states represented by
vectors in $n$-dim Hilbert space, organized into $m$-tuples
(where $m\le n$) of mutually orthogonal vectors, with~at least
one hyperedge containing $m=n$ vertices. We impose the latter
requirement because we believe that it is plausible to assume
that the considered vertices (vectors, states, operators) fully
reside in the $n$-dim space within at least one of the
hyperedges, although~there are several exceptions to this
requirement in the literature, the~most notable one being the
2-dim pentagon in the 3-dim space
(\cite{pavicic-quantum-23}, Figure~5,~\cite{klyachko-08}). For~the
same reason, we will not try to equip $n$-dim MMPHs without
a coordinatization with a coordinatization from a higher-dimensional space, although~examples of such an embedding
do exist in the literature too
(\cite{pavicic-quantum-23}, Figure~4,~\cite{cabello-13a}).  

\begin{definition} An $n$-dim {\rm NBMMPH} with a
  coordinatization, in~which each hyperedge contains m = n
  vertices, is a Kochen--Specker {\rm (KS) MMPH}.
\label{def:KSdef}
\end{definition}

\begin{definition} An $n$-dim {\rm NBMMPH}  with a
  coordinatization, in~which at least one hyperedge contains
  $m<n$ vertices and at least one hyperedge contains
  m = n vertices, is a {\rm non-KS MMPH}.
\label{def:nonKSdef}
\end{definition}

Both KS and non-KS MMPHs are NBMMPHs and are therefore~contextual.

\subsection{\label{subsec:nonks}Generation of NBMMPHs}

To generate NBMMPHs, we make use of the following~methods.

\begin{itemize} 
\item{\bf M1}: Combines simple vector components to exhaust all
  possible collections of $n$ mutually orthogonal $n$-dim vectors.
  These vectors form {\em master} MMPHs, which consist of single
  or multiple MMPHs of varying sizes. Master MMPHs may or may not
  be NBMMPHs
~\cite{pm-entropy18,pavicic-pra-22}.
\item{\bf M2}: Automated dropping of hyperedges contained in
  masters found by {\bf M1, M6--8} or~by some other method in
  the literature; they serve to generate {\em classes} of
  smaller MMPHs massively
~\cite{pm-entropy18,pavicic-pra-22}.
\item{\bf M3}: Automated dropping of vertices contained in single
  hyperedges (multiplicity $m=1$) of either NBMMPHs or BMMPHs and
  the possible subsequent stripping of their hyperedges
~\cite{pavicic-quantum-23}. The~obtained smaller MMPHs are often
  NBMMPH, although~never KS.  
\item{\bf M4}: Automated random addition of hyperedges to MMPHs
  to obtain larger ones, which then generates smaller KS MMPHs
  through the random removal of hyperedges again.
\item{\bf M5}: Deleting vertices in either an NBMMPH
  or a BMMPH until a non-KS NBMMPH is reached, if~any.
\item{\bf M6}: {Downward generation of NBMMPHs from fortuitously
  or intuitively found connections of KS MMPHs with polytopes,
  Pauli operators, or~space \mbox{symmetries~\cite{pavicic-pra-17,pwma-19,pavicic-quantum-23}.}}
\item{\bf M7}: Generation of KS MMPHs in higher dimensions from
  the ones in smaller dimensions~\cite{zimba-penrose,cabell-garc96,cabell-est-05,matsuno-07}.
\item{\bf M8}: Generation of KS MMPHs in higher dimensions by
  dimensional upscaling, whose complexity does not scale with
  the dimension~\cite{pw-23a}.
\end{itemize}

We combine all of these methods to obtain an arbitrary number of
NBMMPHs in an arbitrary dimension and to generate KS and non-KS
MMPHs in dimensions up to 32 below.

To familiarize ourselves with non-KS MMPHs, in~Figure~\ref{fig:non-KS-fig}, we visualize the procedures of
obtaining them from two types of KS MMPHs and (noncontextual)
BMMPHs by means of our~methods. 

\begin{figure}[H]
    \includegraphics[width=\linewidth]{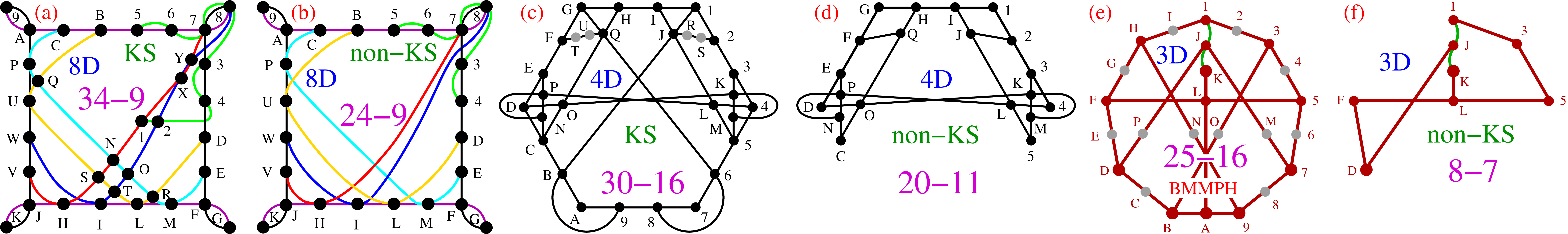}
\caption{Obtaining 
 non-KS MMPHs via
    three different methods (colored lines represent hyperedges):
    (\textbf{a}) 8-dim KS MMPH obtained via  {\bf M1} or
    {\bf M6}; (\textbf{b}) 8-dim
    non-KS MMPH $\overline{\rm subhypergraph}$\break
    of (a) obtained by {\bf M5} (deleting vertices inside
    the rectangle loop);
    (\textbf{c}) 4-dim KS MMPH obtained via  {\bf M1} or {\bf M6};
    (\textbf{d}) 4-dim non-KS MMPH $\overline{\rm subhypergraph}$
    obtained from (c) by {\bf M2} and {\bf M3} (successive
    deletions of $m=1$ (grey) vertices and hyperedges so that, at
    each step, the resulting MMPH is a non-KS MMPH---until
    we reach the smallest critical non-KS without $m=1$ vertices);
    (\textbf{e}) 3-dim (noncontextual) BMMPH subhypergraph obtained
    by {\bf M1} and {\bf M2};
    (\textbf{f}) 3-dim non-KS MMPH $\overline{\rm subhypergraph}$
    obtained from (e) by {\bf M2} and {\bf M3} (successive
    deletions of vertices and hyperedges so that, at each
    step, the resulting MMPH is a non-KS MMPH---until we
    reach a critical non-KS without $m=1$ vertices);
    strings and coordinatizations of
    (\textbf{a},\textbf{c},\textbf{e}) are given in
    the Appendix 
    \ref{app:0} since they were not given elsewhere; strings and
    coordinatizations of (\textbf{b},\textbf{d},\textbf{f}) can
    be derived from those of (\textbf{a},\textbf{c},\textbf{e}).}
\label{fig:non-KS-fig}
\end{figure}

The  8-dim KS MMPH shown in Figure~\ref{fig:non-KS-fig}(a) is 
obtained via {\bf M6} from the 120--2025 master but can also
be obtained from the 3280--1361376 master directly generated
from the $\{0,\pm 1\}$ vector components, i.e.,~via {\bf M1}
\cite{pm-entropy18}. The~ 4-dim KS MMPH, in 
Figure~\ref{fig:non-KS-fig}(c), is obtained via {\bf M6} from
the 60--105 Pauli operator master~\cite{pavicic-pra-17}, but~can also be obtained from the 156--249 master directly
generated from the $\{0,\pm 1,\pm i\}$ vector components,
i.e., via {\bf M1} \cite{pm-entropy18}. The~3-dim
noncontextual BMMPH, in Figure~\ref{fig:non-KS-fig}e, is obtained
via {\bf M6} from the Peres master, but~can also be obtained
from the 81--52 master directly generated
from the $\{0,\pm 1,\sqrt{2},\pm 3\}$ vector components,
i.e., via {\bf M1} \cite{pavicic-pra-22}. 

One can easily verify by hand that the MMPHs shown in 
Figure~\ref{fig:non-KS-fig}(b,d,f) violate the conditions
{\em (i)} and {\em (ii)} in Definition~\ref{def:n-b},
i.e.,~that they are NBMMPHs and that they, by~having
coordinatizations, are, by~Definition~\ref{def:nonKSdef},
non-KS~MMPHs. 

\subsection{\label{subsec:ineq}NBMMPHs
  vs.~Operators and States---The Inequalities}

In the literature, contextual sets, mostly KS ones, have often
been formulated by means of operators (mostly projectors) and
states, especially together with a proposed~implementation.

Recently, it was shown that ``a proof of the Kochen–Specker
theorem can always be converted to a state-independent [operator]
noncontextuality inequality'' \cite{yu-tong-14,yu-tong-15}. 
Proofs of the KS theorem are KS sets, and the result actually
means that, for any KS set with a coordinatization, i.e.,~for any
KS MMPH with a coordinatization, we can find quantum operators
whose particular expressions would give the same result regardless of the states that they are applied to. More precisely,
one can form expectation values that correspond
to the measurement values of these expressions and
differ from the expectation values of the assumed classical
counterparts. The~inequality between these two expectation
values, quantum and classical, is called a noncontextuality
inequality.

Obviously, this~result does not apply to KS sets without a
coordinatization, and the question arises of whether we can
form similar discriminators, i.e.,~some inequalities for
hypergraphs without reference to coordinatization, even
when the hypergraphs do possess it. To~arrive at an answer
to this question, we have to introduce a few definitions.
But before doing so, we give an example of a
state-independent operator setup of a KS set
and its noncontextuality inequality.

In ref.~(\cite{cabello-08}, Equation~(2)), the following 4-dim
operators are defined
\begin{eqnarray}
  A_{ij}=2|v_{ij}\rangle\langle v_{ij}|-I
\label{eq:a-cab}
\end{eqnarray}
through the vector coordinatization of the 4-dim KS 18-9 MMPH
shown in (\cite{cabello-08}, Figure~1), e.g.,
\begin{eqnarray}
  |v_{12}\rangle=(1,0,0,0); \ \
|v_{16}\rangle=\frac{1}{\sqrt{2}}(0,0,1,-1); \ \
\ \ \dots\ \
|v_{58}\rangle=\frac{1}{\sqrt{2}}(1,0,-1,0); \ \ \dots 
\label{eq:v-cab}
\end{eqnarray}
The reader can find all vectors in (\cite{cabello-08}, Figure~1).
We do not show all vectors here because we only need one
of them for our Equation~(\ref{eq:cab-non-igen}) below.
In ref.~(\cite{cabello-08}, Equation~(2)), it is then claimed that
\begin{eqnarray}
  A_{ij}A_{ik}A_{il}A_{im}=-I,
\label{eq:cab-s-i}
\end{eqnarray}
where $i=1,\dots,9$ ranges over 9 hyperedges of the 18-9 KS MMPH,
and $j,k,l,m$ ranges over the vertices within each of the hyperedges,
which yields (full $\Omega$ is given in (\cite{cabello-08}, Equation~(1)))
\begin{eqnarray}
  -\langle A_{12} A_{16}A_{17} A_{18}\rangle-\dots-\langle A_{29}
  A_{39} A_{59} A_{69}\rangle = \langle\Omega\rangle,
\label{eq:a-cab2}
\end{eqnarray}
would take us to a ``state-independent''  noncontextuality
inequality. Specifically, it is argued that a quantum
interpretation of $A_{ij}$ that relies on ``measurements of
subsets of compatible observables on different subensembles
prepared in the same state'' yields $\langle\Omega\rangle=9$,
while a  classical interpretation of $A_{ij}$ (which assigns
$-$1 or 1 to them) yields a $\langle\Omega\rangle\le 7$
inequality. The~problem with this argumentation is that each $A_{ij}$ itself is not ``state-independent'', e.g.,
\begin{eqnarray}
  A_{12}|v_{58}\rangle= A_{12}\frac{1}{\sqrt{2}}(1,0,-1,0)=
  -\frac{1}{\sqrt{2}}(0,0,1,0),
\label{eq:cab-non-igen}
\end{eqnarray}
i.e., $|v_{58}\rangle$ is not an eigenvector of $A_{12}$.
This is important because we must have correspondence
between the quantum $A_{ij}$ and classical $A_{ij}$---they both
have to be measurable and the measurement outcome should be
$\pm 1$ for both. This is the essence of the KS~theorem.

A similar problem exists with most of the proposed
state-independent operator assessments of contextual sets. They
all involve products of operators that assume the passing of the
states through several devices and therefore prevent them from
being measured at each of them separately~\cite{yu-tong-14,yu-oh-12,zu-wang-duan-12,qu-24,pavicic-quantum-23},
while they are expected to yield a noncontextuality inequality.
However, contextual sets actually do not require operator
representation, as~there is always an NBMMPH formulation
for any of the sets and this formulation only requires measurements
of bare states for any implementation and an algorithmic check
of the data for contextuality~verification.

Another approach to contextuality is provided by the so-called
true-implies-false (TIF) and true-implies-true (TIT) sets
~\cite{cabello-svozil-18,svozil-21}, some of whose diagrams are
presented in Figure~\ref{fig:tif}. For~instance, in~TIF diagrams,
if we assign value 1 (TRUE) to {\tt A} and proceed to assign
values so as to satisfy conditions {\em(i)} and {\em(ii)} from
Definition~\ref{def:n-b} along all lines simultaneously, we
find that {\tt B} must be assigned value 0 (FALSE). 

One can verify that all TIF and TIT diagrams
from~\cite{cabello-svozil-18,svozil-21} are non-KS~NBMMPHs.

\begin{figure}[H]
    \includegraphics[width=\textwidth]{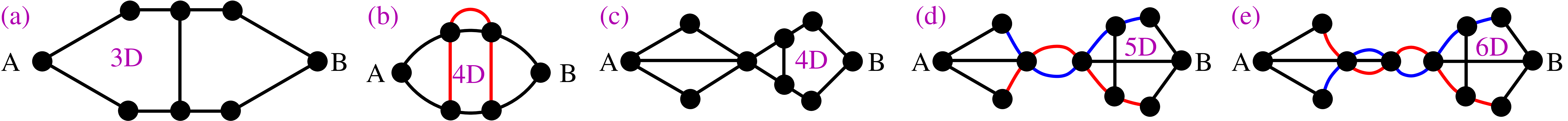}
  \caption{TIF diagrams according to
    figures~from~\cite{cabello-svozil-18} (except (\textbf{b}))
    (colored lines represent hyperedges):
    (\textbf{a}) (\cite{cabello-svozil-18}, Figure~1a),
    cf.~Figure~\ref{fig:bug};
    (\textbf{b}) 4-dim NBMMPH;
    (\textbf{c}) (\cite{cabello-svozil-18}, Figure~4a);
    (\textbf{d}) (\cite{cabello-svozil-18}, Figure~5a);
    (\textbf{e}) (\cite{cabello-svozil-18}, Figure~7a);
    (\textbf{a},\textbf{c},\textbf{d},\textbf{e}) are non-KS MMPHs;
    (\textbf{b}) would be a KS MMPH
    if it had a coordinatization, but~it does not, so it
    is simply a NBMMPH.}
\label{fig:tif}
\end{figure}

\medskip 
To obtain a better insight into the properties of MMPHs,
let us proceed.

\begin{definition} The
{\bf MMPH Quantum Hypergraph Index} $HI_q$ is the sum of the weighted
probabilities of all vertices of an $n$-dim $k$-$l$ {\rm MMPH}
measured repeatedly in all hyperedges that they belong to,  whenever
their multiplicity is $>1$.
\label{def:qh-i}
\end{definition}

\begin{lemma}{\bf Vertex-Hyperedge Lemma.}
  For any $n$-dim $k$-$l$ {\rm MMPH} in which each hyperedge
  contains $n$ vertices, the following holds:
\begin{eqnarray}
HI_q=\sum^k_{i=1}\frac{m(i)}{n}=l.
\label{eq:theorem}
\end{eqnarray}
 In general, for~any $n$-dim $k$-$l$ {\rm MMPH} with
 $\kappa(j)$ considered vertices in the $j$-th hyperedge,
 $j=1,\dots,l$,  the~following holds:
\begin{eqnarray}
  HI_q=\sum^l_{j=1}\sum^{\kappa(j)}_{\lambda=1}p(j,\lambda)=l,
\label{eq:theorem-g}
\end{eqnarray}
where $\kappa(j)$ is the number of vertices in a hyperedge $j$
and $p(j,\lambda)=\frac{1}{\kappa(j)}$ is the probability that
a state of a system corresponding to one of the vertices would
be detected when the hyperedge $j$ is being measured.
\label{th:theorem}
\end{lemma}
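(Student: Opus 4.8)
The plan is to recognize both displayed identities as two ways of organizing one and the same sum over the vertex--hyperedge incidence structure, combined with the normalization of detection probabilities within each hyperedge. I would begin with the general identity~(\ref{eq:theorem-g}), since it is the more transparent of the two. Fixing a hyperedge $j$ with $\kappa(j)$ vertices and inserting $p(j,\lambda)=\frac{1}{\kappa(j)}$, the inner sum collapses to $\sum_{\lambda=1}^{\kappa(j)}\frac{1}{\kappa(j)}=1$; this is nothing but the statement that the probabilities of the mutually exclusive outcomes associated with the $\kappa(j)$ orthogonal states in hyperedge $j$ form a complete distribution. Summing the constant $1$ over the $l$ hyperedges then immediately yields $HI_q=l$.

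For the first identity~(\ref{eq:theorem}), I would specialize to $\kappa(j)=n$ for every $j$ and reorganize the same double sum by vertices rather than by hyperedges. Reading $HI_q$ through Definition~\ref{def:qh-i}, each vertex $i$ is measured once in every hyperedge to which it belongs, contributing the detection probability $\frac{1}{n}$ in each; since vertex $i$ lies in exactly $m(i)$ hyperedges by Definition~\ref{df:multi}, its total contribution is $\frac{m(i)}{n}$, and summing over all $k$ vertices gives $HI_q=\sum_{i=1}^{k}\frac{m(i)}{n}$. The remaining equality $\sum_{i=1}^{k}\frac{m(i)}{n}=l$ is then a standard double-counting (handshaking) argument: counting incident vertex--hyperedge pairs by vertices gives $\sum_{i=1}^{k}m(i)$, whereas counting them by hyperedges gives $\sum_{j=1}^{l}\kappa(j)=nl$ when every hyperedge holds exactly $n$ vertices; equating the two totals and dividing by $n$ closes the argument.

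The only point needing care is the passage between the vertex-indexed form $\sum_i \frac{m(i)}{n}$ and the hyperedge-indexed form $\sum_j\sum_\lambda p(j,\lambda)$: both are the same sum over all incident vertex--hyperedge pairs, weighted by $\frac{1}{\kappa(j)}$, so the interchange of summation order is legitimate because the index set is finite. I expect no genuine obstacle here; the result is essentially the incidence double-counting lemma dressed in the probabilistic language of Definition~\ref{def:qh-i}, and the main task is simply to make explicit that the weight $p(j,\lambda)=\frac{1}{\kappa(j)}$ is precisely what ties the combinatorial count to the measurement interpretation.
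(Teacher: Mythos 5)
Your proposal is correct and follows essentially the same route as the paper: the paper also disposes of Equation~(\ref{eq:theorem-g}) by noting that $\sum_{\lambda=1}^{\kappa(j)}p(j,\lambda)=1$ for each $j$, and it establishes Equation~(\ref{eq:theorem}) by identifying it with the generalized Handshake Lemma for hypergraphs, which is exactly the incidence double-counting you carry out. The only difference is that the paper cites that lemma (\cite{melnikov-98}, with the proof deferred to \cite{pavicic-quantum-23}), whereas you prove it inline, making your argument self-contained.
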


\begin{proof} Equation~(\ref{eq:theorem}) is equivalent to a
  generalized Handshake Lemma for Hypergraphs 
  (\cite{melnikov-98}, Exercise 11.1.3.a). The~proof is given
  in~\cite{pavicic-quantum-23}.

  To prove Equation~(\ref{eq:theorem-g}), we simply note that
  $\sum^{\kappa(j)}_{\lambda=1}p(j,\lambda)=1$ for any $j$.
\end{proof}

\begin{definition} \label{def:iin} The {\bf v-inequality.}
  An {\rm MMPH} vertex inequality or simply {v-inequality}
  is defined as\vspace{-3pt}
\begin{eqnarray}
HI_{cm}\le HI_{cM}\le HI^m_{cM}<HI_q=l.
\label{eq:i-ineq}
\end{eqnarray}
\end{definition}

\begin{lemma}
  All $n$-dim {\rm NBMMPH}s satisfy the v-inequality.
\label{lemma:v}
\end{lemma}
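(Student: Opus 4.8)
The plan is to establish the chain in Eq.~(\ref{eq:i-ineq}) link by link, reserving the contextuality hypothesis for the last, strict inequality, which is the only place it is actually needed. The leftmost link, $HI_{cm}\le HI_{cM}$, is immediate from Definition~\ref{def:ch-i}: both numbers count the $1$s of admissible assignments (those obeying conditions \emph{(i)} and \emph{(ii)} of Definition~\ref{def:n-b} as far as they can be obeyed), and they are by definition the minimum and the maximum of that count over one and the same family of assignments, so the minimum cannot exceed the maximum.

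For the middle link, $HI_{cM}\le HI^m_{cM}$, I would fix, via Lemma~\ref{lemma:alpha}, a maximizing independent set $S^\ast$ with $|S^\ast|=HI_{cM}=\alpha(\mathcal H)$. Because every vertex multiplicity obeys $m(i)\ge 1$ (Definition~\ref{df:multi}), evaluating the multiplexed sum on this very set gives $\sum_{i\in S^\ast}m(i)\ge|S^\ast|=HI_{cM}$; since $S^\ast$ is itself an admissible set for Definition~\ref{def:ch-mi}, passing to the maximum yields $HI^m_{cM}\ge HI_{cM}$.

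The heart of the argument, and the step I expect to be the main obstacle, is the strict inequality $HI^m_{cM}<HI_q=l$. I would first record $HI_q=l$ from the Vertex--Hyperedge Lemma~\ref{th:theorem}. The structural observation that unlocks everything is that condition \emph{(i)} forbids two $1$-carrying vertices from sharing a hyperedge, so any hyperedge that meets an admissible set $S$ meets it in exactly one vertex; counting vertex--hyperedge incidences in two ways then gives
\begin{equation}
HI^m_c=\sum_{i\in S}m(i)=\bigl|\{\,j:E_j\cap S\neq\emptyset\,\}\bigr|,
\label{eq:mcov}
\end{equation}
i.e.\ $HI^m_c$ is precisely the number of hyperedges \emph{covered} by $S$, which is at most $l$. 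It remains to exclude equality, and this is exactly where the NBMMPH hypothesis enters: if some admissible $S$ covered all $l$ hyperedges, then assigning $1$ to the vertices of $S$ and $0$ to all others would obey \emph{(i)} (by independence of $S$) and \emph{(ii)} (every hyperedge, being covered, contains a $1$) on the whole hypergraph, i.e.\ it would be a global two-valued assignment --- which Definition~\ref{def:n-b} forbids for an NBMMPH. Hence every admissible $S$ leaves at least one hyperedge uncovered, so $HI^m_c\le l-1$, and taking the maximum over $S$ gives $HI^m_{cM}\le l-1<l=HI_q$. Concatenating the three links proves the \mbox{v-inequality} for every $n$-dim NBMMPH. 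The one point demanding care is that the multiplicities in Eq.~(\ref{eq:mcov}) must be read off the NBMMPH itself (not a filled, and possibly binary, extension of it), since it is only on the contextual set that total coverage is obstructed.
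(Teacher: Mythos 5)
Your proposal is correct and rests on exactly the same key observation as the paper's one-sentence proof: by the defining property of an NBMMPH, the number of hyperedges containing a $1$ (which, via your incidence count, is precisely $HI^m_c$) must be strictly smaller than $l$. You simply spell out in full the two trivial links and the double-counting identity that the paper leaves implicit.
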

\begin{proof}
  In an NBMMPH, the maximal number of hyperedges
  that contain `1' must be smaller than the total number of
  hyperedges $l$ by definition.
\end{proof}

\begin{lemma}\label{lemma:ein} The {\bf e-inequality.}
  $l_{cM}$ ($l_{cm}$) satisfies the following hyper{\bf e}(d)ge
  inequality or simply {\bf e$_{Max}$-inequality}
({\bf e$_{min}$-inequality}):
\begin{eqnarray}
  l_{cM}<l \qquad  (l_{cm}<l).
\label{eq:e-ineq}
\end{eqnarray}
\end{lemma}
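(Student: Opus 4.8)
The plan is to reduce the e-inequality directly to the defining property of an NBMMPH, in close parallel to the argument already used for the v-inequality in Lemma~\ref{lemma:v}. First I would pin down the reading of Definition~\ref{def:lcMm}: a vertex ``takes part in building $HI_c$'' precisely when it is assigned the value $1$, so $l_{cM}$ counts exactly those hyperedges that contain at least one $1$ in a $1$-distribution realizing $HI_{cM}$. I would also record that any such distribution produced by the algorithm \textsc{One} respects condition~(i) of Definition~\ref{def:n-b} globally, i.e.~no hyperedge ever carries more than one $1$; only condition~(ii) may fail, and it fails exactly on the hyperedges that end up all-$0$.

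The core of the argument is a proof by contradiction for the e$_{Max}$-inequality. Suppose $l_{cM}=l$. Then in the $HI_{cM}$-realizing distribution every one of the $l$ hyperedges contains a vertex labelled $1$; combined with condition~(i) this forces each hyperedge to contain exactly one $1$, so conditions (i) and (ii) of Definition~\ref{def:n-b} hold simultaneously across all hyperedges. By Definition~\ref{def:bin} this would make $\cal H$ a BMMPH, contradicting the standing hypothesis that $\cal H$ is an NBMMPH. Hence at least one hyperedge must receive all $0$s, giving $l_{cM}\le l-1<l$. This is simply the statement, invoked in the proof of Lemma~\ref{lemma:v}, that for any NBMMPH the maximal number of hyperedges containing a $1$ lies strictly below $l$.

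For the e$_{min}$-inequality I would observe that the minimizing hyperedge count cannot exceed the maximizing one, $l_{cm}\le l_{cM}$, directly from Definition~\ref{def:lcMm}; chaining this with the bound just obtained yields $l_{cm}\le l_{cM}<l$. (Alternatively, the identical contradiction argument applies verbatim to the $HI_{cm}$-realizing distribution, since it too must leave some hyperedge all-$0$.)

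The only delicate point I anticipate is the bookkeeping in the first step: one must be sure that the $HI_{cM}$-achieving distribution is genuinely (i)-admissible on \emph{every} hyperedge, so that ``exactly one $1$ per covered hyperedge'' is legitimate, and that ``containing vertices that take part in building $HI_{cM}$'' in Definition~\ref{def:lcMm} is read as ``containing a $1$'' rather than as some weaker incidence relation. Once that reading is fixed, the collision with the NBMMPH property is immediate, and no case analysis on the hypergraph structure is required.
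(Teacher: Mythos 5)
Your argument is correct, but it runs in the opposite logical direction from the paper's own proof of this lemma, and it is worth seeing how the two fit together. The paper's proof here is a single sentence in the converse direction: it observes that $l_{cm}=l_{cM}=l$ holds for all \emph{binary} MMPHs by definition (an assignment satisfying conditions (i) and (ii) of Definition~\ref{def:n-b} places exactly one 1 in every hyperedge), and on that basis declares the strict inequalities to be noncontextuality inequalities; the direction you prove---that every NBMMPH satisfies $l_{cM}<l$ and $l_{cm}<l$---is stated and proved separately in the paper as Lemma~\ref{lemma:e}, by appeal to the KS theorem and Definition~\ref{def:n-b}. Your contradiction step (if $l_{cM}=l$, then the realizing distribution has exactly one 1 per hyperedge, so (i) and (ii) hold simultaneously and the hypergraph is a BMMPH by Definition~\ref{def:bin}) is precisely the contrapositive of the paper's one-liner, so both proofs rest on the identical core fact: under condition (i), covering every hyperedge with a 1 is equivalent to binarity. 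What your write-up adds is the bookkeeping the paper leaves implicit---that $HI_{cM}$-realizing distributions respect (i) on every hyperedge, that Definition~\ref{def:lcMm} counts hyperedges containing a 1, and the remark $l_{cm}\le l_{cM}$ which disposes of the e$_{min}$ case at once; what the paper's phrasing buys is the interpretation of the inequality as a contextuality witness (its satisfaction rules out binarity), with the existence half of the claim postponed to Lemma~\ref{lemma:e}.
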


\begin{proof}
They are noncontextuality inequalities simply because
$l_{cm}=l_{cM}=l$ for all binary MMPHs by definition.
\end{proof}

Apparently, $l_{cm}$ is the ``rank of contextuality'' \cite{horod-22},
introduced as a quantifier of contextuality for hypergraphs.
Both e$_{Max}$- and e$_{min}$-inequalities can
simply be called~e-inequalities.

\begin{lemma}
  All $n$-dim non-binary {\rm MMPH}s satisfy the e-inequalities.
  \label{lemma:e}
\end{lemma}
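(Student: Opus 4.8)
The plan is to obtain both e-inequalities directly from the defining property of an NBMMPH, in exactly the spirit of the proof of Lemma~\ref{lemma:v}. Recall that the e$_{Max}$- and e$_{min}$-inequalities assert $l_{cM}<l$ and $l_{cm}<l$, where $l_{cM}$ and $l_{cm}$ are the maximal and minimal numbers of hyperedges that receive a vertex labelled $1$ under an assignment respecting condition~(i) of Definition~\ref{def:n-b}.

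First I would fix any such $1$-assignment (the kind produced by the algorithm, which enforces condition~(i) throughout) and argue that it must leave at least one hyperedge entirely filled with $0$s. If it did not, then every hyperedge would contain a $1$ while no hyperedge contained two $1$s, i.e.~the assignment would satisfy both~(i) and~(ii) simultaneously---contradicting the definition of an NBMMPH. Hence at least one hyperedge contains no $1$ and is therefore excluded from the classical hyperedge count $l_c$ of that assignment (Definition~\ref{def:lcMm}). This bound $l_c\le l-1$ holds for \emph{every} admissible assignment, so maximizing gives $l_{cM}\le l-1<l$, and since $l_{cm}\le l_{cM}$ by definition the companion bound $l_{cm}<l$ follows immediately.

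A sharper and more structural route, which I would mention as a cross-check, is to note the identity $HI^m_{cM}=l_{cM}$. Indeed, for any admissible assignment the quantity $\sum_{v_i=1}m(i)$ counts the incidences between $1$-vertices and the hyperedges containing them (Definition~\ref{df:multi}); condition~(i) permits at most one such incidence per hyperedge, so this weighted multiplicity sum collapses to the plain count of hyperedges holding a $1$. Maximizing both sides yields $HI^m_{cM}=l_{cM}$, whereupon the already-established v-inequality $HI^m_{cM}<HI_q=l$ of Lemma~\ref{lemma:v} reads precisely as $l_{cM}<l$, and again $l_{cm}\le l_{cM}$ closes the argument.

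There is essentially no real obstacle here: the statement is an immediate consequence of non-binarity, just as for Lemma~\ref{lemma:v}. The only point meriting care is the observation that condition~(i) forces at most one $1$ per hyperedge, which is what both collapses the multiplicity sum in the identity and guarantees that a forbidden all-$0$ hyperedge must appear---this single constraint carries the whole proof, and handling $l_{cM}$ and $l_{cm}$ together is then a triviality via $l_{cm}\le l_{cM}$.
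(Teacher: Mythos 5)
Your main argument is correct and is essentially the paper's own proof: non-binarity (Definition~\ref{def:n-b}, conditions \emph{(i)} and \emph{(ii)}) forces any assignment respecting condition \emph{(i)} to leave at least one hyperedge entirely $0$, so $l_{cM}<l$ and hence $l_{cm}\le l_{cM}<l$ --- the paper states exactly this, merely split into the KS case (via the KS theorem) and the non-KS case. Your supplementary identity $HI^m_{cM}=l_{cM}$ is a reasonable cross-check but is strictly narrower than your primary argument, since $HI^m_c$ (Definition~\ref{def:ch-mi}) is only defined for MMPHs all of whose hyperedges contain $n$ vertices.
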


\begin{proof}
  For KS MMPHs, it follows directly from the KS theorem,
  since both the maximal and minimal number of
  hyperedges that contain 1 must be smaller than the total
  number of hyperedges $l$. For~non-KS NBMMPHs, it follows
  from Definition~\ref{def:n-b} and its conditions {\em(i)}
  and {\em(ii)} in the same way. 
    \end{proof}

\begin{definition}{\bf Fractional independence (packing) number
    $\alpha^*({\cal H})$ (LP)} of an {\rm MMPH}\break
  ${\cal H}(k$-$l)$ is the optimum value of the following linear 
  programming problem $LP=LP({\cal H})$

  \smallskip
  \qquad (LP) Maximize $\sum_{v\in V}x(v)$

\medskip
\qquad\phantom{(LP)\ }subject to $\sum_{v\in e}\le 1$,
$\forall e\in E$

\medskip
\qquad\phantom{(LP)subject to\ $\>$}$x(v)\in[0,1]$,
$\forall v\in V$
\label{def:alpha-star-LP}
\end{definition}

\begin{definition}{\bf Fractional independence number}
    $\alpha^*({\cal H})$ is defined as (Wolfram
~\cite{wolfram-alpha-24}):
\begin{eqnarray}\label{eq:alpha-star-Wolfram}
  \alpha^*({\cal H})
    =\max_{\sum_{v\in e}\le 1, \forall e\in E}\sum_{v\in V}x(v);
\quad x(v)\in [0,1].
  \label{def:alpha-star-Wolfram}
\end{eqnarray}
  \end{definition}

If the v- and e-inequalities are satisfied, an~MMPH would be
contextual. If~not, it would not. Thus, the~v- and e-inequalities
are noncontextuality inequalities.

On~the other hand, there is
also the following inequality (the so-called
$\alpha$-inequality):
\begin{eqnarray}\label{eq:alpha-cabelllo}
\alpha({\cal H})\le\alpha^*({\cal H}),   
\end{eqnarray}
where $\alpha({\cal H})$ is defined by Definition~\ref{def:alpha}
and $\alpha^*({\cal H})$ by Definitions~\ref{def:alpha-star-LP} and
\ref{def:alpha-star-Wolfram}, which
is claimed to be a noncontextuality inequality too
(\cite{cabello-severini-winter-14}, Results 1 and 2), although it
should hold for any (hyper)graph (\cite{gro-lovasz-schr-81}, p.~192),
either non-binary or binary, i.e.,~for both NBMMPHs and
BMMPHs. 

Equation~(\ref{eq:alpha-cabelllo}) certainly holds for general
(hyper)graphs~\cite{gro-lovasz-schr-81} (called the GLS
inequality) where $x(v)$ is a free variable, i.e.,~where
$x(v)=w_v p(v)$, where $p(v)$ is the probability of detecting
vertex $v$ and $w_v$ is the weight of this probability
~\cite{cabello-severini-winter-14}. In~the latter reference,
it is stated that finding $\alpha^*$ is NP-hard. This is
correct for the general GLS, but~is it so for quantum
measurements of MMPHs? 

\begin{wrapfigure}{r}{0.25\textwidth}
\vspace{-13pt}
  \begin{center}
    \includegraphics[width=0.2\textwidth]{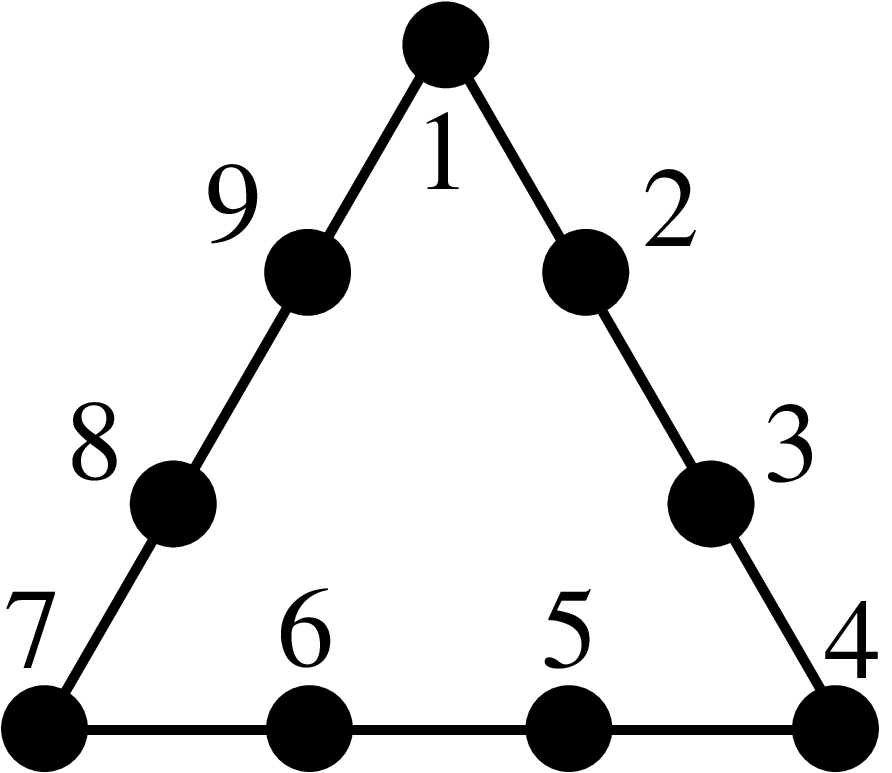}
  \end{center}
\vspace{-15pt}
  \caption{BMMPH 9-3.}
\label{fig:alpha}
\end{wrapfigure}

Let us examine MMPH 9-3 given in Figure~\ref{fig:alpha}.
For a free $x$, we have

{LP[\{$-$1,$-$1,$-$1,$-$1,$-$1,$-$1,$-$1,$-$1,$-$1\},\{\{1,1,1,1,0,0,0,0,0\},\linebreak \{0,0,0,1,1,1,1,0,0\},\{1,0,0,0,0,0,1,1,1\}\},\{\{1,$-$1\},\{1,$-$1\},\{1,$-$1\}\}].}

(LP is carried in {\em Mathematica} (LinearProgramming[]).
Vectors/vertices are given as follows: $\{1,1,1,1,0,0,0,0,0\}$
stands for the first hyperedge, {\tt 1234}, $\{0,0,0,1,1,1,1,0,0\}$
for the second one, {\tt 4567}, and~$\{1,0,0,0,0,0,1,1,1\}$ for the
third one, {\tt 7891}).

Out: =\{0,1,0,0,1,0,0,1,0\}, i.e, $\alpha^*=3$.

Since $\alpha=3$, inequality (\ref{eq:alpha-cabelllo}) is~satisfied.

However, for~$x=p=\frac{1}{4}$, i.e., for quantum measurements,
we~obtain

{LP[\{$-$1,$-$1,$-$1,$-$1,$-$1,$-$1,$-$1,$-$1,$-$1\},\{\{1,1,1,1,0,0,0,0,0\},\{0,0,0,1,1,1,1,0,0\},\{1,0,0,0,0,0,1,1,1\}\},\linebreak\{\{1,$-$1\},\{1,$-$1\},\{1,$-$1\}\}, \{\{$\frac{1}{4}$,1\},\{$\frac{1}{4}$,1\},\{$\frac{1}{4}$,1\},\{$\frac{1}{4}$,1\},\{$\frac{1}{4}$,1\},\{$\frac{1}{4}$,1\},\{$\frac{1}{4}$,1\},\{$\frac{1}{4}$,1\},\{$\frac{1}{4}$,1\}\!\}\!].}

Out: =\{$\frac{1}{4}$,$\frac{1}{4}$,$\frac{1}{4}$,$\frac{1}{4}$,$\frac{1}{4}$,$\frac{1}{4}$,$\frac{1}{4}$,$\frac{1}{4}$,$\frac{1}{4}$\},

\noindent i.e., $\alpha^*=\frac{9}{4}=2.25$, which violates
inequality (\ref{eq:alpha-cabelllo}).

See also the violations presented in the figure in
Section~\ref{subsubsec:3D}.

In fact, such a regular/random distribution of, say, $n$-dim
spin-$\frac{n-1}{2}$ MMPH systems exiting through the ports
(vertices) of each of their gates/hyperedges (e.g., in a
Stern--Gerlach device) is a well-known distribution of the spin
projections of states of such systems. For~systems composed of
of several subsystems (e.g., several qubits, qutrits, and~so on),
a description is somewhat more complicated, but,~in general,
we can say that the systems satisfy the following~postulate. 

\begin{resultttt}\label{postulate} Quantum 
  systems generated in an unknown (unprepared) pure state in an
  apparatus (e.g., a~generalized Stern--Gerlach one), when exiting
  from it through one of the out-ports (channels, vertices) of
  their gates/hyperedges, have an equal probability of being detected
  on their exit (\cite{feynmanIII}, Section~5-1).
\end{resultttt}

Hence, Equation~(\ref{eq:alpha-cabelllo}) does not generally apply to
quantum measurements, and the calculation of $\alpha^*$ is not of NP
but of linear complexity. In fact, Equation~(\ref{eq:alpha-cabelllo})
fails for many arbitrary quantum measurements, as shown~below.

Measurements of a $k$-$l$ set are carried out on 
hyperedges---hyperedge by hyperedge---and each hyperedge yields
a single detection (click) corresponding to one of $n$ vertices
(vectors, states) contained in the hyperedges with a probability
of $\frac{1}{n}$. This means that, for MMPHs whose hyperedges
all contain $n$ vertices, one builds the following statistics.

\begin{definition}\label{def:raw}
  {\bf Raw data statistics} for {\rm MMPH}s whose all hyperedges
  contain $n$ vertices (often adopted in the literature, e.g.,
  {\rm(\cite{d-ambrosio-cabello-13}}, Eq.~(2),
  {\rm\cite{yu-oh-12}}, lines under Eq.~(2)), etc.{\rm)},
  consist of assigning $\frac{1}{n}$ probability to each of
  the $k$ vertices contained in the hypergraph (see
  Definition~\ref{def:n-b}), independently of whether the vertices
  appear in just one hyperedge or in two or more of them. 
\end{definition}

Such statistics do not appear to be satisfactory, though. Consider
again 9-3 in Figure~\ref{fig:alpha}. The~raw data statistics give
us the sum of probabilities $\frac{9}{4}$. However,~the vertices
{\tt 1}, {\tt 4}, and~{\tt 7} share two hyperedges and have
multiplicity $m=2$. Thus, we actually have the sum of such
calibrated probabilities being equal to
$\frac{6}{4}+\frac{2\times 3}{4}=3$, i.e.,~we have a calibrated
$\alpha^*=3$, which satisfies the inequality
(\ref{eq:alpha-cabelllo}). We denote such a calibrated
$\alpha^*$ as $\alpha^*_p$ and introduce the following more
appropriate statistics:

\begin{definition}\label{def:post}
{\bf Postprocessed MMPH data statistics} are statistics
  for~which 
\begin{enumerate}
\item vertex \textquoteleft$v$\textquoteright\ might share $m(v)$
  hyperedges;
\item measurements are performed on $n$ vertices
  $v(j)$ contained in hyperedges \textquoteleft$j$\textquoteright,
  $j=1,\dots,l$;
\item the outcomes of measurements carried out on particular
  vertices $v(j)$ in a particular hyperedge $j$ might be dropped
  out of consideration, leaving us with $\kappa(j)$ vertices in
  the hyperedge $j$;
\item the probability of obtaining measurement data for each vertex
  within a hyperedge, after~discarding the data for $n-\kappa(j)$
  dropped vertices, is  $\frac{1}{\kappa(j)}$;
\item the sum of all probabilities is, according to
  Equation~{\rm(\ref{eq:theorem-g})}, equal to the size of the
  hypergraph, i.e.,~to the number of its hyperedges $l$.    
\end{enumerate}
\end{definition}

Such statistics yield the following~theorem. 

\begin{Xeorem}Let variables $x(v)$ from
  Definition~{\rm\ref{def:alpha-star-LP}}
  be the probabilities $p(v)$, $v\in V$ of detecting an event by
  {\rm YES-NO} measurements at one of the out-ports (vertices)
  contained within a hyperedge of an $n$-dim {\rm MMPH}
  ${\cal H}(V,E)={\cal H}(k$-$l)$. Each of $E_j\in E$,
  $j=1,\dots,l$ hyperedges (gates) contains $n$ vertices. The~  following holds:
\begin{eqnarray} \sum_{v\in {E_j}}p(v)\le 1, j=1,\dots,l.
    \label{eq:sum-edge}
  \end{eqnarray}
  They also satisfy the following:

  (a) Under the {raw data statistics} (Definition~{\rm\ref{def:raw}})
  assumption
  {\rm~\cite{cabello-severini-winter-14,magic-14,amaral-cunha-18}},
  the sum of all prob-\break abilities is
\begin{eqnarray} \sum_{v=1}^kp(v)=\frac{k}{n}
   =\alpha^*(k{\text{-}}l)=\alpha^*({\cal H}).
  \label{eq:alpha-star}
\end{eqnarray}

(b) Under the {postprocessed data statistics}
  (Definition~{\rm\ref{def:post}}) assumption, i.e.,~under the
  assumption that every vertex $v$
  within an {\rm MMPH} has $\frac{m(v)}{n}$ probability of being
  detected, the~sum of all probabilities, according to the
  vertex-hyperedge Lemma \ref{th:theorem},  is
\begin{eqnarray} \sum_{v=1}^kp(v)
    =\sum_{v=1}^k\frac{m(v)}{n}
    =l=\alpha_p^*(k{\text{-}}l)=\alpha_p^*({\cal H})
  \label{eq:alpha-star-b}
\end{eqnarray}
where {\boldmath{$\alpha^*_p$}} is called the {\em postprocessed
  quantum fractional independence number}.
\end{Xeorem}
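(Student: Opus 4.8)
The plan is to treat the three assertions separately: the hyperedge inequality follows from single-gate measurement statistics, while parts (a) and (b) are evaluations of the total detection weight under two counting conventions, with (b) reducing to the already-proved Vertex-Hyperedge Lemma. For the inequality, each hyperedge $E_j$ is one gate whose $n$ vertices are its out-ports, and a single YES-NO measurement on $E_j$ registers a click at exactly one port; the detection events for the vertices of $E_j$ are therefore mutually exclusive, so additivity of probability gives $\sum_{v\in E_j}p(v)\le 1$. With the local detection probability $p(v)=1/\kappa(j)=1/n$ (each hyperedge being full) the sum equals exactly $1$, so the assignment is feasible for the linear program of Definition~\ref{def:alpha-star-LP}.

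For part (a) I would invoke the raw data convention of Definition~\ref{def:raw}, which assigns the probability $1/n$ to each of the $k$ vertices regardless of how many hyperedges it lies in. Summing this uniform weight over all vertices gives $\sum_{v=1}^k p(v)=k/n$. Since $x(v)=1/n$ was just shown to be LP-feasible, the LP objective $\sum_{v\in V}x(v)$ takes the value $k/n$ there, and this is precisely the quantity labeled $\alpha^*({\cal H})$, yielding the chain $\sum_{v=1}^kp(v)=k/n=\alpha^*(k\text{-}l)=\alpha^*({\cal H})$.

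For part (b) I would use the postprocessed convention of Definition~\ref{def:post}, under which a vertex $v$ is counted once for each of the $m(v)$ hyperedges it belongs to, each incidence contributing the local probability $1/n$, so the total weight carried by $v$ is $m(v)/n$. Summing over all vertices gives $\sum_{v=1}^k p(v)=\sum_{v=1}^k m(v)/n$, and the Vertex-Hyperedge Lemma~\ref{th:theorem} (Equation~(\ref{eq:theorem})) collapses this to $l$. Labeling this value $\alpha^*_p({\cal H})$ completes the chain $\sum_{v=1}^kp(v)=\sum m(v)/n=l=\alpha^*_p(k\text{-}l)=\alpha^*_p({\cal H})$.

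The genuine subtlety---not any calculation---is to keep the two senses of ``probability'' apart: the local per-hyperedge weight $1/n$ that enters the constraint and is summed once per incidence, versus the aggregated weight $m(v)/n$ attached to a vertex across all its hyperedges. One must also stress that the $\alpha^*$ and $\alpha^*_p$ here are the LP objective evaluated at the \emph{fixed} quantum assignment $x=p$, not the free LP \emph{optimum} of Definitions~\ref{def:alpha-star-LP} and~\ref{def:alpha-star-Wolfram}; for the 9-3 example these differ ($9/4$ versus the free optimum $3$), which is exactly the phenomenon this theorem exposes. The only real verification is therefore the feasibility of $x=p$, supplied by the hyperedge inequality proved first.
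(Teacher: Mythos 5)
Your proposal is correct and follows essentially the same route the paper takes: the paper presents this theorem as an immediate consequence of the two statistics conventions (Definitions~\ref{def:raw} and \ref{def:post}) together with the Vertex-Hyperedge Lemma~\ref{th:theorem}, which is exactly the structure of your argument --- mutual exclusivity of clicks within a gate gives Equation~(\ref{eq:sum-edge}), summing the uniform weight $1/n$ over $k$ vertices gives part (a), and summing per incidence and invoking the lemma gives $l$ in part (b). Your closing remark that $\alpha^*$ and $\alpha^*_p$ here are the LP objective evaluated at the fixed quantum assignment rather than the free optimum is also the correct reading, and it matches the paper's own 9-3 computation ($9/4$ under the quantum constraint versus $3$ for free $x$).
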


\begin{wrapfigure}{r}{0.45\textwidth}
  \begin{center}
\vspace{-18pt}
    \includegraphics[width=0.25\textwidth]{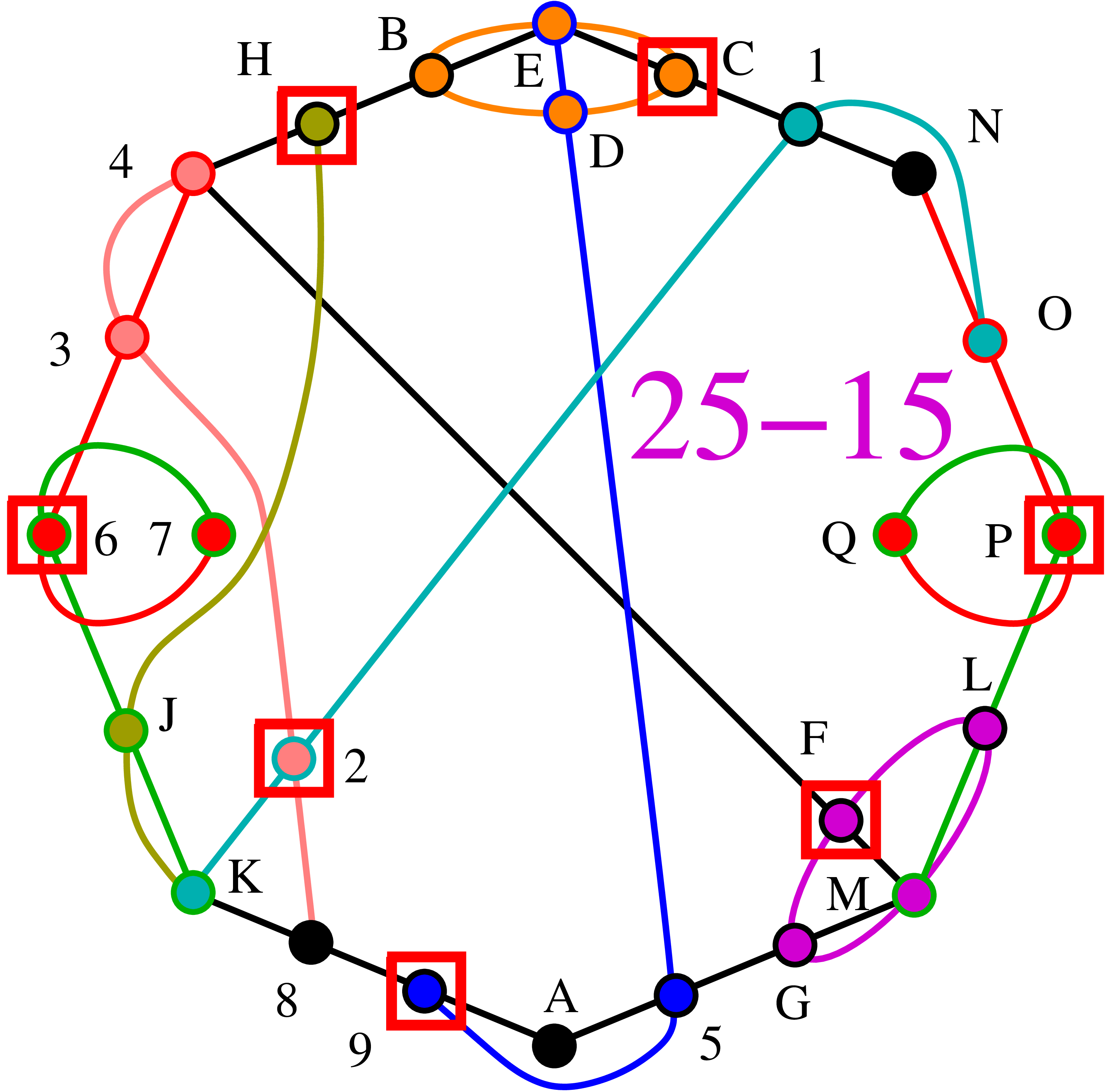}  
  \end{center}
\vspace{-15pt}
\caption{4-dim non-KS NBMMPH 25-15, a
  $\overline{\rm subhypergraph}$ of 26-15
  (\cite{pavicic-quantum-23}, Figure~6b);
  $\alpha=7>\alpha^*=\frac{25}{4}=6.25$; vertices
  contributing to $\alpha$ are red-squared; string
  and coordinatization are given in Appendix \ref{app:2}.}
\label{fig:alpha-star}
\vspace{-15pt}
\end{wrapfigure}

Statement (a) implies that the
{\boldmath{$\alpha$}}{\bf-inequality}
\begin{eqnarray} 
 HI_{cM}=\alpha({\cal H})\le \alpha^*({\cal H})=\frac{k}{n}
    \label{eq:alpha-alpha}
\end{eqnarray}
does not always hold for quantum mechanical measurements whose
probabilities of detection 
  within each hyperedge satisfy the
  condition given by Equation~(\ref{eq:sum-edge}), i.e.,~under
  the Quantum Indeterminacy Postulate \ref{postulate}, as~shown
  above in the LP analysis of {\rm MMPH} 9-3
  in~Figure~{\rm\ref{fig:alpha-star}} and in arbitrarily many
  other NBMMPHs in any dimension (see
  (\cite{pavicic-quantum-23}, Figure~6, Tables~2, 6, and 7).
  Note that $\alpha^*_r$ there (\cite{pavicic-quantum-23}) is
  equal to $\alpha^*$ here, according to the Quantum Indeterminacy
  Postulate \ref{postulate}. It is, therefore, not a reliable
  discriminator of contextual~sets.

Statement (b) implies that the
  {\boldmath{$\alpha^*_p$}}{\bf-inequality}
\begin{eqnarray} 
  HI_{cM}=\alpha({\cal H})< \alpha_p^*({\cal H})=l=HI_q,
    \label{eq:alpha-alpha-b}
\end{eqnarray}
which follows from the vertex-hyperedge lemma
(Equation~{\rm(\ref{eq:theorem-g})}), is another form of the
v-inequality (Equation~{\rm(\ref{eq:i-ineq})})
and that it is therefore a noncontextuality inequality and a
reliable discriminator of contextual sets. 
\label{th:alpha-star}

Taken together, the~MMPH e-inequality (Equation~(\ref{eq:e-ineq})),
v-inequality (Equation~(\ref{eq:i-ineq})),
and~$\alpha^*_p$-inequality (Equation~(\ref{eq:alpha-alpha-b})) are
genuine noncontextual inequalities and reliable discriminators
of contextual sets, while the $\alpha$-inequality
(Equations~(\ref{eq:alpha-cabelllo},\ref{eq:alpha-alpha-b})) is not,
even though it is frequently presented as if it were,
in~the~literature.

\subsection{\label{subsec:gen}  Generations of KS
  and Non-KS MMPHs in Dimensions 3 to 32}

 We provided a fairly exhaustive presentation of the generation of
 both KS and non-KS MMPHs in dimensions 3 to 8 in~\cite{pavicic-pra-17,pavicic-entropy-19,pavicic-entropy-23,pavicic-quantum-23,pm-entropy18,pavicic-pra-22,pwma-19,pw-23a}.
 Here, we shall only present some new results and refer
to particular previous findings in tables for the sake of
completeness. 

\subsubsection{\label{subsubsec:3D}  3-dim {\rm MMPH}s}

The smallest 3-dim non-KS MMPH (pentagon) \cite{klyachko-08}
that we analyzed in~\cite{pavicic-quantum-23} does not satisfy
the basic requirement that all $n$-dim MMPHs in this paper
should satisfy, i.e., that at least one of the hyperedges
should have $n$ vertices, meaning that at least one triple of
orthogonal vectors should live in a 3-dim space. Thus,
for~example, the~smallest non-KS NBMMPH 7-7, obtained by
{\bf M4-5}, smaller than the ``bug'' and shown in
Figure~\ref{fig:3D}(a), does contain one hyperedge with
three~vertices. 
\begin{figure}[H]
  \includegraphics[width=\linewidth]{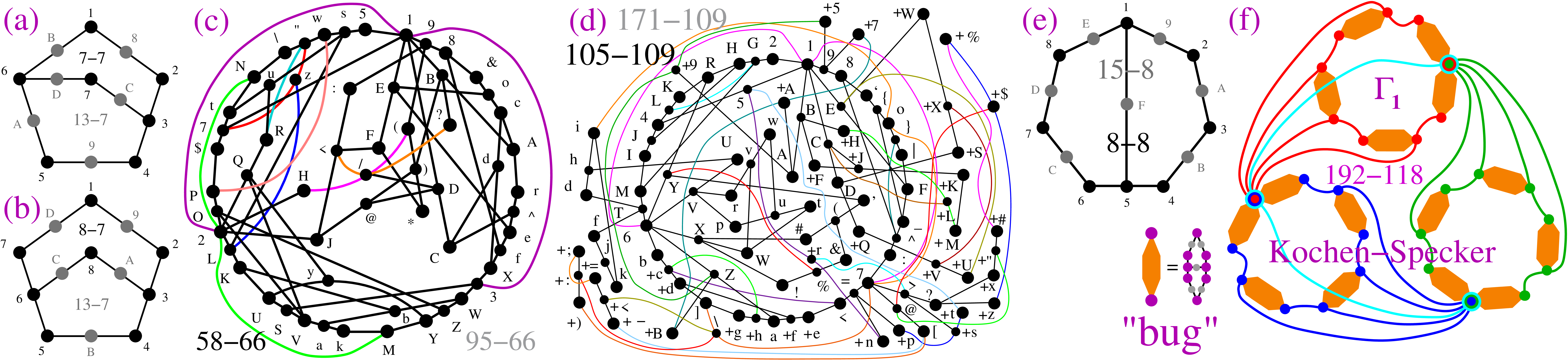}
  \caption{3-dim MMPHs. (\textbf{a},\textbf{b},\textbf{e})
    Three smallest non-KS
  NBMMPHs; none can be substituted for the ``bug'' in $\Gamma_1$ in
  (\textbf{f})---see text; (\textbf{c}) the smallest critical KS
  MMPH 95-66 obtained from golden ratio components (from 597-378
  master) presented in the text (shown without $m=1$ vertices);
  (\textbf{d}) a much larger critical KS MMPH from the same master
  (shown without $m=1$ vertices); (\textbf{e}) 8-8 non-KS
  NBMMPH---see text; (\textbf{f}) a proper hypergraph encoding of
  the original Kochen and Specker design---see
  (\cite{pavicic-quantum-23}, Figure~11); cf.~improper encoding
  in (\cite{budroni-cabello-rmp-22}, Figure~1)---see text;
  $m=1$ vertices in (a,b,e) are depicted as grey dots.}
\label{fig:3D}
\end{figure}
The coordination of the filled 7-7 MMPH
({\tt 182,234,495,5A6,6B1,3C7,6D7.}) obtained by our programs is
1 = (1,$-$1,1), 2 = (1,0,$-$1), 3 = (1,0,1), 4 = (0,1,0),
5 = (1,0,0), 6 = (0,1,1),\linebreak 7 = (1,1,$-$1), 8 = (1,2,1),
9 = (0,0,1), A = (0,1,$-$1), B = (2,1,$-$1), C = ($-$1,2,1),
D = (2,$-$1,1).\linebreak

The next non-KS NBMMPHs obtained by the
same methods are the 8-7 shown in Figure~\ref{fig:3D}d and the
``bug'' shown in Figure~\ref{fig:bug}.

These three MMPHs are interesting because, together with
pentagons, hexagons, and~heptagons, they prove that the ``bug''
plays a specific role in the structure of the original
Kochen--Specker proposal, whose proper 192-118 MMP rendering is
shown in Figure~\ref{fig:3D}(f) (cf.~(\cite{budroni-cabello-rmp-22},
Non-Figure 1)---where words are offered as unusual substitutes for
hyperedges). If~we substituted the aforementioned MMPHs for the
bugs in the 192-118, they would transform it into a~BMMPH. 

The 8-8 shown in Figure~\ref{fig:3D}(e) is obtained by {\bf M2-3}  
from the 33--50 non-KS, which is a $\overline{\rm subhypergraph}$
(black dots) of the  69-50 shown in
\cite[Figure~10e]{pavicic-quantum-23}.

Thousands of other KS MMPHs in the real and complex 3-dim Hilbert
space are presented in~\cite{pavicic-pra-22}, while the non-KS ones
derived from them are presented in~\cite{pavicic-entropy-23}.

Coordinatizations that we did not explore previously are those
derived from variants of the {\em golden ratio}
$\phi=(1+\sqrt{5})/2$. For~instance, the~following vector 
components:
$\{0, \pm1, \pm2^{\frac{1}{2}}\phi^{\frac{3}{2}},
\pm2^{\frac{1}{2}}\phi^{-\frac{1}{2}},
5^{\frac{1}{4}}\phi^{\frac{3}{2}},\linebreak
5^{\frac{1}{4}}\phi^{-\frac{3}{2}}, \pm2^{-1}\phi^{-1},
2^{-1}5^{\frac{1}{4}}\phi^{\frac{1}{2}},
\pm2^{\frac{1}{2}}\phi^{-\frac{5}{2}},
2^{-1}5^{\frac{1}{4}}\phi^{-\frac{5}{2}}\}$.
They generate the 597-358 master, which in turn generates over
250 criticals, the~smallest of which is shown in
Figure~\ref{fig:3D}(c) and one of the largest in
Figure~\ref{fig:3D}(d). Note that, in
\cite[Supp.~Material, p.~3]{pavicic-pra-22},
we generated vectors, also partly based on the golden ratio,
for~the original Kochen and Specker's design of their 192-118,
since the equation they gave in \cite{koch-speck} does
not provide us with any definitive~coordinatization (it
contains two arbitrary parameters).

Table \ref{T:3D} gives an overview of the 3-dim MMPHs discussed above.

\begin{table}[H]
\label{T:3D}
\setlength{\tabcolsep}{3.8pt}{\begin{tabular}{|c|cccccc|}
    \hline
  \multirow{3}{*}{\textbf{dim}} & &
  \textbf{No.~of}  & \multirow{3}{*}{\textbf{Methods}} 
  & \multirow{2}*{\textbf{Smallest KS and}}
  & \multirow{2}*{\textbf{Vector}} & \multirow{3}{*}{\textbf{References}} \\
&  \textbf{Master}   &  \textbf{Non-Isom} & & \multirow{2}*{\textbf{Non-KS Criticals}} &
      \multirow{2}*{\textbf{Components}} & \\
&        & \textbf{Criticals}    &  &     & & \\
  \hline
  \multirow{6}{*}{3-dim}& 13-7 $^\dag$ & 1 &
  {\bf M4-5}& 7-7 $\ddag$ & $\{0,\pm 1,2\}$ & see text\\
  & 13-7 $^\dag$ & 1 &
 {\bf M4-5}& 8-7 $\ddag$ & $\{0,\pm 1,2\}$ & see text\\
 & 97-64  & 1 & {\bf M1-2} &
 49-36 * (33-36 $\ddag$) & $\{0,\pm 1,\pm 2,5\}$ &
  \cite{pavicic-quantum-23} \\
 & 81-52 & 1 & {\bf M1-2}&  57-40 * (33-50 $\ddag$) &
 $\{0,\pm 1,\sqrt{2},\pm 3\}$ & \cite{pavicic-quantum-23} \\
 &  169-120 & 3 & {\bf M1-2}& 69-50 * (33-50 $\ddag$, 8-8 $\ddag$)  &
 $\!\!(0,\pm\omega,2\omega,\pm\omega^2,2\omega^2\}$ &
  \cite{pavicic-quantum-23}, see text\\
 &  597-358 & 3 & {\bf M1-2}& 95-66 * (58-66 $\ddag$)  &
  golden ratio---see text & see text \\
    \hline\end{tabular}}
\caption{3-dim NBMMPHs  
  obtained by methods {\bf M1-2} and {\bf M4-5}; 
  `$^*$' indicates that the MMPH is a KS; 
  `$\dag$' indicates that the MMPH is a noncontextual BMMPH;
  `$\ddag$' indicates that the MMPH is a non-KS contextual NBMMPH.}
\end{table}

\medskip

\subsubsection{\label{subsubsec:4D}  4-dim {\rm MMPH}s}

The 4-dim MMPHs are the most explored MMPHs in the literature
over the last three~decades.

In the beginning, several scientists spent five years seeking to
find three KS MMPHs with $\{0,\pm1\}$ components in three papers
~\cite{peres,kern,cabell-est-96a}. Fifteen years later, we are able
to generate all 1233 KS MMPHs from these components in a few
minutes on a PC from scratch \cite{pavicic-paris-video-2019}.

Later, large master sets based on serendipitous or intuitively
found connections of KS hypergraphs with polytopes or Pauli 
operators were found
~\cite{aravind10,waeg-aravind-jpa-11,waeg-aravind-fp-14,waeg-aravind-jpa-15,pavicic-pra-17}.
Unfortunately, they only partly generated 4-dim KS sets, and their
generation by means of these methods was neither automated nor
generalized. Instead, this was achieved by an automated generation
of MMPHs from basic components in~\cite{pm-entropy18,pwma-19}. 
Altogether, millions, if not billions, of 4-dim were generated in
the last 15 years, so we only include several new MMPHs in
Figure~\ref{fig:4Da} and refer to all of them in Table~\ref{T:4D}. 

The reason that we return to 4-dim MMPHs is that, originally,
we were only interested in their structure and did not give the
coordinatizations for them in our papers. Later on, we found a way
to generate MMPHs directly from vector components and found several
applications that required specifications of states/vectors. So,
here, we give coordinatizations for several chosen MMPH masters
generated from the simplest real and complex vector components,
what has not been done previously. We also find particular
symmetries in such generations. For~instance, $\{0,\pm 1,x\}$
and $\{0,\pm x,1/x\}$ might generate the same MMPH string,
although, of~course, not the same coordinatizations.

In Appendix \ref{app:2a}, we give strings and coordinatizations
for MMPHs generated from the following vector components:
$\{0,\pm 1,\phi\}$ and $\{0,\pm\phi,\frac{1}{\phi}\}$ yield
60-72, $\{0,\pm 1,i\}$ and $\{0,\pm i,1\}$ yield 86-152,
and $\{0,\pm i,\pm 1\}$ yield 92-185. The string and
coordinatization of 86-152, as well as the distribution of
the critical NBMMPHs contained in it, are given in
\mbox{Appendix \ref{app:2a}}.

\begin{figure}[H]
\vspace{-5pt}
  \includegraphics[width=\textwidth]{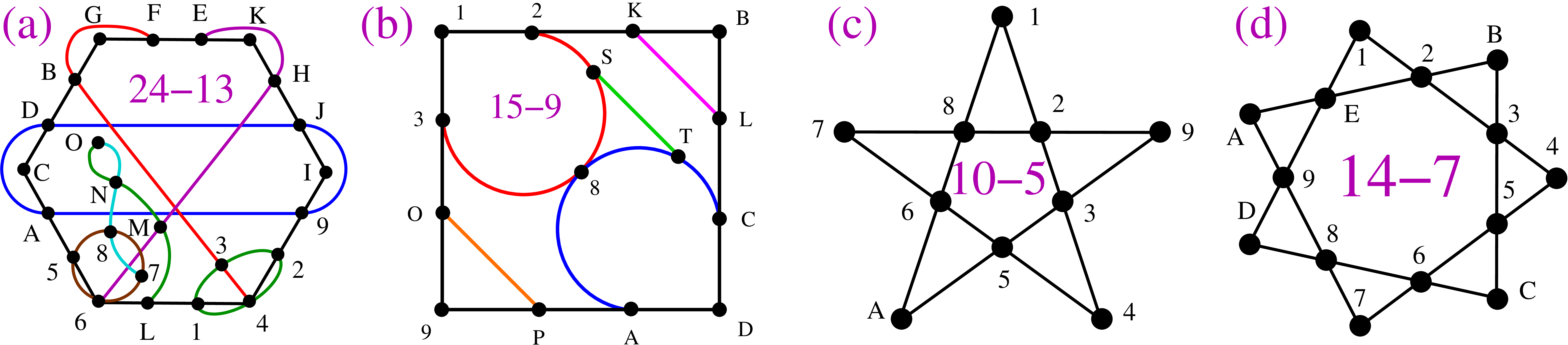}
\caption{\baselineskip=9pt 4-dim 
  NBMMPHs. (\textbf{a}) A 24-13 KS MMPH critical that does not
  have a $\{0,\pm1\}$ and is therefore not isomorphic to the 24-13
  BMMPH from the aforementioned 4-dim 1233 class; its
  coordinatization is given in Appendix \ref{app:2}; (\textbf{b}) the only
  critical non-KS NBMMPH obtained from 29-16 given in Appendix
  \ref{app:2}; its string and coordinatization are given in Appendix
  \ref{app:2}; (\textbf{c}) pentagram---see text; (\textbf{d}) heptagram---see text.}
\label{fig:4Da}
\vspace{-10pt}
\end{figure}

In fact, any new set of vector components might generate new
master sets and classes of MMPHs. This is apparently misunderstood
in~\cite{elford-lisonek-20}, which after finding that ``the vast
majority of known examples have been found by computer search\dots
without much insight in the sets generated'', 
offers a computer-free construction of an {\em infinite} family
of KS sets in a 4-dim space instead. However, once the sets are
generated, humans can offer any insight to them, and, once the
vector components are defined (\cite{elford-lisonek-20},
Equation~(1)), humans can generate MMPHs either by hand or by
a computer. A computer might be much faster.

In the end, we consider star-like MMPHs: the 4-dim regular pentagram
and heptagram (Schl\"afli symbols \{5/2\} and \{7/2\}) shown in
Figure~\ref{fig:4Da}(c,d). They are both KS MMPHs but apparently
without coordinatizations, meaning that our programs written in
C were run for days on a supercomputer, checking numerous vector
components, and did not yield any results. We also used
Mathematica to calculate the corresponding nonlinear equations
on a supercomputer; after a month, it faced the 2 TB memory
limitation. Note that the diagram in (\cite{hofmann-24}, Figure~1)
(which is graphically and misleadingly tantamount to
Figure~\ref{fig:4Da}(c)) is not an MMPH (e.g., D1 and D2
are not orthogonal) and that, therefore, the states
assigned to the vertices in that diagram do not
represent a coordinatization of the MMPH pentagram.
Moreover, the~diagram $\{{\tt 1,2,S2,f,S1,1}\}$, which
is called a pentagram in~\cite{hofmann-24}, is not
a Schl\"afli \{5/2\} pentagram. 

\begin{table}[H]
\vspace{-5pt}
\setlength{\tabcolsep}{3.3pt}
{\begin{tabular}{|c|cccccc|}
    \hline
  \multirow{3}{*}{\textbf{dim}} & \multirow{3}{*}{\textbf{Master}} &
  \textbf{No.~of}  & \multirow{3}{*}{\textbf{Methods}} 
  & \multirow{2}*{\textbf{KS and Non-KS}}
  & \multirow{2}*{\textbf{Vector}} & \multirow{3}{*}{\textbf{References}} \\
&    &  \textbf{Non-Isom} & & \multirow{2}*{\textbf{Criticals}} &
      \multirow{2}*{\textbf{Components}} & \\
&       & \textbf{Criticals}    &  &     & & \\
  \hline
\multirow{6}{*}{4-dim} & 86-152 & $>$8 millions &{\bf M1-6}& 10-7 \ddag, 15-9 \ddag, 18-9, 24-13 & 
   $\{0,\pm 1,i\}$ & see text \\ 
& 92-185 & 600,000 &{\bf M1-6}& 18-9 (smallest) & 
$\{0,\pm 1,\pm i\}$ & \cite{pavicic-pra-17,pwma-19}  \\
 &888-1080&$>$1.5 billions  & {\bf M1-6} & 18-9 (smallest) & 
 $\{0,\pm\phi,\frac{1}{\phi}\}$ & \cite{pavicic-pra-17,pwma-19}
  \\  
 &400-1012& $>$250,000 & {\bf M1-6} & 18-9 (smallest) & 
 $\{0,\pm 1,\pm\omega,\pm\omega^2\}$ & \cite{pavicic-pra-17,pwma-19}
 \&\ here \\  
 &10-5& 1 & {\bf M6} & 10-5 & 
   none ? & see text\\  
 &14-7& 1 & {\bf M6} & 14-7 & 
   none ? & see text\\  
   \hline\end{tabular}}
\vspace{-5pt}
\caption{\baselineskip=9pt 4-dim  NBMMPHs obtained as described
  in the text;
  `$\ddag$' indicates that the MMPH is a non-KS contextual NBMMPH.
  The first critical is given in Appendix~\ref{app:2}; others are
  given in Figure~\ref{fig:4Da} and Appendix~\ref{app:2};
  ``none ?'' means that there might not be any coordinatization.}
\label{T:4D}
\vspace{-15pt}
\end{table}

\subsubsection{\label{subsubsec:5-8D}  5- to 8-dim {\rm MMPHs}}

In higher dimensions, the simplest vector components $\{0,\pm 1\}$
allow us to obtain the most straightforward generation of MMPHs, although,
in, e.g.,~the 6-dim space, the~vector components $\{0,1,\omega\}$
generate smaller MMPHs than $\{0,\pm 1\}$. Thus, we shall give
examples of known vector components whenever available and/or
interesting.

Moreover, in~contrast to our previous generations, where we made
use of mostly {\bf M5}, here, in~order to generate non-KS MMPHs,
we chiefly use {\bf M3}. In~other words, to~obtain non-KS MMPHs,
we do not remove vertices with $m>1$ from KS MMPHs, as~we mainly
did in~\cite{pavicic-entropy-23}, but~only those with $m=1$. 

In the 5-dim space,  $\{0,\pm 1\}$ components generate the 105-136
master, whose distribution we presented in~\cite{pavicic-pra-22}.
There, we gave one of the two smallest KS MMPHs; in
Figure~\ref{fig:5-6d}(a), we give the other (the string and
coordinatization are given in Appendix \ref{app:3}). It does not
have vertices with $m=1$, and, overall, the class has a comparatively
small number of NBMMPHs that contain $m=1$ vertices.~Moreover, when they
do, only a few of them generate small non-KS MMPHs. One such
MMPH (11-7) is shown in Figure~\ref{fig:5-6d}(b). It is obtained from
one of 38-20 KS MMPHs, but tracing the coordinatization down to
the filled 11-7 and then to the 11-7 itself requires an algorithm
whose calculation is more time-consuming than filling 11-7 up
to 21-7 by MMPSHUFFLE and determining the coordinatization by VECFIND
within less than 1 sec on a PC. The~strings and coordinatization
are given in Appendix \ref{app:3}.

In the 5-dim space, a master MMPH obtained from $\{0,1,\omega\}$
vector components is a noncontextual BMMPH, in contrast to the 6-dim
space, where these three components generate a star-like KS NBMMPH,
which cannot be generated from $\{0,\pm 1\}$ components. (Star-like
MMPHs exist only in even-dimensional spaces; we shall
return to star-like constructions in Section~\ref{subsec:app}.)

In the 6-dim space, vector components $\{0,1,\omega\}$ 
generate a 216-153 master, which contains over 15 million
non-isomorphic KS MMPHs but just three critical KS MMPHs,
21-7, 27-9, and~33-11 (3$l$-$l$, odd $l>5$), the~first
two of which are shown in (\cite{pm-entropy18}, Figure~A2). The~third
one, 33-11, which has not been previously presented in the star-like
form, is given in Figure~\ref{fig:5-6d}(c), with its string and
coordinatization in \ref{app:4}. We can continue the 3$l$-$l$
construction (see, e.g.,~the 39-13 in  (\cite{pm-entropy18}, Figure~A2)),
but then we have to enlarge the set of vector component generators to
$\{0,1,\omega,\omega^2\}$ (ibid.).

\bigskip
\begin{figure}[h]
  \includegraphics[width=\textwidth]{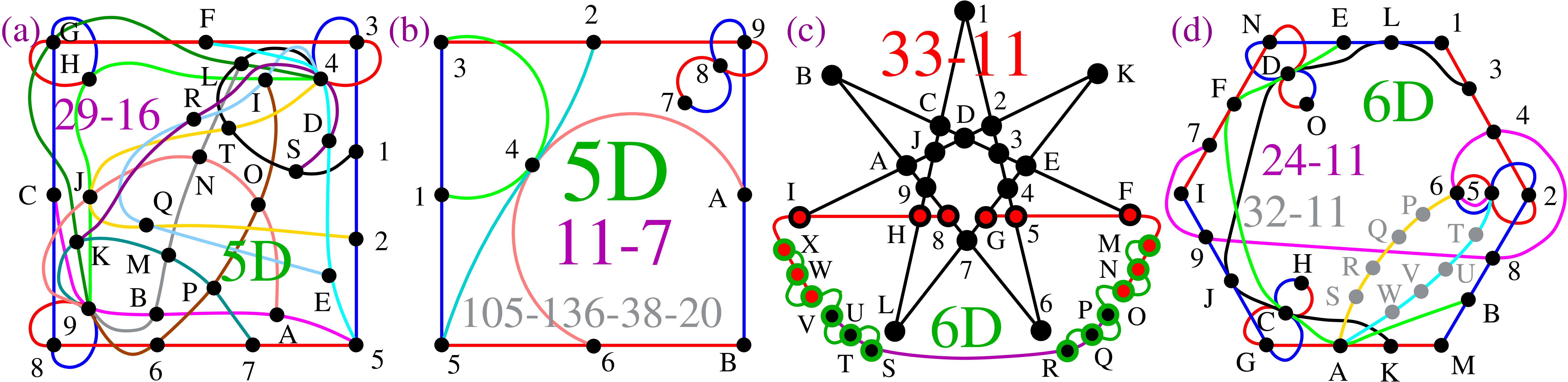}
  \caption{(\textbf{a}) One of the two smallest 5-dim criticals from the 105-136
  KS master (the other was shown in (\cite{pavicic-pra-22}, Figure~2b);
  its string and coordinatization are given in Appendix \ref{app:3};
  (\textbf{b}) the smallest critical non-KS NBMMPH obtained from one of over
  3000 38-20 KS MMPHs from the 105-136 class; the filled 11-7 is
  not necessarily a subhypergraph of the 38-20; their strings and
  coordinatization are given in Appendix \ref{app:3}; (\textbf{c},\textbf{d}) see text.}
\label{fig:5-6d}
\end{figure}

The 6-dim $\{0,\pm 1\}$ vector components generate the 236-1216 KS
master, which contains more than 3.7 million critical KS MMPHs
(\cite{pavicic-pra-17}, Figure~12). A number of graphical
representations of small critical MMPHs are given in
(\cite{pavicic-pra-17}, Figure~11). Another example---32-11---is
given in Figure~\ref{fig:5-6d}(d). It contains a non-KS NBMMPH

24-11. Their strings and coordinatization are given in Appendix
\ref{app:4}.

The 7- and 8-dim KS and non-KS NBMMPHs generated by $\{0,\pm 1\}$
vector components are extensively elaborated on in
~\cite{pavicic-pra-17,pwma-19,pavicic-pra-22,pw-23a,pavicic-entropy-23}.
See Table~\ref{T:5-8}. 

\begin{table}[H]
  \setlength{\tabcolsep}{5pt}
  {\begin{tabular}{|c|cccccc|}
    \hline
  \multirow{3}{*}{\textbf{dim}} & &
  \textbf{No.~of } & \multirow{3}{*}{\textbf{Methods}} 
  & \multirow{2}*{\textbf{Small}}
  & \multirow{2}*{\textbf{Vector}} & \multirow{3}{*}{\textbf{References}} \\
&  \textbf{Master}   &  \textbf{Non-Isom} & & \multirow{2}*{\textbf{MMPHs}} &
      \multirow{2}*{\textbf{Components}} & \\
&        & \textbf{Criticals}    &  &     & & \\
  \hline
\multirow{2}{*}{5-dim} &105-136& $>$27.8 millions & {\bf M1}& 29-16 * & 
  $\{0,\pm 1\}$ & see text\\
 &38-20& - & {\bf M2-3} & 5-11 \ddag & 
   $\{0,\pm 1\}$ & see text\\
  \hline
\multirow{2}{*}{6-dim} &216-153& 3 & {\bf M1}& 33-11 * & 
  $\{0,1,\omega\}$ & see text\\
  &236-1216& $>$3.7 millions & {\bf M2-3}& 32-11\dag~(24-11 \ddag) & 
   $\{0,\pm 1\}$ & see text\\
  \hline
\multirow{2}{*}{7-dim}
  &47-176& $>$1 million &{\bf M2,8}&34-14 *& $\{0,\pm 1\}$ &
  \cite{pavicic-pra-22,pw-23a}\\
  &805-9936& $>$42,800 &{\bf M1,3}&14-18 \ddag& $\{0,\pm 1\}$ &
  \cite{pavicic-entropy-23}\\
  \hline
  8-dim
  &3280-1361376& $>$7 millions &{\bf M1-2}&36-9 * (15-9 \ddag)&
  $\{0,\pm 1\}$ &
  \cite{pavicic-pra-22,pavicic-entropy-23}\\
   \hline\end{tabular}}
\caption{NBMMPHs obtained by methods {\bf M1-3} and {\bf M8};
  `$^*$' indicates that the MMPH is a KS
  `$\ddag$' indicates that the MMPH is a non-KS contextual NBMMPH.}
\label{T:5-8}
\end{table}
\unskip

\subsubsection{\label{subsubsec:9-32D}  9- to 32-dim {\rm MMPHs}}

Since the generation of MMPH masters from small vector components is
an exponentially complex task, it is unfeasible to generate
them in dimensions greater than eight. In~\cite{pavicic-pra-22}, we
successfully obtained a 9-dim master 9586-12068704 from the
$\{0,\pm1\}$ vector components. However, this master contains such
a high number of BMMPHs that even a month-long run on a supercomputer
did not yield any KS NBMMPHs. We were only able to extract non-KS
NBMMPHs, since they are significantly more~abundant.

Fortunately, our dimensional upscaling method {\bf M8} does not
scale in complexity with increasing dimensions. This allows us to
generate MMPHS using a bottom-up approach, rather than relying on
top-down generation from masters created from small vector
components. In~\cite{pw-23a}, we present the distributions of MMPHs
obtained via this method across dimensions 9 to 16 and 27, as~well as
small instances of them. There, we also give a table of KS MMPHs for
all these dimensions that are analogous to those in the tables above.
In~\cite{pavicic-entropy-23}, we focus on generating non-KS MMPHs in
dimensions 9 to 16. In~\cite{pavicic-pra-22}, we generate 32-dim KS
MMPHs from a five-qubit set derived in~\cite{planat-saniga-12}. 

In Figure~\ref{fig:9-15}, we offer a new graphical~representation
of some non-KS MMPHs from~\cite{pavicic-entropy-23},
{\parindent=0pt providing a more comprehensive insight into their structure.
Specifically, rather than depicting an MMPH by the largest loop
formed by its hyperedges, we use circles or parts of circles to
illustrate the hyperedges. The reader can compare
Figure~\ref{fig:9-15}(a--e) with 
(\cite{pavicic-entropy-23}, Figures~4b,c and 5a--c) to observe the
difference. This presentation method allows us to visualize the
so-called $\delta$-feature~\cite{pavicic-pra-17} (in an $n$-dim
space, hyperedges might share up to $n-2$ vertices;
cf.~Definition~\ref{def:MMP-string}(4)) as overlapping semi-circles
(cf.~Figure~\ref{fig:9-15}(c--e)).}

Regarding higher-dimensional examples, Figure~\ref{fig:27-32}(a)
presents a 27-dim non-KS MMPH\linebreak
$\overline{\rm subhypergraph}$ 36-5
obtained via {\bf M2,3} from the 27-dim KS MMPH 141-16 master
generated through method {\bf M8} \cite{pw-23a}. Consequently, there
are gaps in the vertex numbering. While we could have closed these
gaps, we chose not to do so in order to allow the reader to derive
the coordinatization and reconstruct the missing vertices from the
master 141-16 provided in~\cite{pw-23a}. As~for the ``missing''
vertices---specifically dropped vertices with $m=1$---the circular
hyperedge still contains $m=1$ (grey) vertices based on our
principle of ensuring that all $n$ vertices are included in at
least one hyperedge. This principle enables the algorithms and
programs to recognize a particular MMPH as belonging to an $n$-dim
space. Furthermore, the~automated reduction of 141-16 to the critical
36-5 ensures that the latter MMPH would remain an NBMMPH, meaning
that it is still contextual, even if $m=1$ vertices from the circular
hyperedge are removed. However, if~we added one or more vertices to,
for~example, the pink hyperedge {\tt 3CL}, the~MMPH would cease to be
contextual and would instead become a BMMPH. For~further discussion
of this property, see
(\cite{pavicic-pra-22}, Supplementary~Material~p.~6).

\begin{figure}[t]
  \includegraphics[width=\textwidth]{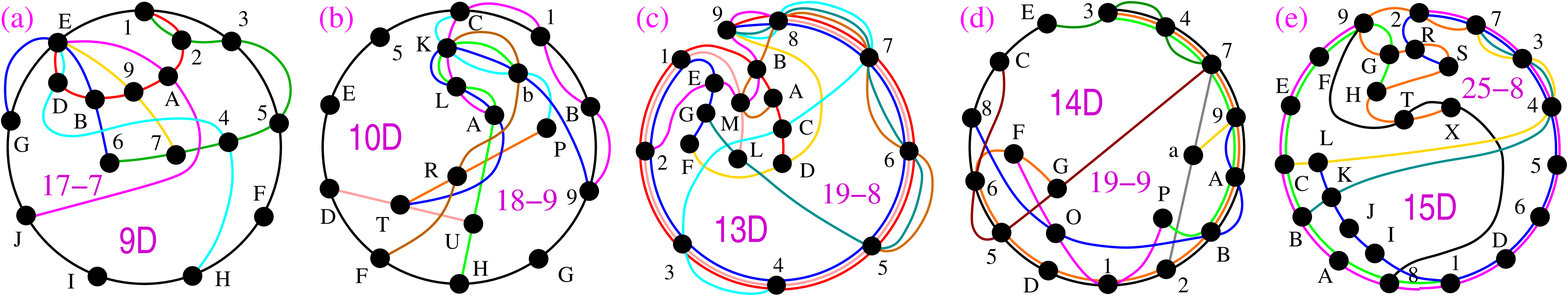}
\caption{(\textbf{a}) A 9-dim non-KS 17-7 MMPH obtained from 19-8 in
  (\cite{pavicic-entropy-23}, Figure~4b) by dropping $m=1$ vertices
  {\tt C} and {\tt L} and hyperedge {\tt EFI}; note that the 19-8
  stops being critical when {\tt L} is removed, although it is an
  $m=1$ vertex; (\textbf{b}) a 10-dim non-KS 18-9 MMPH is presented via a
  circle of the largest hyperedges; (\textbf{c}--\textbf{e}) non-KS MMPHs presented
  via overlapping semi-circles (parts of hyperedges) featuring
  the $\delta$-property---see text; the strings and coordinatizations
  of the MMPHs are given in~\cite{pavicic-entropy-23}.}
\label{fig:9-15}
\end{figure}

In Figure~\ref{fig:27-32}(c), we give a 32-dim non-KS MMPH 40-5
generated by {\bf M2,3,5} from the 32-dim KS MMPH 144-11 obtained
in~\cite{pavicic-pra-17}. All points that hold for the 27-dim
$\overline{\rm subhypergraph}$ above hold for this one as well. 
For instance, if~we added a vertex to the black hyperedge
{\tt dfhk}, the MMPH would lose its~contextuality. 

\begin{figure}[H]
  \includegraphics[width=\textwidth]{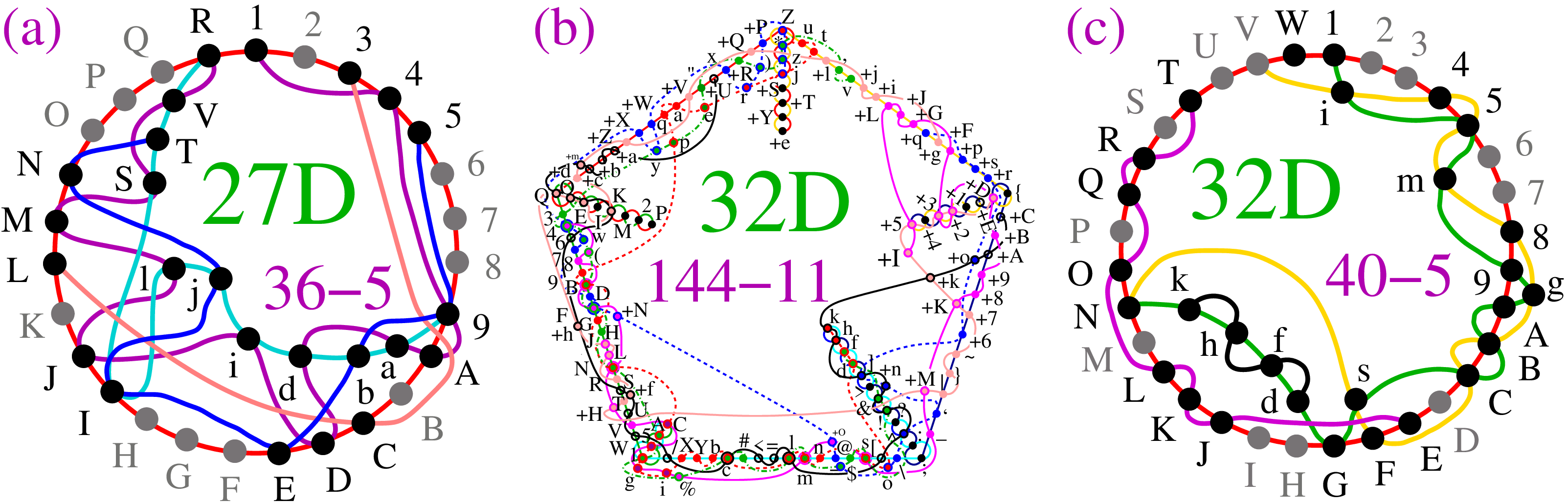}
  \caption{(\textbf{a}) A small 27-dim critical non-KS NBMMPHs
  obtained in this paper; its coordinatization can be derived from
  its master's (141-16) vertices (\cite{pw-23a}, Appendix~14);
  (\textbf{b}) a 32 KS NBMMPH obtained in~\cite{pavicic-pra-17};
  (\textbf{c}) a critical non-KS NBMMPH $\overline{\rm subhypergraph}$
  obtained from it in this paper; see text for details.}
\label{fig:27-32}
\end{figure}

The list of non-KS NBMMPHs considered in this section is given in
Table~\ref{T:small2}. 

\begin{table}[H]
\center
\setlength{\tabcolsep}{10pt}{\begin{tabular}{|c|ccc|}
    \hline
  {\textbf{dim}} & \textbf{Smallest Critical Non-KS NBMMPHs} &
  \textbf{Master}  & \textbf{Vector Components}\\
  \hline
   {9-dim} &17-7& 47-16 & $\{0,\pm 1\}$ \\
   {10-dim}&18-9& 50-15 & 
   $\{0,\pm 1\}$ \\
   {11-dim}&19-8& 50-14 & 
   $\{0,\pm 1\}$ \\
   {12-dim}&19-9 & 52-9 & 
   $\{0,\pm 1\}$ \\
   {13-dim}&19-8 & 63-16 & 
   $\{0,\pm 1\}$ \\
   {14-dim}&19-9 & 66-15 & 
   $\{0,\pm 1\}$ \\
   {15-dim}&25-8 & 66-14 & 
   $\{0,\pm 1\}$ \\
   {16-dim}&22-9 & 70-9 & 
   $\{0,\pm 1\}$ \\
   {27-dim}&36-5 & 141-16 & 
   $\{0,\pm 1\}$ \\
   {32-dim}&40-5 & 144-11 & 
   $\{0,\pm 1\}$ \\
  \hline
\end{tabular}}
\caption{The smallest critical non-KS MMPHs obtained 
    via {\bf M2}, {\bf M7}, and~{\bf M8}. Notice the steady
    fluctuation in the number of hyperedges---the minimum
    complexity of NBMMPHs does not grow with the dimension.}
  \label{T:small2}
\end{table}

\subsection{\label{subsec:app}  Applications}

Possible applications of contextual sets/hypergraphs in
higher dimensions do not face challenges in the generation of
MMPHs but rather in their implementation. Apparently, the most
feasible implementations would be those utilizing the angular
momentum of photons in a holographic approach.
Nevertheless, in~this section, we will consider applications
using a theoretical approach, addressing realistic
implementation limitations only as~necessary.

\subsubsection{\label{appA}Larger~Alphabet}

We extend the larger alphabet procedure discussed in~\cite{bech-00}.
A 4D KS ``protection'' for quantum key distribution (QKD) protocols
was proposed in~\cite{cabello-dambrosio-11} based on a modification
of the BB84 protocol outlined in~\cite{svozil-10}. A~KS hypergraph
with nine edges has been employed in this context. The~protocol runs
as follows: (i) Alice randomly picks one of the nine hyperedges
(bases) and sends Bob a randomly chosen state (vertex) from that
hyperedge; (ii) Bob randomly picks one of the nine hyperedges and
measures the system received from Alice. Instead of qubits, we are
working with ququarts, allowing the transfer of not 1 but 2 bits of
information~\cite{bech-00}. We can modify and generalize this QKD
protocol to apply it to any $k$-$l$ hypergraph ($k$ vertices,
$l$ edges). However, this does not provide a quantum advantage
to Alice and Bob. The~reason is as~follows.

\begin{wrapfigure}{r}{0.25\textwidth}
  \vspace{-25pt}
  \begin{center}
  \includegraphics[width=0.24\textwidth]{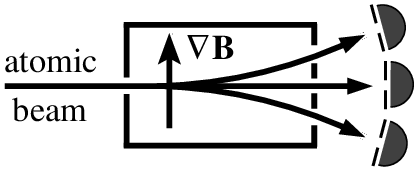}
  \end{center}
\vspace{-15pt}
  \caption{\baselineskip=9pt Stern--Gerlach experiment with
  spin-one atoms (\cite{feynmanIII}, Figure~5-1).}
\label{fig:st-ger}
\vspace{-15pt}
\end{wrapfigure}
Both KS and non-KS contextual sets (NBMMPHs) yield measurement
outputs that differ from the predetermined measurement outputs
that we would expect from classical sets that adhere to rules
{\em(i)} and {\em(ii)} outlined in Definition~\ref{def:n-b}.
Let us illustrate this by means of the following simple
Stern--Gerlach experiment. Quantum measurements, as illustrated  
in Figure~\ref{fig:st-ger}, ideally always\break
trigger one of the detectors positioned at each
of the output ports,
irrespective of whether the Stern--Gerlach devices are considered
individually or networked (joined together) within an MMPH structure
(see Quantum Indeterminacy Postulate \ref{postulate}). This contrasts
with the classical counterparts. Individually, the~classical devices
produce the same outputs as the quantum ones, but,~when networked
and assumed to have predetermined output values, they should yield
different results for at least one of the measurements of the
devices/gates/hyperedges, meaning that, for this measurement, no
detector should be triggered, i.e.,~all three vertices/outputs
should be assigned a value of 0. Consequently, such networked
classical devices are~not feasible.

In other words, Alice and Bob can only achieve a quantum advantage
if their eavesdropper, Eve, assumes that they communicate using
outputs from networked classical devices. However, Eve will not
assume this, since she knows that it is impossible due to the KS
theorem. Instead, Eve will straightforwardly introduce fake
messages for every hyperedge/gate, thus mimicking the quantum
outputs and disregarding rules {\em(i)} and {\em(ii)} of
Definition~\ref{def:n-b}. Therefore, the ``hybrid ququart-encoded
quantum cryptography protected by Kochen--Specker contextuality''
\cite{cabello-dambrosio-11} is not truly ``protected'' against
any Eve; it is simply another version of the BB84 protocol that
offers no quantum~advantage.

However, Alice and Bob can implement the following protocol
in which Bob sends messages to Alice. We can assume that it is
carried by photons carrying orbital angular momentum; over
$100\times 100$ entangled angular momentum dimensionality has
been achieved experimentally~\cite{krenn-zeilinger-14}.
It can also be implemented with higher-spin Stern-Gerlach
devices \cite{tekin-16}.\linebreak

\begin{itemize}
\item Alice picks up $n$-dim MMPHs and sends outputs from the
  gates/hyperedges of the chosen MMPHs to Bob in blocks; she
  can repeat sending from the same MMPHs or pick up new ones;
\item Bob stores Alice's sending in quantum memory;
\item Alice informs Bob about which sending belonged to which
  hyperedge and from which MMPH over a classical channel, with~a
  delay; 
\item Bob reads off each Alice's sendings and sends them back to
  her, scrambled, over the quantum channel; scrambling codes
  transform Bob's sendings into his messages but they are still
  undisclosed to Alice;  
\item Alice stores Bob's sendings in quantum memory;
\item Bob informs Alice of the scrambling code over a classical
  channel, with~a delay; 
\item After an agreed number of exchanged blocks, they can transmit
  some messages over a classical channel to check whether Eve is in
  the quantum channel;  
\item After Alice has correlated the reflected sendings with the
  original ones with the help of Bob's code, she learns how to
  measure each of them from the quantum memory and read off Bob's
  message.
\end{itemize}
  
All steps in the protocol are completely automated.
The probability that Eve might introduce correct sendings in
the channel when Alice and Bob transfer the output of, say, the~32-dim
MMPH with 11 hyperedges, shown in Figure~\ref{fig:27-32}(b), is less
than $3\times 10^{-17}$. Therefore, privacy amplification is
hardly~necessary.

\begin{wrapfigure}{r}{0.30\textwidth}
\vspace{-10pt}
  \begin{center}
    \includegraphics[width=0.30\textwidth]{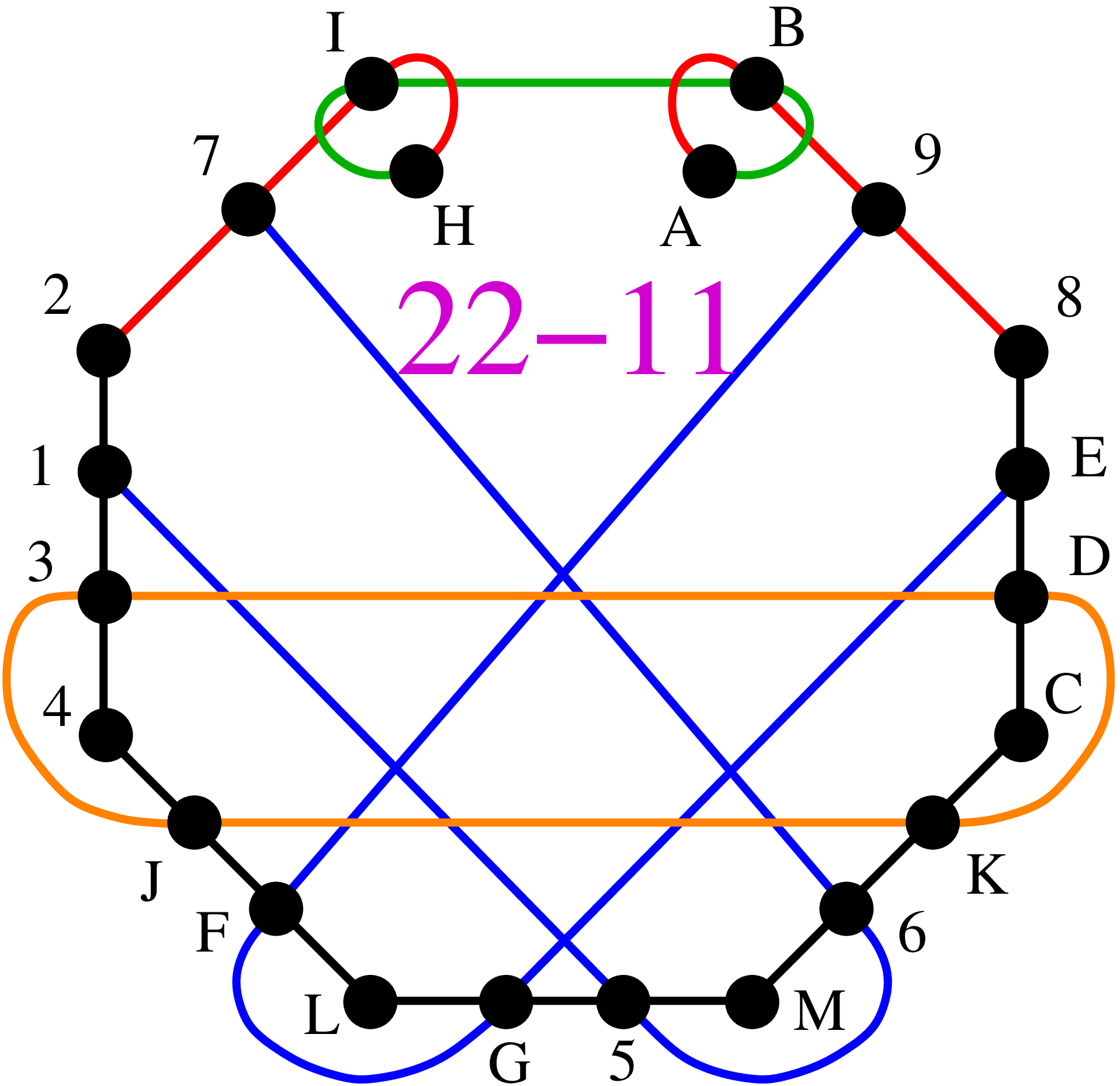}
  \end{center}
\vspace{-15pt}
  \caption{\baselineskip=9pt The 22-11 obtained from
    the 95-185 KS MPPH generated by vector components
    $\{0,\pm 1,\pm i\}$ or $\{0,\pm 1,i\}$.}
  \label{fig:22-11}
\vspace{-15pt}
\end{wrapfigure}

The advantage of the higher-dimensional protocol over the BB84 can
be well exemplified by the fairly simple 22-11 KS
MMPH shown in Figure~\ref{fig:22-11}. 

The 22-11 cannot have a coordinatization based on $\{0,\pm1\}$
vector components, but~it can have this based on $\{0,\pm1,i\}$, $\{0,i,\pm 1\}$,~$\{0,\pm1,\pm i\}$, or~$\{0,\pm1,2\}$, etc.
Below are the string and coordinatization generated by the first
two sets of vector components via the 86-152 master MMPHs
(whose string, coordination, and~distribution are given
in \mbox{Appendix \ref{app:2a}}).

{\bf 22-11}\qquad
  {\tt 4312,27IH,HIBA,AB98,8EDC,CK6M,M5GL,LFJ4,\linebreak 1567,9FGE,J3DK.}
  {\parindent=0pt  {\tt 1} = (0,0,0,1), {\tt 8} = (0,0,1,1), {\tt 9} = (0,0,1,$-$1),}

    {\parindent=0pt {\tt H} = (0,0,1,i), {\tt I} = (0,0,i,1), {\tt 2} = (0,1,0,0), {\tt J} = (0,i,0,1),}

    {\parindent=0pt {\tt 5} = (0,1,i,0), {\tt 6} = (0,i,1,0), {\tt 7} = (1,0,0,0), {\tt M} = (1,0,0,i),}

    {\parindent=0pt {\tt 3} = (1,0,i,0),{\tt 4} = (i,0,1,0), {\tt F} = (1,1,i,i), {\tt C} = (1,1,i,$-$i),}

    {\parindent=0pt {\tt D} = (1,1,$-$i,i), {\tt G} = (i,i,1,1), {\tt E} = (1,$-$1,0,0), {\tt L} = (1,$-$1,i,$-$i),}

    {\parindent=0pt {\tt K} = ($-$1,1,i,i), {\tt A} = (1,i,0,0), {\tt B} = (i,1,0,0)}

The coordinatization enables implementations via two
qubits mounted on single photons by means of linear and
circular polarization and orbital angular momentum. To~see
this, let us first define the photon qubit states:
\begin{eqnarray}
 &&|H\rangle=
\begin{pmatrix}
      1\\
      0\\
    \end{pmatrix}_{1},\qquad  
|V\rangle=
\begin{pmatrix}
      0\\
      1\\
\end{pmatrix}_{1},\quad  
|D\rangle=\frac{1}{\sqrt{2}}\left(
     \begin{matrix}
      +1\\
      1\\
    \end{matrix}
  \right)_{1},\quad
|A\rangle=\frac{1}{\sqrt{2}}\left(
     \begin{matrix}
      -1\\
      1\\
    \end{matrix}
  \right)_{1},\quad  
|R\rangle=\frac{1}{\sqrt{2}}\left(
     \begin{matrix}
      1\\
     +i\\
    \end{matrix}
  \right)_{1}, \nonumber\\
&&|L\rangle=\frac{1}{\sqrt{2}}\left(
     \begin{matrix}
      1\\
     -i\\
    \end{matrix}
  \right)_{1},\quad
|2\rangle=\left(
     \begin{matrix}
      1\\
      0\\
    \end{matrix}
  \right)_{2},\quad 
|-2\rangle=\left(
     \begin{matrix}
      0\\
      1\\
    \end{matrix}
  \right)_{2},\quad 
  |h\rangle=\frac{1}{\sqrt{2}}\left(
     \begin{matrix}
      1\\
      1\\
    \end{matrix}
  \right)_{\!2},\quad
  |v\rangle=\frac{1}{\sqrt{2}}\left(
     \begin{matrix}
      1\\
     -1\\
    \end{matrix}
  \right)_{\!2},
  \label{eq:21-11}
\end{eqnarray}
where $H,V$ are horizontal and vertical, $D,A$ are diagonal and
anti-diagonal, and~$R,L$ are right and left circular polarizations,
while $\pm 2$ are Laguerre--Gauss modes carrying
$\pm 2\hbar$ units of orbital angular momentum (OAM) and
$h,v$ are their $\pm$ superpositions, respectively. Indices `1'
and `2' refer to the first and second qubits mounted on the system
(single photon), respectively.

These qubit states build the hyperedge gates as tensor
products, e.g.,
\begin{eqnarray}
&&|{\tt G}\rangle=\frac{1}{\sqrt{2}}
  \begin{pmatrix}
    i\\
    i\\
    1\\
    1\\
  \end{pmatrix}
  = \frac{1}{2}
  \begin{pmatrix}
    i  \begin{pmatrix}
      1\\
      1\\
    \end{pmatrix}\\
    1  \begin{pmatrix}
      1\\
      1\\
    \end{pmatrix}\\
      \end{pmatrix}  
  = \frac{1}{\sqrt{2}}
  \begin{pmatrix}
      i\\
      1\\
    \end{pmatrix}_{1}\otimes
  \frac{1}{\sqrt{2}}\begin{pmatrix}
      1\\
      1\\
    \end{pmatrix}_{2}
  =-i|L\rangle_1|h\rangle_2,\nonumber\\
&&|{\tt L}\rangle=\frac{1}{\sqrt{2}}
  \begin{pmatrix}
    1\\
    -1\\
    i\\
    -i\\
  \end{pmatrix}
  = \frac{1}{2}
  \begin{pmatrix}
    1  \begin{pmatrix}
      1\\
      -1\\
    \end{pmatrix}\\
    i  \begin{pmatrix}
      1\\
      -1\\
    \end{pmatrix}\\
      \end{pmatrix}  
  = \frac{1}{\sqrt{2}}
  \begin{pmatrix}
      1\\
      i\\
    \end{pmatrix}_{1}\otimes
  \frac{1}{\sqrt{2}}\begin{pmatrix}
      1\\
      -1\\
    \end{pmatrix}_{2}
  =|R\rangle_1|v\rangle_2
  \label{eq:22-11-2}
\end{eqnarray}

Some states (e.g., {\tt 5}) require rather involved manipulation
(including superpositions of tensor products).

Fortunately, in a large alphabet quantum cryptography, we can
limit ourselves to $\{0,\pm 1\}$ vector components in any
higher dimension. It is only in the 4-dim space that there
are no more than six critical MMPHs with a coordinatization
based on $\{0,\pm 1\}$ vector~components. 

Here, with~one of the smallest 4-dim MMPHs, the~states build
eleven hyperedges/gates, which cause Eve's probability
of correctly guessing all states to be less than $10^{-6}$. 

Notice the orthogonality of complex vectors, e.g.,
{\tt A}$\cdot{\tt B}^*=(1,i,0,0)\cdot(i,1,0,0)^*
=(1,i,0,0)\cdot(-i,1,0,0)=0$

Note that the generation of the states from Alice and Bob's
devices is genuinely random (quantum randomness)
\cite{stipcevic-16}. Moreover, Bob scrambles Alice's sending
through a quantum random number~generator.

One can object that the protocol could be implemented using
other quantum sets and not just MMPHs. While this is true,
the~abundance and automated generation of MMPHs of any size and
dimension make them the most favourable candidates for large
alphabet communication in higher~dimensions.

\subsubsection{\label{appB}Oblivious Communication~Protocol}

Communication in a system with no dimensional bound and with some
information about the sender's input unrevealed, i.e.,~oblivious
communication, is presented in~\cite{saha-hor-19}.

However, Alice and Bob can have a quantum advantage only
if their eavesdropper (Eve) assumes that they communicate using
outputs obtained from networked classical devices. Since they do
not, the~oblivious communication protocol needs to be modified so
as to follow the protocol proposed in Section~\ref{appA}. Then,
the~protocol could utilize any KS MMPH in any~dimension.

\subsubsection{\label{appC}Generalized Hadamard~Matrices}
\medskip
Most quantum computation algorithms are based on the Fourier
transform, of~which the Hadamard ($H$) transform is a special case.
Recently, the~$S$ class of $H$ matrices, known as $S$-$H$ matrices
in $\mathbb{C}^n$, with~$n$ being even, has been designed to prove
the existence of KS hypergraphs in an $n$-dim space~\cite{lisonek-19}.
Our method generates any of these KS hypergraphs, which are all found
to be star-like (see Figure~\ref{fig:stars}). Inversely, it allows us
to generate the elements of the corresponding $S$-$H$ matrices.
The~$S$-$H$ matrices depend on the following~theorem. 

\begin{figure}[H]
  \center
  \includegraphics[width=0.8\textwidth]{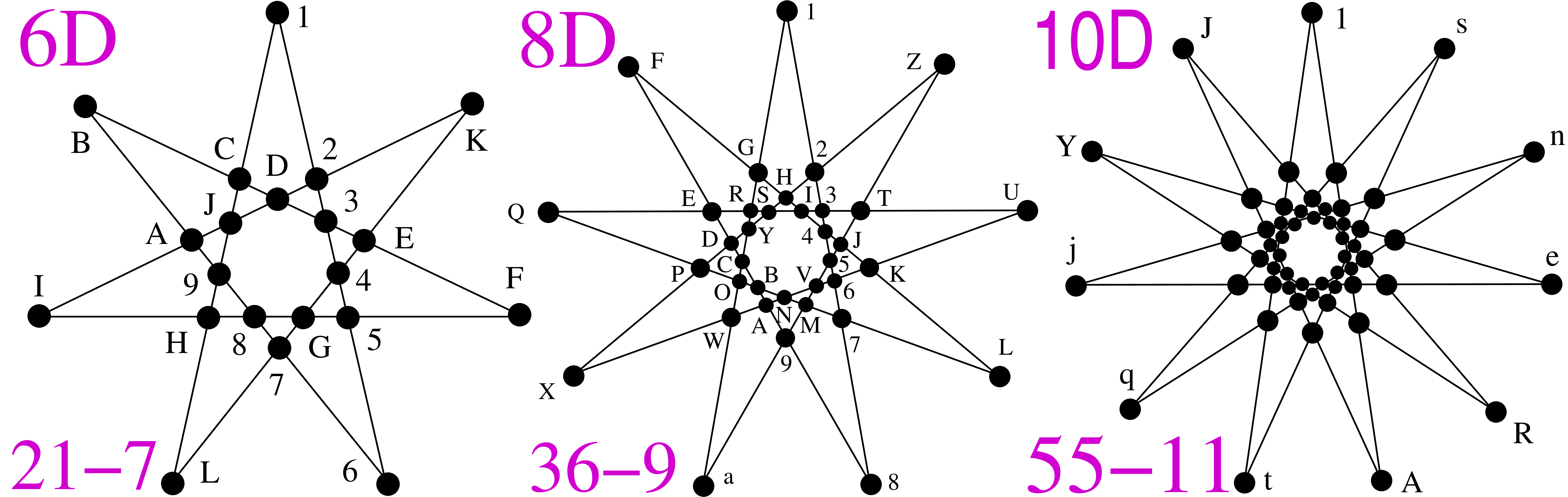}
\caption{Star-like
  $\frac{n}{2}(n+1)$-$(n+1)$,
  ($n$ even),
  $n$-dim MMPHs related to $S$-$H$ matrices; see text.}
\label{fig:stars}
\end{figure}

\begin{Xeorem}\label{th:Lisonek} (Lison{\v e}k 2019) Suppose that
  there exists an $S$-$H$ matrix of order $n$ ($n$ even); then,
  there exists a {\em KS} hypergraph $k$-$l$ in $\mathbb{C}^n$ such
  that $k=\binom{n+1}{2}$ and $l=n+1$.
\end{Xeorem}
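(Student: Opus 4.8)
The plan is to treat the $S$-$H$ matrix as a \emph{coordinatization engine}: its order-$n$ block structure is read off as a list of orthonormal bases of $\mathbb{C}^n$, and the prescribed overlaps between these bases become the hyperedges of the sought hypergraph. First I would extract from the $S$-$H$ matrix a cyclically arranged family of $n+1$ orthonormal bases $B_0,B_1,\dots,B_n$ of $\mathbb{C}^n$ such that each neighbouring pair $B_i,B_{i+1}$ (indices modulo $n+1$) shares exactly $n/2$ common unit vectors, while non-neighbouring bases share none. The Hadamard relation $HH^\dagger=nI$ together with unimodularity of the entries is exactly what certifies that the $n$ vectors inside each block are mutually orthogonal and of equal norm, so that each $B_i$ is a genuine basis; the extra ``$S$'' structure is what pins down the half-overlap between consecutive blocks. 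Reading this off as an MMPH, the hyperedges are the $n+1$ bases, so $l=n+1$. Because the $n+1$ overlaps are pairwise disjoint and together exhaust all the basis vectors---each basis splitting as $n/2$ vectors shared with its predecessor and $n/2$ with its successor---every vector belongs to exactly two bases, the vertex count is $(n+1)\cdot\tfrac{n}{2}=\binom{n+1}{2}$, the multiplicity is uniformly $2$, and the configuration is connected through its cyclic overlaps.

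The second step is the routine verification that this object satisfies Definition \ref{def:MMP-string} as an MMPH of MMPH-dim $n$ and, by virtue of its coordinatization with every hyperedge carrying $n$ vertices, qualifies as a KS MMPH in the sense of Definition \ref{def:KSdef}. Here I would only need to check that consecutive hyperedges share $n/2$ vertices with $2\le n/2\le n-2$ for even $n\ge4$ (conditions (3) and (4)), that every vertex lies in a hyperedge (condition (1)), and that hyperedge sizes are between $2$ and $n$ (condition (2)); all of these follow immediately from the overlap pattern produced in step one.

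The decisive step is the Kochen--Specker (non-binary) property, which I expect to reduce to a short parity count. Suppose a $0/1$ assignment satisfying conditions (i) and (ii) of Definition \ref{def:n-b} existed. Since each hyperedge is a \emph{complete} basis of $n$ vectors, condition (ii) forbids the all-zero pattern and condition (i) forbids two $1$s on it, so exactly one vertex per hyperedge carries the value $1$. Counting incidences between value-$1$ vertices and the hyperedges containing them, there are exactly $n+1$ such incidences (one per hyperedge). But every vertex has multiplicity exactly $2$, so each value-$1$ vertex contributes exactly two incidences; if $t$ vertices carry $1$, the number of incidences is $2t$. Hence $2t=n+1$, which is impossible because $n$ even makes $n+1$ odd. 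Therefore no admissible assignment exists: the hypergraph is an NBMMPH, and, possessing a coordinatization with every hyperedge of size $n$, it is a KS MMPH in $\mathbb{C}^n$ with $k=\binom{n+1}{2}$ and $l=n+1$, as claimed.

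The main obstacle is step one: one must show that the specific ``$S$'' refinement of the Hadamard condition really delivers $n+1$ mutually consistent orthonormal bases with the exact half-overlap pattern and, above all, that \emph{every} vertex has multiplicity precisely $2$, with no multiplicity-one vector slipping in. This uniform multiplicity is what the whole argument hinges on, for even a single multiplicity-one vertex would turn the coverage count into $a+2b=n+1$ and dissolve the parity contradiction. Once uniform multiplicity $2$ and the oddness of $n+1$ are in hand, the KS conclusion is a one-line count.
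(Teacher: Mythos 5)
Your overall skeleton is the right one, but it contains a fatal structural error, and the step that carries the entire content of the theorem is the one you leave unproven. Note first that the paper itself does not prove this statement: it is quoted as Lison\v ek's theorem, and the paper explicitly defers the proof to ref.~\cite{lisonek-19}, recording only that the proof is an algorithmic one-to-one mapping between the $S$-$H$ matrix entries and the hypergraph, and that the resulting KS hypergraphs are \emph{star-like} (Figure~\ref{fig:stars}): $n+1$ hyperedges (bases) such that \emph{every} pair shares exactly one vertex, giving $\binom{n+1}{2}$ vertices, each of multiplicity exactly $2$. Your parity argument in the final step --- exactly one value-$1$ vertex per basis forces $n+1$ incidences, while uniform multiplicity $2$ forces an even number --- is precisely the mechanism that makes this configuration a KS set, so that part of your proposal is sound and matches the known argument.

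The genuine gap is your step one. You posit a cyclic arrangement in which consecutive bases $B_i,B_{i+1}$ share $n/2$ vectors and non-consecutive bases are disjoint. Such a configuration cannot exist in $\mathbb{C}^n$. Write $S_i=B_{i-1}\cap B_i$ for the shared half-bases; since $B_i=S_i\cup S_{i+1}$ is an orthonormal basis, $\mathrm{span}(S_{i+1})=\mathrm{span}(S_i)^{\perp}$. Iterating this relation around the cycle, whose length $n+1$ is odd, identifies $\mathrm{span}(S_0)$ with $\mathrm{span}(S_0)^{\perp}$, a contradiction. (Concretely for $n=4$: five bases with consecutive overlaps of two vectors force the shared $2$-planes to alternate between a plane and its orthocomplement, and an odd cycle cannot close up.) So the object you intend to extract from the $S$-$H$ matrix does not exist; the correct target is the star-like intersection pattern above, to which your parity count applies verbatim. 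Moreover, even granting the correct target structure, your proposal never actually shows how the $S$-$H$ matrix produces the $n+1$ bases with that intersection pattern --- you gesture at unimodularity, $HH^{\dagger}=nI$, and ``the extra $S$ structure'' without any construction. That derivation is not a technical obstacle to be deferred; it is the entire mathematical content of the theorem, and without it (and with the impossible overlap pattern in its place) the proposal does not constitute a proof.
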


In~\cite{lisonek-19}, Lison{\v e}k defines a KS hypergraph in
$\mathbb{C}^n$ (Definition~1). By~Definition~2.1, he defines an
$S$-$H$ matrix and, in Definition~2.2, a generalized Hadamard
matrix. Via \mbox{Theorem \ref{th:Lisonek}}, he connects a KS
hypergraph and a corresponding
$S$-$H$ matrix. The~proof provides an algorithm for a mutual
one-to-one mapping of their elements in any even dimension.
However, the~ details are beyond the scope of the present paper,
and we direct the reader to ref.~\cite{lisonek-19}. Definitions
1, 2.1, 2.2 and Theorem 3 above are from ref.~\cite{lisonek-19}.

Only the 6D $21$-$7$ KS hypergraph was known to Lison{\v e}k in
~\cite{lisonek-19}, i.e.,~only one particular $S$-$H$ matrix,
aside from the {\em existence} of all of them. Our method
generates any of these KS hypergraphs together with their
coordinatization (although the time required to generate the
coordinatization rises exponentially with the dimension, the time
required to generate MMPHs themselves does not). Inversely,
as is the core of this application, it gives the elements
of the corresponding $S$-$H$ matrices. It further demonstrates
that all the corresponding KS hypergraphs are star-like; see
Figure~\ref{fig:stars}. This feature clarifies why $n$ has to
be even: one cannot draw a regular star with the Schl\"afli
symbol \{$n$+1/$\frac{n}{2}$\} in odd-dimensional spaces
because $\frac{n}{2}$ has to be an integer. The~6,- 8-,
and~10-dim star strings and coordinatizations are given in
(\cite{pw-23a}, Appendix). 

\subsubsection{\label{appD}Stabilizer Operations}

Stabilizer operations entail MMPHs
~\cite{magic-14,pavicic-quantum-23}. In~particular,
in~\cite{magic-14}, a~graph is obtained that has a
one-to-one correspondence with the non-KS 30-108 NBMMPH,
shown in \mbox{Figure~\ref{fig:stabilizers}(a)} and obtained
in ref. (\cite{pavicic-quantum-23}, Section~5.4). Its filled
MMPH 232-108, shown in Figure~\ref{fig:stabilizers}(b), is a KS NBMMPH,
but the latter is not critical. It contains just one critical KS
MMPH: the 152-71 one. Notice that it too violates the
$\alpha$-inequality $\alpha=64>\alpha^*=38$
(\cite{pavicic-quantum-23}, Table~5) but~is nevertheless contextual.
When its $m=1$ vertices are dropped, it becomes the 24-71 non-KS
NBMMPH shown in Figure~\ref{fig:stabilizers}(c), which satisfies the
$\alpha$-inequality: $\alpha=5<\alpha^*=6$. Thus,
notwithstanding the $\alpha$-inequality, all these MMPHs are
NBMMPHs, i.e.,~contextual, and~therefore might provide a
path toward finding simpler stabilizer operations by reducing
the original noncritical stabilizer NBMMPHs to smaller,
critical ones.

\begin{figure}[H]
\center
  \includegraphics[width=0.95\textwidth]{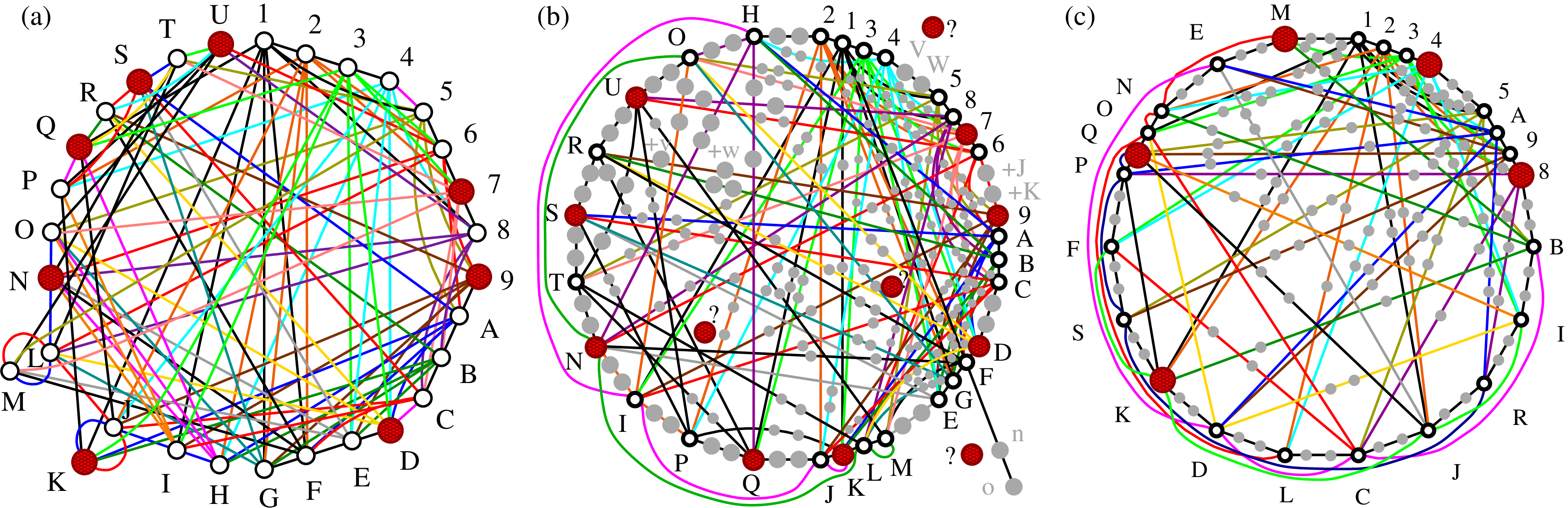}
  \caption{\baselineskip=10.5pt (\textbf{a}) Stabilizer 
 non-KS 30-108 NBMMPH~\cite{magic-14}:
  $\alpha=8<\alpha^*=9$; (\cite{pavicic-quantum-23}, Section~5.4);
  (\textbf{b}) filled (\textbf{a}): noncritical 232-108 KS MMPH
  violates the $\alpha$-inequality: $\alpha=101>\alpha^*=58$ (some
  red vertices that might contribute to $\alpha$ are indicated by
  ``?''; so, e.g., the red dot near vertex {\tt n} at the right
  bottom of the figure indicate that {\tt n} might contribute to
  $\alpha$); (\textbf{c}) critical 152-71 that violates the
  $\alpha$-inequality: $\alpha=64>\alpha^*=38$; with $m=1$ (grey)
  vertices dropped, we have $\overline{\rm subhypergraph}$ 24-71,
  which satisfies the $\alpha$-inequality: $\alpha=5<\alpha^*=6$.}
\label{fig:stabilizers}
\end{figure}

\section{\label{sec:disc}Discussion}

We present a survey and further development,
as well as promising applications, of quantum contextual
sets represented by a type of hypergraph called
McKay--Megill--Pavi\v ci\'c hypergraphs (MMPHs; Definition
\ref{def:MMP-string}). Such a representation is universal as
it can be demonstrated that most presentations of contextual
sets---including polytopes, Lie groups, operators, projectors,
vectors, and~states---have a one-to-one correspondence with
MMPHs~\cite{pavicic-quantum-23}. A~good outline of such
alternative methods is given in ref.~\cite{pwma-19}.

In this paper, we focus on two methods of obtaining contextual
MMPHs: the vector generation of MMPHs from basic vector components
via the {\bf M1} method (see Section~\ref{subsec:nonks}) and
the dimensional upscaling of MMPHs via the {\bf M8} method.

The former method involves creating an internal list of all
possible non-zero vectors containing simple vector components
like $\{0,\pm 1\}$ or $\{0,\pm $i$\}$. From~this list, a~program
(VECFIND) finds a set of all possible mutually orthogonal vector
$n$-tuples of vectors and, from them, it generates an MMPH with
hyperedges corresponding to mutually orthogonal $n$-tuples, which
is called a master MMPH. Contextual MMPHs, i.e.,~NBMMPHs
(\mbox{Definition~\ref{def:n-b}}), are then filtered out by the
program STATES01. These NBMMPHs allow us to obtain a class
(Definition~\ref{df:multi}) of critical
(Definition~\ref{def:critical}) KS MMPHs
(Th.~\ref{def:ks-theorem}) as well as the~non-KS MMPHs
(Definition~\ref{def:nonKSdef}) via {\bf M3} and {\bf M5}.

The latter method consists of using comparatively small KS MMPHs
obtained with the help of the former method to build KS MMPHs
in higher dimensions, which then allow us to recursively obtain
KS MMPHs in increasingly higher dimensions, since the complexity of
the method does not scale with dimension. This~method works by
combining two KS MMPHs from lower dimensions so as to allow
the number of unique vertices in the new combined MMPH to
be minimized and filter out KS MMPHs. We start with $k_1$ vectors
in the $n_1$-dim space and $k_2$ vectors in the $n_2$-dim space and
construct an MMPH with $n$-dim vectors, where $n<n_1+n_2$---vectors
of parent MMPHs are extended with enough vector components
0 to reach dimension $n$. In~the first MMPH, 0s are appended at
the ends of the existing vectors. In~the second MMPH,
the~$n$-$n_2$-dim 0s are prefixed at the start of the vectors, which
ensures that the new MMPH has non-zero vector components in all $n$
cardinal directions in the space. Since the vector components
$\{0,\pm 1\}$ are used, the~number of vectors $k$ of the $n$-dim
space is significantly smaller than $k_1+k_2$ for most parent KS
MMPHs, and almost all of them are KS~MMPHs.

In Sections~\ref{subsubsec:3D}, \ref{subsubsec:4D}, and~\ref{subsubsec:5-8D}, we explore generations of MMPHs in dimensions
three to eight from simple vector components, such as $\{0,\pm1\}$,
and discuss specific new instances that have not previously
been considered. We distinguish two main groups of contextual MMPHs:
Kochen--Specker (KS) MMPHs (Definition~\ref{def:KSdef}) and non-KS MMPHs
(Definition~\ref{def:nonKSdef}). The~KS MMPHs are generated from large
master MMPHs and yield non-KS MMPHs. However, the~generation of
these masters and subsequent smaller MMPHs from them is a task
of exponential complexity, which limits the methods of obtaining them
(Methods {\bf M1,2}, Section~\ref{subsec:nonks}) to dimension eight.
To overcome this limitation, we have developed methods {\bf M7} and
{\bf M8}, whose complexity does not scale with dimension,
and we elaborate on them in dimensions 9 to 32 in
Section~\ref{subsubsec:9-32D} while presenting a new graphical
representation of MMPHs that offers a more comprehensive insight
into their structure. The~minimal number of hyperedges in
non-KS MMPHs repeatedly varies between 8 (odd dimensions)
and 9 (even dimensions). For~KS NBMMPHs, this number
fluctuates between 9 and 16. We provide lists of the NBMMPHs
across all dimensions in Tables~\ref{T:3D}--\ref{T:small2}.

The extensive range of MMPHs has allowed us to identify several
features that they exhibit, as~well as the relations and quantifications
of contextuality that they support. In~Section~\ref{sec:results}, we present
several lemmas and theorems related to these features, along with
two types of statistics (Definitions~\ref{def:raw} and \ref{def:post}).
We also prove that the $\alpha$-inequality between the independence
number ${\alpha}$ (Definition~\ref{def:alpha}) and the
fractional independence (packing) number $\alpha^*$
(Definition~\ref{def:alpha-star-Wolfram}), expressed as
${\alpha}\le\alpha^*$ (Equation~(\ref{eq:alpha-cabelllo})),
does not always hold for quantum
measurements of MMPH states. The~reason is that, in quantum
measurements, all vertices of a hyperedge must be assigned an equal
and constant probability when measured, which leads us to the
Quantum Indeterminacy Postulate \ref{postulate} and two types of
statistics, {\em raw} MMPH {\em data statistics}
(Definition~\ref{def:raw}) and {\em postprocessed} MMPH {\em data
statistics} (Definition~\ref{def:post}), as~well as to two noncontextual
inequalities, the {\em hyperedge inequality} (Equation~(\ref{eq:e-ineq}))
and the {\em postprocessed quantum fractional independence number
inequality} (Equation~(\ref{eq:alpha-alpha-b})), which serve as reliable
discriminators of contextual MMPHs, in contrast to the
$\alpha$-inequality, which does not. In~Figure~\ref{fig:alpha-star},
we show several KS and non-KS MMPHs that violate the
$\alpha$-inequality. Hence, the~$\alpha$-inequality in, e.g.,
\cite{magic-14,cabello-severini-winter-14} does not
{\em reveal} contextuality but~is fortuitously satisfied for
the MMPHs that might already be known as contextual, i.e.,~as
NBMMPHs, what can be reliably verified with the help of our
programs STATES01 and
ONE---cf.~Figure~\ref{fig:stabilizers}(a,b)---by
processing measurement~data.

In Section~\ref{subsec:app}, we propose four possible applications
of contextual~MMPHs.

In Section~\ref{appA}, we consider larger alphabet quantum key
distribution (QKD) protocols in higher dimensions. We note
that conventional quantum contextual protocols, as~proposed in
sources like~\cite{cabello-dambrosio-11}, do not offer
a quantum advantage over the corresponding classical protocols,
due to the impossibility of the existence of such competitive
classical protocols by virtue of the Kochen--Specker theorem.
Therefore, we design a protocol that does provide a quantum
advantage over possible classical competitors such as BB84,
even when expanded to larger alphabets. The~protocol relies
on the Quantum Indeterminacy \mbox{Postulate \ref{postulate}} and
genuine quantum~randomness. 

In Section~\ref{appB}, we extend the previous protocol to a
quantum oblivious communication protocol in higher~dimensions. 

In Section~\ref{appC}, we examine the $S$ class of the Hadamard
transform, which is a special case of the Fourier transform that
underlies most quantum computation algorithms. We establish a
one-to-one correspondence between the elements of the $S$ class
and star-like KS MMPHs, thereby allowing us to derive elements
of an S matrix from a corresponding star-like KS~MMPH.

In Section~\ref{appD}, we explore the application of NBMMPHs in
simplifying stabilizer operations. Specifically, we consider a
recent derivation of a graph from a stabilizer operation and
demonstrate that translating this graph into a non-KS MMPH allows
for the addition of $m=1$ vertices to form a KS MMPH. The~resulting
critical KS MMPH and~its underlying non-KS MMPH (with $m=1$
vertices removed) are significantly smaller than the original
versions. This approach could lead to a reduced set of stabilizer
operations and facilitate the discovery of simpler error correction
codes.

\section{\label{sec:met}Methods}

The methods that we employ to manage quantum contextual sets are
based on algorithms and programs developed within the MMP language, 
including VECFIND, STATES01, MMPSTRIP, MMPSHUFFLE, SUBGRAPH, LOOP, 
SHORTD, and~ONE, as referenced in
~\cite{pavicic-quantum-23}.
These resources are freely accessible
at \url{http://puh.srce.hr/s/Qegixzz2BdjYwFL} (accessed on 6 January 2025). 

MMPHs can be visualized as hypergraph figures, consisting of dots
and lines, and~can also be represented as strings of ASCII
characters. This representation enables simultaneous
processing of billions of MMPHs using supercomputers and clusters.
To facilitate the processing, we have developed additional
dynamical programs to manage and parallelize tasks involving any
number of MMPH vertices and~edges. 

\section{\label{sec:concl}Conclusions}

In summary, based on elaborations of KS and non-KS contextual sets
presented in the literature and further developed in this paper, we
have developed methods for the generation of contextual sets,
revealed their properties, and~designed their applications across
any dimension. We provide examples in dimensions up to 32 and
give reliable discriminators of their contextuality. A~more
detailed summary of the achieved results is given in
Section~\ref{sec:disc}.

\bigskip\bigskip
{\em Acknowledgements}---Supported by the Ministry of Science
and Education of Croatia through the Center of Excellence for
Advanced Materials and Sensing Devices (CEMS) funding and~by
MSE grants No. KK.01.1.1.01.0001 and 533-19-15-0022.
Also supported by the Humboldt Foundation, Germany. 
Computational support was provided by the Zagreb University
Computing Centre.The technical support of Emir
Imamagi\'c from the University of Zagreb Computing Centre 
is gratefully acknowledged.

\bigskip\bigskip
{\em Abbreviations}---The following abbreviations are used
  in this manuscript:

\noindent 
\begin{tabular}{@{}ll}
  MMPH & McKay--Megill--Pavi\v ci\'c hypergraph
         (Definition \ref{def:MMP-string})\\
  NBMMPH & Non-binary McKay--Megill--Pavi\v ci\'c hypergraph
         (Definition \ref{def:n-b})\\
  BMMPH & Binary McKay--Megill--Pavi\v ci\'c hypergraph
         (Definition \ref{def:bin})\\
  KS & Kochen--Specker (Definition \ref{def:KSdef})\\
  non-KS & Non-Kochen--Specker (Definition \ref{def:nonKSdef})\\
  {\bf M1, M2, \dots\ , M8} & Methods 1, 2, \dots\ , 8 (Section
         \ref{subsec:nonks})\\
\end{tabular}

\bigskip
\appendix
\section[\appendixname~\thesection]{\label{app:0} ASCII
  Strings of Non-KS MMPH Classes and
  Their Masters and Supermasters}

   Below, we give the strings and coordinatizations of all MMPHs referred
   to in the main body of the paper. The~first hyperedges in a line
   of a critical NBMMPH often correspond to the largest loops in
   the~figures.

\subsection{\label{app:1} 3-dim~MMPHs}

\parindent=0pt

{\bf 8-7} (KS ``bug'') 123,34,45,567,78,81,26. 

\smallskip

{\bf 13-7} (filled$\,\,${\bf 8-7}) {\tt 123,394,4A5,567,7B8,8C1,2D6.} {\tt 1} = (0,0,1), {\tt 2} = (0,1,0), {\tt 3} = (1,0,0), {\tt 4} = (0,1,1),\linebreak  {\tt 5} = (1,1,$-$1), {\tt 6} = (1,0,1), {\tt 7} = ($-$1,2,1), {\tt 8} = (2,1,0), {\tt 9} = (0,1,$-$1), {\tt A} = (2,$-$1,1), {\tt B} = (1,$-$2,5),\linebreak  {\tt C} = (1,$-$2,0), {\tt D} = (1,0,$-$1).

\smallskip

{{\bf 25-16} (BMMPH) {\tt 123,345,567,789,9AB,BCD,DEF,FGH,HI1,1JK,KLA,JM7,JPD,HN9,3OB,FL5.}\linebreak  {\tt 1} = (0,0,1), {\tt 2} = ($\sqrt{2}$,$-$1,0), {\tt 3} = (1,$\sqrt{2}$,0), {\tt 4} = ($\sqrt{2}$,$-$1,$-$3), {\tt 5} = ($\sqrt{2}$,$-$1,1), {\tt 6} = ($\sqrt{2}$,3,1),\linebreak  {\tt 7} = ($-$1,0,$\sqrt{2}$), {\tt 8} = ($\sqrt{2}$,$-$3,1), {\tt 9} = ($\sqrt{2}$,1,1), {\tt A} = (0,1,$-$1), {\tt B} = ($\sqrt{2}$,$-$1,$-$1), {\tt C} = ($\sqrt{2}$,3,$-$1),\linebreak  {\tt D} = (1,0,$\sqrt{2}$), {\tt E} = ($\sqrt{2}$,$-$3,$-$1), {\tt F} = ($\sqrt{2}$,1,$-$1), {\tt G} = ($\sqrt{2}$,1,3), {\tt H} = ($-$1,$\sqrt{2}$,0), {\tt I} = ($\sqrt{2}$,1,0),\linebreak  {\tt J} = (0,1,0), {\tt K} = (1,0,0), {\tt L} = (0,1,1), {\tt M} = ($\sqrt{2}$,0,1), {\tt N} = ($\sqrt{2}$,1,$-$3), {\tt O} = ($\sqrt{2}$,$-$1,3), {\tt P} = ($\sqrt{2}$,0,$-$1).}

\smallskip
{\bf 95-66} {\tt 123,145,167,189,1AB,1CD,1EF,2GH,2IJ,2KL,2MN,2OP,2QR,ST3,SUV,WX3,WYZ,Wab,\linebreak Xcd,Xef,KgU,Khb,MiY,Mjk,lQa,mEe,nEo,pAc,qAr,stu,svL,sw5,xQy,z!L,z"7,t\#N,t\$7,8\%\&,\linebreak 8'f,()*, (-9,(/H,:;9,:\textless J,=)B,\textgreater ?B,)@J,?/\textless ,[$\backslash$N,]C\^{},\_Cd,`OV,\{OZ,$\backslash$w",|"R, \}\textless F,$\sim$*F,\linebreak +1/D,+2@D,+3Ru, +4wP,+5\$P,kaV,yYU,re\^{},co\&.} {\tt 1} = (0,0,1), {\tt 2} = (0,1,0), {\tt S} = (0,1,$-2^{\frac{1}{2}}\phi^{\frac{3}{2}}$),\linebreak  {\tt W} = (0,1,$2^{\frac{1}{2}}\phi^{-\frac{1}{2}}$), {\tt T} = (0,$2^{\frac{1}{2}}\phi^{\frac{3}{2}}$,1), {\tt X} = (0,$-2^{\frac{1}{2}}\phi^{-\frac{1}{2}}$,1),  {\tt 3} = (1,0,0),  {\tt G} = (1,0,$2^{\frac{1}{2}}\phi^{\frac{3}{2}}$),\hfil  {\tt I} = (1,0,$-2^{\frac{1}{2}}\phi^{-\frac{1}{2}}$),\linebreak   {\tt K}=(1,0,$5^{\frac{1}{4}}\phi^{\frac{3}{2}}$), {\tt M}=(1,0,$5^{\frac{1}{4}}\phi^{-\frac{3}{2}}$), {\tt i}=($2^{-1}5^{\frac{1}{4}}\phi^{-\frac{5}{2}}$,$-2^{\frac{1}{2}}\phi^{-\frac{5}{2}}$,$-2^{-1}\phi^{-1}$), {\tt l}=($2^{-1}5^{\frac{1}{4}}\phi^{-\frac{5}{2}}$,$2^{\frac{1}{2}}\phi^{-\frac{5}{2}}$,$2^{-1}\phi^{-1}$),\linebreak {\tt m} = ($2^{-1}5^{\frac{1}{4}}\phi^{-\frac{5}{2}}$,$2^{-1}\phi^{-1}$,$-2^{\frac{1}{2}}\phi^{-\frac{5}{2}}$), {\tt n} = ($2^{-1}5^{\frac{1}{4}}\phi^{-\frac{5}{2}}$,$2^{-1}\phi^{-1}$,$2^{\frac{1}{2}}\phi^{-\frac{1}{2}}$), {\tt 4} = (1,$2^{\frac{1}{2}}\phi^{\frac{3}{2}}$,0),\linebreak {\tt p} = ($2^{-1}5^{\frac{1}{4}}\phi^{-\frac{5}{2}}$,$-2^{-1}\phi^{-1}$,$2^{\frac{1}{2}}\phi^{-\frac{5}{2}}$), {\tt q} = ($2^{-1}5^{\frac{1}{4}}\phi^{-\frac{5}{2}}$,$-2^{-1}\phi^{-1}$,$-2^{\frac{1}{2}}\phi^{-\frac{1}{2}}$), {\tt s} = (1,$2^{\frac{1}{2}}\phi^{\frac{3}{2}}$,$5^{\frac{1}{4}}\phi^{\frac{3}{2}}$),\linebreak  {\tt j} = ($2^{-1}5^{\frac{1}{4}}\phi^{-\frac{5}{2}}$,$2^{\frac{1}{2}}\phi^{-\frac{1}{2}}$,$-2^{-1}\phi^{-1}$), {\tt x} = ($2^{-1}5^{\frac{1}{4}}\phi^{-\frac{5}{2}}$,$-2^{\frac{1}{2}}\phi^{-\frac{1}{2}}$,$2^{-1}\phi^{-1}$), {\tt 6} = (1,$-2^{\frac{1}{2}}\phi^{-\frac{1}{2}}$,0),\linebreak {\tt z} = (1,$-2^{\frac{1}{2}}\phi^{-\frac{1}{2}}$,$5^{\frac{1}{4}}\phi^{\frac{3}{2}}$), {\tt t} = (1,$-2^{\frac{1}{2}}\phi^{-\frac{1}{2}}$,$5^{\frac{1}{4}}\phi^{-\frac{3}{2}}$), {\tt 8} = (1,$5^{\frac{1}{4}}\phi^{\frac{3}{2}}$,0), {\tt (} = (1,$5^{\frac{1}{4}}\phi^{\frac{3}{2}}$,$2^{\frac{1}{2}}\phi^{\frac{3}{2}}$),\linebreak {\tt :} = (1,$5^{\frac{1}{4}}\phi^{\frac{3}{2}}$,$-2^{\frac{1}{2}}\phi^{-\frac{1}{2}}$), \ {\tt  =} = ($2^{-1}\phi^{-1}$,$2^{-1}5^{\frac{1}{4}}\phi^{-\frac{5}{2}}$,$2^{\frac{1}{2}}\phi^{-\frac{5}{2}}$), {\tt \textgreater } = ($2^{-1}\phi^{-1}$,$2^{-1}5^{\frac{1}{4}}\phi^{-\frac{5}{2}}$,$-2^{\frac{1}{2}}\phi^{-\frac{1}{2}}$),\linebreak  {\tt A} = (1,$5^{\frac{1}{4}}\phi^{-\frac{3}{2}}$,0), {\tt \#} = ($2^{-1}\phi^{-1}$,$2^{\frac{1}{2}}\phi^{-\frac{5}{2}}$,$2^{-1}5^{\frac{1}{4}}\phi^{-\frac{5}{2}}$), {\tt )} = (1,$5^{\frac{1}{4}}\phi^{-\frac{3}{2}}$,$-2^{\frac{1}{2}}\phi^{-\frac{1}{2}}$),\linebreak  {\tt -} = ($2^{-1}\phi^{-1}$,$2^{-1}5^{\frac{1}{4}}\phi^{\frac{1}{2}}$,$-2^{\frac{1}{2}}\phi^{-\frac{1}{2}}$), {\tt ;} = ($2^{-1}\phi^{-1}$,$2^{-1}5^{\frac{1}{4}}\phi^{\frac{1}{2}}$,$2^{\frac{1}{2}}\phi^{\frac{3}{2}}$), {\tt ?} = (1,$5^{\frac{1}{4}}\phi^{-\frac{3}{2}}$,$2^{\frac{1}{2}}\phi^{-\frac{5}{2}}$),\linebreak  {\tt [} = ($2^{-1}\phi^{-1}$,$-2^{\frac{1}{2}}\phi^{-\frac{1}{2}}$,$2^{-1}5^{\frac{1}{4}}\phi^{-\frac{5}{2}}$),\hfil  {\tt v} = ($2^{-1}\phi^{-1}$,$-2^{\frac{1}{2}}\phi^{-\frac{1}{2}}$,$2^{-1}5^{\frac{1}{4}}\phi^{\frac{1}{2}}$), {\tt !} = ($2^{-1}\phi^{-1}$,$2^{\frac{1}{2}}\phi^{\frac{3}{2}}$,$2^{-1}5^{\frac{1}{4}}\phi^{\frac{1}{2}}$),\linebreak  {\tt ]} = ($2^{-1}5^{\frac{1}{4}}\phi^{\frac{1}{2}}$,$2^{-1}\phi^{-1}$,$2^{\frac{1}{2}}\phi^{-\frac{1}{2}}$), {\tt \_} = ($2^{-1}5^{\frac{1}{4}}\phi^{\frac{1}{2}}$,$2^{-1}\phi^{-1}$,$-2^{\frac{1}{2}}\phi^{\frac{3}{2}}$), {\tt \%} = ($2^{-1}5^{\frac{1}{4}}\phi^{\frac{1}{2}}$,$-2^{-1}\phi^{-1}$,$-2^{\frac{1}{2}}\phi^{-\frac{1}{2}}$),\linebreak   {\tt '} = ($2^{-1}5^{\frac{1}{4}}\phi^{\frac{1}{2}}$,$-2^{-1}\phi^{-1}$,$2^{\frac{1}{2}}\phi^{\frac{3}{2}}$), {\tt g} = ($2^{-1}5^{\frac{1}{4}}\phi^{\frac{1}{2}}$,$2^{\frac{1}{2}}\phi^{-\frac{1}{2}}$,$-2^{-1}\phi^{-1}$), {\tt h} = ($2^{-1}5^{\frac{1}{4}}\phi^{\frac{1}{2}}$,$-2^{\frac{1}{2}}\phi^{\frac{3}{2}}$,$-2^{-1}\phi^{-1}$),\linebreak {\tt `} = ($2^{-1}5^{\frac{1}{4}}\phi^{\frac{1}{2}}$,$-2^{\frac{1}{2}}\phi^{-\frac{1}{2}}$,$2^{-1}\phi^{-1}$), {\tt \{} = ($2^{-1}5^{\frac{1}{4}}\phi^{\frac{1}{2}}$,$2^{\frac{1}{2}}\phi^{\frac{3}{2}}$,$2^{-1}\phi^{-1}$), {\tt $\backslash$} = (1,$2^{\frac{1}{2}}\phi^{-\frac{5}{2}}$,$5^{\frac{1}{4}}\phi^{-\frac{3}{2}}$),\linebreak {\tt |} = ($-2^{-1}\phi^{-1}$,$-2^{\frac{1}{2}}\phi^{-\frac{5}{2}}$,$2^{-1}5^{\frac{1}{4}}\phi^{-\frac{5}{2}}$),\hfil {\tt \}} = ($-2^{-1}\phi^{-1}$,$2^{-1}5^{\frac{1}{4}}\phi^{-\frac{5}{2}}$,$-2^{\frac{1}{2}}\phi^{-\frac{5}{2}}$),\linebreak  {\tt $\sim$} = ($-2^{-1}\phi^{-1}$,$2^{-1}5^{\frac{1}{4}}\phi^{-\frac{5}{2}}$,$2^{\frac{1}{2}}\phi^{-\frac{1}{2}}$),\hfil  {\tt +1} = ($-2^{-1}\phi^{-1}$,$2^{-1}5^{\frac{1}{4}}\phi^{\frac{1}{2}}$,$2^{\frac{1}{2}}\phi^{-\frac{1}{2}}$),\linebreak  {\tt +2} = ($-2^{-1}\phi^{-1}$,$2^{-1}5^{\frac{1}{4}}\phi^{\frac{1}{2}}$,$-2^{\frac{1}{2}}\phi^{\frac{3}{2}}$),\hfil {\tt +3} = ($-2^{-1}\phi^{-1}$,$2^{\frac{1}{2}}\phi^{-\frac{1}{2}}$,$2^{-1}5^{\frac{1}{4}}\phi^{-\frac{5}{2}}$),\linebreak  {\tt +4} = ($-2^{-1}\phi^{-1}$,$2^{\frac{1}{2}}\phi^{-\frac{1}{2}}$,$2^{-1}5^{\frac{1}{4}}\phi^{\frac{1}{2}}$),\hfil {\tt +5} = ($-2^{-1}\phi^{-1}$,$-2^{\frac{1}{2}}\phi^{\frac{3}{2}}$,$2^{-1}5^{\frac{1}{4}}\phi^{\frac{1}{2}}$),\hfil {\tt O} = ($-$1,0,$5^{\frac{1}{4}}\phi^{\frac{3}{2}}$),\linebreak {\tt Q} = ($-$1,0,$5^{\frac{1}{4}}\phi^{-\frac{3}{2}}$),\hfil  {\tt C} = ($-$1,$5^{\frac{1}{4}}\phi^{\frac{3}{2}}$,0),\hfil  {\tt /} = ($-$1,$5^{\frac{1}{4}}\phi^{\frac{3}{2}}$,$-2^{\frac{1}{2}}\phi^{\frac{3}{2}}$),\hfil  {\tt @} = ($-$1,$5^{\frac{1}{4}}\phi^{\frac{3}{2}}$,$2^{\frac{1}{2}}\phi^{-\frac{1}{2}}$),\linebreak {\tt k} = ($5^{\frac{1}{4}}\phi^{-\frac{3}{2}}$,$-2^{\frac{1}{2}}\phi^{-\frac{5}{2}}$,$-$1),\hfil  {\tt w} = ($-$1,$-2^{\frac{1}{2}}\phi^{\frac{3}{2}}$,$5^{\frac{1}{4}}\phi^{\frac{3}{2}}$),\hfil  {\tt \$} = ($-$1,$2^{\frac{1}{2}}\phi^{-\frac{1}{2}}$,$5^{\frac{1}{4}}\phi^{\frac{3}{2}}$),\linebreak {\tt "} = ($-$1,$2^{\frac{1}{2}}\phi^{-\frac{1}{2}}$,$5^{\frac{1}{4}}\phi^{-\frac{3}{2}}$), {\tt y} = ($5^{\frac{1}{4}}\phi^{-\frac{3}{2}}$,$2^{\frac{1}{2}}\phi^{-\frac{5}{2}}$,1), {\tt E} = ($-$1,$5^{\frac{1}{4}}\phi^{-\frac{3}{2}}$,0), {\tt \textless } = ($-$1,$5^{\frac{1}{4}}\phi^{-\frac{3}{2}}$,$2^{\frac{1}{2}}\phi^{-\frac{1}{2}}$),\linebreak {\tt Y} = ($5^{\frac{1}{4}}\phi^{-\frac{3}{2}}$,$2^{\frac{1}{2}}\phi^{-\frac{1}{2}}$,$-$1),\hfil  {\tt *} = ($-$1,$5^{\frac{1}{4}}\phi^{-\frac{3}{2}}$,$-2^{\frac{1}{2}}\phi^{-\frac{5}{2}}$),\hfil  {\tt a} = ($5^{\frac{1}{4}}\phi^{-\frac{3}{2}}$,$-2^{\frac{1}{2}}\phi^{-\frac{1}{2}}$,1),\linebreak {\tt r} = ($5^{\frac{1}{4}}\phi^{-\frac{3}{2}}$,$-$1,$2^{\frac{1}{2}}\phi^{-\frac{5}{2}}$), {\tt c} = ($5^{\frac{1}{4}}\phi^{-\frac{3}{2}}$,$-$1,$-2^{\frac{1}{2}}\phi^{-\frac{1}{2}}$), {\tt B} = ($5^{\frac{1}{4}}\phi^{-\frac{3}{2}}$,$-$1,0), {\tt o} = ($5^{\frac{1}{4}}\phi^{-\frac{3}{2}}$,1,$-2^{\frac{1}{2}}\phi^{-\frac{5}{2}}$), {\tt e} = ($5^{\frac{1}{4}}\phi^{-\frac{3}{2}}$,1,$2^{\frac{1}{2}}\phi^{-\frac{1}{2}}$),\hfil  {\tt F} = ($5^{\frac{1}{4}}\phi^{-\frac{3}{2}}$,1,0),\hfil  {\tt N} = ($5^{\frac{1}{4}}\phi^{-\frac{3}{2}}$,0,$-$1),\hfil  {\tt R} = ($5^{\frac{1}{4}}\phi^{-\frac{3}{2}}$,0,1),\linebreak {\tt u} = ($-$1,$-2^{\frac{1}{2}}\phi^{-\frac{5}{2}}$,$5^{\frac{1}{4}}\phi^{-\frac{3}{2}}$), {\tt H} = ($-2^{\frac{1}{2}}\phi^{\frac{3}{2}}$,0,1), {\tt 5} = ($-2^{\frac{1}{2}}\phi^{\frac{3}{2}}$,1,0), {\tt b} = ($5^{\frac{1}{4}}\phi^{\frac{3}{2}}$,$2^{\frac{1}{2}}\phi^{-\frac{1}{2}}$,$-$1),\linebreak {\tt U} = ($5^{\frac{1}{4}}\phi^{\frac{3}{2}}$,$-2^{\frac{1}{2}}\phi^{\frac{3}{2}}$,$-$1), {\tt Z} = ($5^{\frac{1}{4}}\phi^{\frac{3}{2}}$,$-2^{\frac{1}{2}}\phi^{-\frac{1}{2}}$,1), {\tt V} = ($5^{\frac{1}{4}}\phi^{\frac{3}{2}}$,$2^{\frac{1}{2}}\phi^{\frac{3}{2}}$,1), {\tt f} = ($5^{\frac{1}{4}}\phi^{\frac{3}{2}}$,$-$1,$-2^{\frac{1}{2}}\phi^{-\frac{1}{2}}$),\linebreak  {\tt \&} = ($5^{\frac{1}{4}}\phi^{\frac{3}{2}}$,$-$1,$2^{\frac{1}{2}}\phi^{\frac{3}{2}}$), {\tt 9} = ($5^{\frac{1}{4}}\phi^{\frac{3}{2}}$,$-$1,0), {\tt d} = ($5^{\frac{1}{4}}\phi^{\frac{3}{2}}$,1,$2^{\frac{1}{2}}\phi^{-\frac{1}{2}}$), {\tt \^{}} = ($5^{\frac{1}{4}}\phi^{\frac{3}{2}}$,1,$-2^{\frac{1}{2}}\phi^{\frac{3}{2}}$),\linebreak  {\tt D} = ($5^{\frac{1}{4}}\phi^{\frac{3}{2}}$,1,0),\hfill  {\tt L} = ($5^{\frac{1}{4}}\phi^{\frac{3}{2}}$,0,$-$1),\hfill  {\tt P} = ($5^{\frac{1}{4}}\phi^{\frac{3}{2}}$,0,1),\hfill  {\tt 7} = ($2^{\frac{1}{2}}\phi^{-\frac{1}{2}}$,1,0),\hfill  {\tt J} = ($2^{\frac{1}{2}}\phi^{-\frac{1}{2}}$,0,1).

\subsection{\label{app:2} 4-dim~MMPHs}

\medskip
{\bf 10-7} \qquad {\tt 1234,56,764,89A7,853,91,A2.}

\medskip
{\bf 24-13} {\tt LMNO,HIJK,EFGK,BCDG,9ADJ,78NO,5678,56AC,34BF,249I,1234,6EHM,146L}.\linebreak  {\tt 1} = (0,1,$-$1,0), {\tt 2} = (1,0,0,1), {\tt 3} = (1,0,0,$-$1), {\tt 4} = (0,1,1,0), {\tt 5} = (0,1,0,0), {\tt 6} = (0,0,0,1), {\tt 7} = (1,0,$i$,0), {\tt 8} = ($i$,0,1,0), {\tt 9} = (1,$-$1,1,$-$1), {\tt A} = (1,0,$-$1,0), {\tt B} = (1,1,$-$1,1), {\tt C} = (1,0,1,0), {\tt D} = (0,1,0,$-$1),\linebreak  {\tt E} = (1,1,0,0), {\tt F} = (1,$-$1,1,1), {\tt G} = ($-$1,1,1,1), {\tt H} = (1,$-$1,0,0), {\tt I} = (1,1,$-$1,$-$1), {\tt J} = (1,1,1,1),\linebreak  {\tt K} = (0,0,1,$-$1), {\tt L} = (1,0,0,0), {\tt M} = (0,0,1,0), {\tt N} = (0,1,0,$i$), {\tt O} = (0,$i$,0,1).

\bigskip
{\bf 25-15} {\tt EC1N,NOPQ,PQLM,MG5A,A98K,KJ67,6734,4HBE,ECDB,ED59,12KO,LMGF,MF4,\linebreak 8234,KJH.} {\tt 1} = (1,$-$1,1,1), {\tt 2} = (1,0,$-$1,0), {\tt 3} = (0,1,0,0), {\tt 4} = (0,0,0,1), {\tt 5} = (1,$-$1,1,$-$1),\linebreak  {\tt 6} = ($i$,0,1,0), {\tt 7} = (1,0,$i$,0), {\tt 8} = (1,0,1,0), {\tt 9} = (1,1,$-$1,$-$1), {\tt A} = (1,$-$1,$-$1,1), {\tt B} = (0,1,$-$1,0),\linebreak  {\tt C} = (1,0,0,$-$1), {\tt D} = (1,0,0,1), {\tt E} = (0,1,1,0), {\tt F} = (1,$-$1,0,0), {\tt G} = (0,0,1,1), {\tt H} = (1,0,0,0),\linebreak  {\tt J} = (0,1,0,$-$1), {\tt K} = (0,1,0,1), {\tt L} = (0,0,1,$-$1), {\tt M} = (1,1,0,0), {\tt N} = (1,1,$-$1,1), {\tt O} = (1,1,1,$-$1),\linebreak  {\tt P} = (1,$-$1,$i$,$i$), {\tt Q} = ($-$1,1,$i$,$i$).

\bigskip
{\bf 29-16} {\tt QRST,MNOP,IJKL,EFGH,BCDL,9ADP,8ACT,7BGH,69FH,58EH,347L,246P,145T,\linebreak 12BK,139O,238S.} {\tt 1} = (1,1,1,1), {\tt 2} = (1,1,$-$1,$-$1), {\tt 3} = (1,$-$1,1,$-$1), {\tt 4} = (1,$-$1,$-$1,1),\linebreak  {\tt 5} = (1,0,0,$-$1), {\tt 6} = (1,0,1,0), {\tt 7} = (0,0,1,1), {\tt 8} = (1,0,0,1), {\tt 9} = (1,0,$-$1,0), {\tt A} = (1,1,1,$-$1),\linebreak  {\tt B} = (0,0,1,$-$1), {\tt C} = ($-$1,1,1,1), {\tt D} = (1,$-$1,1,1), {\tt E} = (0,0,1,0), {\tt F} = (0,0,0,1), {\tt G} = (1,0,0,0),\linebreak  {\tt H} = (0,1,0,0), {\tt I} = (0,0,2,$-$1), {\tt J} = (0,0,1,2), {\tt K} = (1,$-$1,0,0), {\tt L} = (1,1,0,0), {\tt M} = (1,0,2,0),\linebreak  {\tt N} = (2,0,$-$1,0), {\tt O} = (0,1,0,$-$1), {\tt P} = (0,1,0,1), {\tt Q} = (1,0,0,2), {\tt R} = (2,0,0,$-$1), {\tt S} = (0,1,1,0),\linebreak  {\tt T} = (0,1,$-$1,0).

\bigskip
{\bf 30-16} {\tt 1234,4567,789A,ABCD,DEFG,GHI1,1JB9,GQ68,3KM5,EPNC,IJL5,COQH,4LPN,\linebreak DOKM,FTUQ,JRS2.} {\tt 1} = (0,0,0,1), {\tt 2} = (1,0,0,0), {\tt 3} = (0,1,$-$1,0), {\tt 4} = (0,1,1,0), {\tt 5} = (1,0,0,1),\linebreak  {\tt 6} = (1,1,$-$1,$-$1), {\tt 7} = (1,$-$1,1,$-$1), {\tt 8} = (1,$-$1,$-$1,1), {\tt 9} = (1,1,0,0), {\tt A} = (0,0,1,1), {\tt B} = (1,$-$1,0,0), {\tt C} = (1,1,1,$-$1), {\tt D} = (1,1,$-$1,1), {\tt E} = ($-$1,1,1,1), {\tt F} = (0,1,0,$-$1), {\tt G} = (1,0,1,0), {\tt H} = (1,0,$-$1,0),\linebreak  {\tt I} = (0,1,0,0), {\tt J} = (0,0,1,0), {\tt K} = ($-$1,i,i,1), {\tt L} = (1,0,0,$-$1), {\tt M} = (1,i,i,$-$1), {\tt N} = (1,$-$i,i,1),\linebreak  {\tt O} = (1,$-$1,1,1),\hfill  {\tt P} = (1,i,$-$i,1),\hfil  {\tt Q} = (0,1,0,1),\hfill  {\tt R} = (0,1,0,i),\hfill  {\tt S} = (0,i,0,1),\hfill  {\tt T} = (1,0,i,0),\hfill  {\tt U} = (i,0,1,0).

\subsubsection{\label{app:2a} New 4-dim MMPH Masters,
   Their Coordinatizations, and~Their Distributions}

\medskip
{\bf 60-72} {master from $\{0,\pm 1,\phi\}$ or from
  $\{0,\pm\phi,\frac{1}{\phi}\}$}:

\medskip
{\tt 1234,1256,1278,129A,13BC,13DE,13FG,1HI4,1JK4,1LM4,23NO,23PQ,23RS,2TU4,2VW4,2XY4,\linebreak Za34,Za56,Za78,Za9A,Z5bc,Zde6,afg6,ah6i,a5jk,alm6,no34,no56,no78,no9A,pq34,pq56,\linebreak pq78,pq9A,TUBC,TUDE,TUFG,TBmc,TCdj,UrCs,UBek,UClb,UCtu,VWBC,VWDE,VWFG,HINO,HIPQ,\linebreak HIRS,HNmb,HOej,IvOw,INdk,IOlc,IOxy,JKNO,JKPQ,JKRS,fgbc,XYBC,XYDE,XYFG,LMNO,LMPQ,\linebreak LMRS,hbci,rmcs,vmbw,lmbc,dejk,mbxy,mctu.}

\medskip
Coordinatization of {\bf 60-72} obtained from $\{0,\pm 1,\phi\}$:
\medskip

{\tt 1} = (0,0,0,1), {\tt 2} = (0,0,1,0), {\tt Z} = (0,0,1,1), {\tt a} = (0,0,1,$-$1), {\tt n} = (0,0,1,$\phi$), {\tt p} = (0,0,$-$1,$\phi$),\linebreak  {\tt q} = (0,0,$\phi$,1), {\tt o} = (0,0,$\phi$,$-$1), {\tt 3} = (0,1,0,0), {\tt T} = (0,1,0,1), {\tt U} = (0,1,0,$-$1), {\tt V} = (0,1,0,$\phi$), {\tt H} = (0,1,1,0), {\tt I} = (0,1,$-$1,0), {\tt J} = (0,1,$\phi$,0), {\tt f} = ($\phi$,$\phi$,$-$1,$-$1), {\tt X} = (0,$-$1,0,$\phi$), {\tt L} = (0,$-$1,$\phi$,0), {\tt h} = ($\phi$,$\phi$,1,1),\linebreak  {\tt Y} = (0,$\phi$,0,1), {\tt W} = (0,$\phi$,0,$-$1), {\tt M} = (0,$\phi$,1,0), {\tt r} = ($\phi$,$-$1,$\phi$,$-$1), {\tt K} = (0,$\phi$,$-$1,0), {\tt v} = ($\phi$,$-$1,$-$1,$\phi$),\linebreak  {\tt 4} = (1,0,0,0), {\tt N} = (1,0,0,1), {\tt O} = (1,0,0,$-$1), {\tt P} = (1,0,0,$\phi$), {\tt B} = (1,0,1,0), {\tt C} = (1,0,$-$1,0),\linebreak  {\tt D} = (1,0,$\phi$,0), {\tt 5} = (1,1,0,0), {\tt l} = (1,1,1,1), {\tt d} = (1,1,1,$-$1), {\tt e} = (1,1,$-$1,1), {\tt m} = (1,1,$-$1,$-$1),\linebreak  {\tt g} = (1,1,$\phi$,$\phi$), {\tt 6} = (1,$-$1,0,0), {\tt j} = (1,$-$1,1,1), {\tt b} = (1,$-$1,1,$-$1), {\tt c} = (1,$-$1,$-$1,1), {\tt k} = ($-$1,1,1,1),\linebreak  {\tt 7} = (1,$\phi$,0,0), {\tt s} = (1,$\phi$,1,$\phi$), {\tt w} = (1,$\phi$,$\phi$,1), {\tt 8} = ($\phi$,$-$1,0,0), {\tt R} = ($-$1,0,0,$\phi$), {\tt F} = ($-$1,0,$\phi$,0),\linebreak  {\tt t} = ($\phi$,1,$\phi$,1),  {\tt i} = ($-$1,$-$1,$\phi$,$\phi$), {\tt 9} = ($-$1,$\phi$,0,0), {\tt u} = ($-$1,$\phi$,$-$1,$\phi$), {\tt x} = ($\phi$,1,1,$\phi$), \ {\tt y} = ($-$1,$\phi$,$\phi$,$-$1), \ {\tt S} = ($\phi$,0,0,1), \ {\tt Q} = ($\phi$,0,0,$-$1), \ {\tt G} = ($\phi$,0,1,0), \ {\tt E} = ($\phi$,0,$-$1,0), \ {\tt A} = ($\phi$,1,0,0).

\medskip
Coordinatization of {\bf 60-72} obtained from
$\{0,\pm\phi,\frac{1}{\phi}\}$:

\medskip
{\tt 1} = (0,0,0,$\phi$), {\tt 2} = (0,0,$\phi$,0), {\tt Z} = (0,0,$\phi$,$\phi$), {\tt a} = (0,0,$\phi$,$-$$\phi$), {\tt n} = (0,0,$\phi$,1/$\phi$), {\tt p} = (0,0,$-$$\phi$,1/$\phi$),\linebreak {\tt q} = (0,0,1/$\phi$,$\phi$), {\tt o} = (0,0,1/$\phi$,$-$$\phi$), {\tt 3} = (0,$\phi$,0,0), {\tt T} = (0,$\phi$,0,$\phi$), {\tt U} = (0,$\phi$,0,$-$$\phi$), {\tt V} = (0,$\phi$,0,1/$\phi$), {\tt H} = (0,$\phi$,$\phi$,0), {\tt I} = (0,$\phi$,$-$$\phi$,0), {\tt J} = (0,$\phi$,1/$\phi$,0), {\tt f} = (1/$\phi$,1/$\phi$,$-$$\phi$,$-$$\phi$), {\tt X} = (0,$-$$\phi$,0,1/$\phi$),\linebreak  {\tt L} = (0,$-$$\phi$,1/$\phi$,0), {\tt h} = (1/$\phi$,1/$\phi$,$\phi$,$\phi$), {\tt Y} = (0,1/$\phi$,0,$\phi$), {\tt W} = (0,1/$\phi$,0,$-$$\phi$), {\tt M} = (0,1/$\phi$,$\phi$,0),\linebreak  {\tt r} = (1/$\phi$,$-$$\phi$,1/$\phi$,$-$$\phi$), {\tt K} = (0,1/$\phi$,$-$$\phi$,0), {\tt v} = (1/$\phi$,$-$$\phi$,$-$$\phi$,1/$\phi$), {\tt 4} = ($\phi$,0,0,0), {\tt N} = ($\phi$,0,0,$\phi$),\linebreak  {\tt O} = ($\phi$,0,0,$-$$\phi$), {\tt P} = ($\phi$,0,0,1/$\phi$), {\tt B} = ($\phi$,0,$\phi$,0), {\tt C} = ($\phi$,0,$-$$\phi$,0), {\tt D} = ($\phi$,0,1/$\phi$,0), {\tt 5} = ($\phi$,$\phi$,0,0),\linebreak  {\tt l} = ($\phi$,$\phi$,$\phi$,$\phi$), {\tt d} = ($\phi$,$\phi$,$\phi$,$-$$\phi$), {\tt e} = ($\phi$,$\phi$
,$-$$\phi$,$\phi$), {\tt m} = ($\phi$,$\phi$,$-$$\phi$,$-$$\phi$), {\tt g} = ($\phi$,$\phi$,1/$\phi$,1/$\phi$),\linebreak  {\tt 6} = ($\phi$,$-$$\phi$,0,0), {\tt j} = ($\phi$,$-$$\phi$,$\phi$,$\phi$), {\tt b} = ($\phi$,$-$$\phi$,$\phi$,$-$$\phi$), {\tt c} = ($\phi$,$-$$\phi$,$-$$\phi$,$\phi$), {\tt k} = ($-$$\phi$,$\phi$,$\phi$,$\phi$),\linebreak  {\tt 7} = ($\phi$,1/$\phi$,0,0), {\tt s} = ($\phi$,1/$\phi$,$\phi$,1/$\phi$), {\tt w} = ($\phi$,1/$\phi$,1/$\phi$,$\phi$), {\tt 8} = (1/$\phi$,$-$$\phi$,0,0), {\tt R} = ($-$$\phi$,0,0,1/$\phi$),\linebreak {\tt F} = ($-$$\phi$,0,1/$\phi$,0), \ {\tt t} = (1/$\phi$,$\phi$,1/$\phi$,$\phi$), \ {\tt i} = ($-$$\phi$,$-$$\phi$,1/$\phi$,1/$\phi$), \ {\tt 9} = ($-$$\phi$,1/$\phi$,0,0),\linebreak  {\tt u} = ($-$$\phi$,1/$\phi$,$-$$\phi$,1/$\phi$), {\tt x} = (1/$\phi$,$\phi$,$\phi$,1/$\phi$), {\tt y} = ($-$$\phi$,1/$\phi$,1/$\phi$,$-$$\phi$), {\tt S} = (1/$\phi$,0,0,$\phi$),\linebreak  {\tt Q} = (1/$\phi$,0,0,$-$$\phi$), \ {\tt G} = (1/$\phi$,0,$\phi$,0), \ {\tt E} = (1/$\phi$,0,$-$$\phi$,0), \ {\tt A} = (1/$\phi$,$\phi$,0,0).

\medskip
{\bf 86-152} {master from $\{0,\pm 1,i\}$ or from
  $\{0,\pm i,1\}$}:

\medskip
{\tt 1234,1256,1278,139A,13BC,1DE4,1FG4,23HI,23JK,2LM4,2NO4,PQ34,PQ56,PQ78,P5RS,PTU6,\linebreak P6VW,P7XY,P7Za,Pbc8,P8de,Q5fg,Q5hi,Qjk6,Ql6m,Q7no,Q7pq,Qrs8,Qt8u,vw34,vw56,vw78,\linebreak v5xy,v5z!,v"6\#,v\$6\%,v7\&',v(8),w5*-,w5/:,w;6$<$,w=6$>$,w7?@,w[8$\backslash$,LM9A,LMBC,L9kS,LATf,\linebreak LA@),LB:\%,LBZu,L=zC,LCqe,M9Ug,M9(?,MAjR,MA]\^{},MB!$>$,MBpd,M\$/C,MtCa,NO9A,NOBC,N9x$<$,\linebreak N9cn,NA*\#,NArY,NB\&$\backslash$,NhCW,DEHI,DEJK,DHkR,DIUf,DI'$\backslash$,DJy$<$,DJau,D"*K,DKqd,EHTg,EH[\&,\linebreak EIjS,EI\_`,EJ-\#,EJpe,E;xK,EtKZ,FGHI,FGJK,FH/\%,FHbn,FIz$>$,FIrX,FJ?),FhKV,O9"-,O9sX,\linebreak OA;y,OAbo,OBiV,O[C',GH=!,GHsY,GI\$:,GIco,GJiW,G(K@,jkRS,TUfg,TUhi,Tf(?,Tg'$\backslash$,;xy$<$,\linebreak ;xau,;z!$<$,;cyn,Uf[\&,Ug@),kR\_`,kS]\^{},"*-\#,"*pe,"/:\#,"r-Y,\$*-\%,\$/:\%,\$/Zu,\$b:n,=xy$>$,\linebreak =z!$>$,=zpd,=r!X,lRSm,fgVW,*s\#X,xb$<$o,zs$>$Y,/c\%o,hiVW,rsXY,rsZa,bcno,bcpq,(?@),[\&'$\backslash$,\linebreak ty$<$Z,t:\%a,tXYu,tZau,-\#qd,!$>$qe,node,pqde.}

\medskip
Coordinatization of {\bf 86-152} obtained from $\{0,\pm 1,i\}$:

\medskip
{\tt 1} = (0,0,0,1), {\tt 2} = (0,0,1,0), {\tt P} = (0,0,1,1), {\tt Q} = (0,0,1,$-$1), {\tt v} = (0,0,1,i), {\tt w} = (0,0,i,1),\linebreak  {\tt 3} = (0,1,0,0), {\tt L} = (0,1,0,1), {\tt M} = (0,1,0,$-$1), {\tt N} = (0,1,0,i), {\tt D} = (0,1,1,0), {\tt E} = (0,1,$-$1,0),\linebreak  {\tt F} = (0,1,i,0), {\tt O} = (0,i,0,1), {\tt G} = (0,i,1,0), {\tt 4} = (1,0,0,0), {\tt H} = (1,0,0,1), {\tt I} = (1,0,0,$-$1), {\tt J} = (1,0,0,i),\linebreak  {\tt 9} = (1,0,1,0), {\tt A} = (1,0,$-$1,0), {\tt B} = (1,0,i,0), {\tt 5} = (1,1,0,0), {\tt j} = (1,1,1,1), {\tt T} = (1,1,1,$-$1),\linebreak  {\tt ;} = (1,1,1,i), {\tt U} = (1,1,$-$1,1), {\tt k} = (1,1,$-$1,$-$1), {\tt "} = (1,1,$-$1,i), {\tt \$} = (1,1,i,1), {\tt  = } = (1,1,i,$-$1),\linebreak  {\tt l} = (1,1,i,i), {\tt 6} = (1,$-$1,0,0), {\tt f} = (1,$-$1,1,1), {\tt R} = (1,$-$1,1,$-$1), {\tt *} = (1,$-$1,1,i), {\tt S} = (1,$-$1,$-$1,1), {\tt g} = ($-$1,1,1,1), {\tt x} = (1,$-$1,$-$1,i), {\tt z} = (1,$-$1,i,1), {\tt /} = (1,$-$1,i,$-$1), {\tt h} = (1,$-$1,i,i), {\tt 7} = (1,i,0,0),\linebreak  {\tt r} = (1,i,1,1), {\tt b} = (1,i,1,$-$1), {\tt ]} = (1,i,1,i), {\tt c} = (1,i,$-$1,1), {\tt s} = (1,i,$-$1,$-$1), {\tt (} = (1,i,$-$1,i), {\tt \_} = (1,i,i,1),\linebreak  {\tt [} = (1,i,i,$-$1), {\tt t} = (1,i,i,i), {\tt K} = (i,0,0,1), {\tt C} = (i,0,1,0), {\tt $-$} = ($-$1,1,1,i), {\tt y} = ($-$1,1,$-$1,i),\linebreak  {\tt !} = ($-$1,1,i,1), {\tt :} = ($-$1,1,i,$-$1), {\tt i} = ($-$1,1,i,i), {\tt $<$} = ($-$1,$-$1,1,i), {\tt \#} = (i,i,i,1), {\tt \%} = ($-$1,$-$1,i,1),\linebreak  {\tt $>$} = (i,i,1,i), {\tt m} = (i,i,1,1), {\tt 8} = (i,1,0,0), {\tt n} = ($-$1,i,1,1), {\tt X} = ($-$1,i,1,$-$1), \ {\tt ?} = ($-$1,i,1,i),\linebreak  {\tt Y} = ($-$1,i,$-$1,1), {\tt o} = (i,1,i,i), {\tt \^{}} = (i,1,i,1), {\tt \&} = ($-$1,i,i,1), {\tt `} = (i,1,1,i), {\tt p} = (i,1,1,1), {\tt Z} = (i,1,1,$-$1),\linebreak  {\tt a} = (i,1,$-$1,1), {\tt q} = (i,1,$-$1,$-$1), {\tt '} = (i,1,$-$1,i), {\tt @} = (i,1,i,$-$1), {\tt u} = (i,$-$1,1,1),\linebreak  {\tt d} = (i,$-$1,1,$-$1),\hfill {\tt $\backslash$} = (i,$-$1,1,i),\hfill {\tt e} = (i,$-$1,$-$1,1),\hfill {\tt )} = (i,$-$1,i,1), \hfill{\tt V} = (i,i,1,$-$1), \hfill{\tt W} = (i,i,$-$1,1).

\bigskip
Distribution of critical KS MMPHs obtained (in an automated way
on a supercomputer with the help of VECFIND, STATES01, MMPSHUFFLE,
SHORTD, MMPTAG, and~Linux shell and Emacs macros) from
the {\bf 86-152} master, itself generated from $\{0,\pm 1,i\}$:

\medskip
{\parindent=0pt
  {\bf 9}  (number of hyperedges, $l$) 
  18 (number of vertices, $k$)
  (1) (number of MMPHs)

{\bf 11} 20 (2), 21 (2), 22 (4)

{\bf 13} 22 (2), 23 (6), 24 (33), 25 (40), 26 (35)

{\bf 15} 24 (1), 25 (3), 26 (52), 27 (208), 28 (573), 29 (542), 30 (265)

{\bf 16} 29 (1), 30 (7)

{\bf 17} 28 (4), 29 (103), 30 (860), 31 (2832), 32 (5011), 33 (3876), 34 (1325), 35 (1)

{\bf 18} 31 (7), 32 (11), 33 (25), 34 (15), 35 (19), 36 (5), 37 (2)

{\bf 19} 31 (2), 32 (112), 33 (724), 34 (2880), 35 (6701), 36 (9045), 37 (6139), 38 (1851), 39 (1)

{\bf 20} 34 (7), 35 (27), 36 (69), 37 (49), 38 (49), 39 (20), 40 (10), 41 (6)

{\bf 21} 34 (4), 35 (30), 36 (304), 37 (1318), 38 (3428), 39 (5807), 40 (6241), 41 (357),  42 (844), 43 (1)

{\bf 22} 37 (10), 38 (33), 39 (70), 40 (86), 41 (65), 42 (43), 43 (16), 44 (4), 45 (2)

{\bf 23} 38 (26), 39 (163), 40 (693), 41 (1617), 42 (3098), 43 (3749), 44 (3098), 45 (130), 46 (277), 47 (1)

{\bf 24} 39 (1), 40 (13), 41 (30), 42 (73), 43 (81), 44 (88), 45 (46), 46 (29), 47 (8), 48 (2), 49 (1)

{\bf 25} 40 (4), 41 (16), 42 (122), 43 (406), 44 (1005), 45 (1597), 46 (2127), 47 (175), 48 (1109), 49 (387), 50 (70), 51 (2)

{\bf 26} 41 (1), 42 (5), 43 (27), 44 (55), 45 (104), 46 (97), 47 (77), 48 (51), 49 (23), 50 (7), 52 (3)

{\bf 27} 42 (1), 43 (4), 44 (23), 45 (102), 46 (271), 47 (572), 48 (883), 49 (111), 50 (937), 51 (643), 52 (317), 53 (85), 54 (12)

{\bf 28} 44 (1), 45 (20), 46 (47), 47 (77), 48 (104), 49 (114), 50 (92), 51 (47), 52 (24), 53 (9), 54 (3), 55 (2)\linebreak
{\bf 29} 45 (1), 46 (9), 47 (40), 48 (118), 49 (230), 50 (349), 51 (506), 52 (53), 53 (462), 54 (314), 55 (149), 56 (83), 57 (18), 58 (2)

{\bf 30} 46  (1), 47  (10), 48  (33), 49  (79), 50  (139), 51  (157), 52  (141), 53  (97), 54  (44), 55  (20), 56  (7), 57  (1), 58  (3)

{\bf 31} 48 (2), 49 (35), 50 (86), 51 (157), 52 (250), 53 (272), 54 (297), 55 (26), 56 (206), 57 (148), 58 (69), 59 (35), 60 (13), 61 (1)

{\bf 32} 49  (4), 50  (22), 51  (75), 52  (149), 53  (193), 54  (196), 55  (126), 56  (69), 57  (37), 58  (17), 59  (12)

{\bf 33} 50 (5), 51 (24), 52 (67), 53 (146), 54 (226), 55 (219), 56 (196), 57 (15), 58 (126), 59 (85), 60 (59), 61 (25), 62 (10), 63 (1), 64 (2)

{\bf 34} 51  (4), 52  (18), 53  (54), 54  (130), 55  (220), 56  (200), 57  (188), 58  (12), 59  (64), 60  (31), 61  (9), 62  (2), 63  (1)

{\bf 35} 53 (8), 54 (41), 55 (118), 56 (184), 57 (240), 58 (219), 59 (158), 60 (11) 61 (67), 62 (33), 63 (18), 64 (7), 65 (1), 66 (4), 67 (1)

{\bf 36} 54 (5), 55 (26), 56 (91), 57 (194), 58 (232), 59 (226), 60 (166), 61 (89), 62 (39), 63 (23), 64 (4)

{\bf 37} 55 (1), 56 (23), 57 (71), 58 (155), 59 (226), 60 (248), 61 (185), 62 (10), 63 (58), 64 (22), 65 (15), 66 (7), 67 (3), 68 (1)

{\bf 38} 56(2), 57(14), 58  $\!$(48), 59  $\!$(143), 60  $\!$(194), 61  $\!$(221), 62 $\!$(192), 63  $\!$(10), 64 $\!$(62), 65 $\!$(22), 66 $\!$(5), 67 $\!$(1)

{\bf 39} 57 (3), 58 (10), 59 (35), 60 (84), 61 (174), 62 (210), 63 (156), 64 (13), 65 (60), 66 (29), 67 (8), 68 (5), 69 (2), 70 (2)

{\bf 40} 58 (2), 59 (2), 60 (18), 61 (55), 62 (140), 63 (167), 64 (139), 65 (10), 66 (64), 67 (20), 68 (6), 69 (11), 70 (1)

{\bf 41} 60 (2), 61 (7), 62 (34), 63 (89), 64 (125), 65 (116), 66 (101), 67 (67), 68 (34), 69 (13), 70 (3), 73 (1)\linebreak
{\bf 42} 62 (3), 63 (19), 64 (53), 65 (99), 66 (107), 67 (91), 68 (60), 69 (29), 70 (9), 72 (1)

{\bf 43} 63 (1), 64 (13), 65 (27), 66 (55), 67 (86), 68 (82), 69 (43), 70 (25), 71 (7), 72 (3), 73 (3), 75 (1)

{\bf 44} 64 (1), 65 (4), 66 (9), 67 (42), 68 (46), 69 (51), 70 (38), 71 (23), 72 (10), 73 (6), 75 (1)

{\bf 45} 66 (1), 67 (4), 68 (14), 69 (18), 70 (41), 71 (29), 72 (15), 73 (7), 74 (2)

{\bf 46} 66 (1), 68 (3), 69 (4), 70 (10), 71 (24), 72 (20), 73 (8), 74 (6), 76 (2)

{\bf 47} 69 (1), 70 (6), 71 (10), 72 (13), 73 (6), 74 (7), 75 (2), 76 (2)

{\bf 48} 71 (2), 72 (5), 73 (11), 74 (5), 75 (2), 76 (2), 77 (1), 78 (1), 79 (1)

{\bf 49} 74 (7), 75 (3), 76 (3)

{\bf 50} 73 (1), 76 (1), 77 (1), 81 (1)

{\bf 51} 75 (1)

{\bf 52} 76 (1)

111137 non-isomorphic critical KS MMPHs.}

\subsection{\label{app:3} 5-dim~MMPHs}

{\bf 29-16} {\tt 34125,56798,89CHG,GHF43,ABC95,DEF45,IJH94,KLG94,MNLB9,ONJA9,PMK79,\linebreak POI69,QRIE4,SRKD4,TQJ24,TSL14} {\tt 1} = (1,1,$-$1,1,0), {\tt 2} = (1,$-$1,1,1,0), {\tt 3} = (1,0,0,$-$1,0),\linebreak  {\tt 4} = (0,0,0,0,1), {\tt 5} = (0,1,1,0,0), {\tt 6} = (1,1,$-$1,0,1), {\tt 7} = (1,$-$1,1,0,1), {\tt 8} = (1,0,0,0,$-$1), {\tt 9} = (0,0,0,1,0), {\tt A} = (1,$-$1,1,0,$-$1), {\tt B} = (1,1,$-$1,0,$-$1), {\tt C} = (1,0,0,0,1), {\tt D} = (1,$-$1,1,$-$1,0), {\tt E} = (1,1,$-$1,$-$1,0),\linebreak  {\tt F} = (1,0,0,1,0), {\tt G} = (0,0,1,0,0), {\tt H} = (0,1,0,0,0), {\tt I} = (1,0,1,0,0), {\tt J} = (1,0,$-$1,0,0), {\tt K} = (1,1,0,0,0),\linebreak  {\tt L} = (1,$-$1,0,0,0), {\tt M} = (0,0,1,0,$-$1), {\tt N} = (1,1,1,0,1), {\tt O} = (0,1,0,0,$-$1), {\tt P} = ($-$1,1,1,0,1), {\tt Q} = (0,1,0,1,0), {\tt R} = (1,$-$1,$-$1,1,0), {\tt S} = (0,0,1,1,0), {\tt T} = (1,1,1,$-$1,0).

\bigskip
{\bf 11-7} \qquad  {\tt 789AB,B65,513,32897,46A,134,245.}

\bigskip
{\bf 21-7} (filled 11-7) {\tt 789AB,5CD6B,4EF6A,23789,1GH34,1IJ35,2KL45.} 1 = (1,1,0,0,0),\linebreak  {\tt 2} = (0,1,0,0,0), {\tt 3} = (0,0,1,0,0), {\tt 4} = (0,0,0,1,1), {\tt 5} = (0,0,0,1,$-$1), {\tt 6} = (1,0,0,0,0), {\tt 7} = (1,0,0,$-$1,0),\linebreak  {\tt 8} = (0,0,0,0,1), {\tt 9} = (1,0,0,1,0), {\tt A} = (0,1,$-$1,0,0), {\tt B} = (0,1,1,0,0), {\tt C} = (0,1,$-$1,1,1), {\tt D} = (0,$-$1,1,1,1), {\tt E} = (0,1,1,1,$-$1), {\tt F} = (0,1,1,$-$1,1), {\tt G} = (1,$-$1,0,1,$-$1), {\tt H} = (1,$-$1,0,$-$1,1), {\tt I} = (1,$-$1,0,1,1),\linebreak  {\tt J} = ($-$1,1,0,1,1), \ {\tt K} = (1,0,1,0,0), \ {\tt L} = (1,0,$-$1,0,0).

\subsection{\label{app:4} 6-dim~MMPHs}

{\bf 33-11} {\tt 123456,6789AB, BCD3EF, G5FMNO,8HIXWV, IAJD2K,KE4G7L,LH9JC1,MNOPQR,\linebreak PQRSTU,STUVWX.}
{\tt 1}=(0,0,1,1,$\omega$,$\omega$), {\tt 2} = (0,1,0,$\omega$,1,$\omega$), {\tt 3} = (0,1,$\omega$,0,$\omega$,1), {\tt 4} = (0,$\omega$,1,$\omega$,0,1),\linebreak  {\tt 5} = (1,0,0,0,0,0), {\tt 6} = (0,$\omega$,$\omega$,1,1,0), {\tt 7} = (1,1,$\omega$,$\omega$,0,0), {\tt 8} = (0,0,0,0,0,1), {\tt 9} = ($\omega$,0,1,$\omega$,1,0),\linebreak  {\tt A} = ($\omega$,1,0,1,$\omega$,0), {\tt B} = (1,$\omega$,1,0,$\omega$,0), {\tt C} = (1,0,$\omega$,0,1,$\omega$), {\tt D} = ($\omega$,$\omega$,0,0,1,1), {\tt E} = ($\omega$,1,1,0,0,$\omega$),\linebreak  {\tt F} = (0,0,0,1,0,0), {\tt G} = (0,0,0,0,1,0), {\tt H} = (0,1,0,0,0,0), {\tt I} = (0,0,1,0,0,0), {\tt J} = (1,0,0,$\omega$,$\omega$,1),\linebreak  {\tt K} = (1,$\omega$,0,1,0,$\omega$), {\tt L} = ($\omega$,0,$\omega$,1,0,1), {\tt M} = (0,1,$\omega$,0,0,$\omega$), {\tt N} = (0,$\omega$,1,0,0,$\omega$), {\tt O} = (0,$\omega$,$\omega$,0,0,1), {\tt P} = (1,0,0,$\omega$,$\omega$,0), {\tt Q} = ($\omega$,0,0,1,$\omega$,0), {\tt R} = ($\omega$,0,0,$\omega$,1,0), {\tt S} = (0,1,1,0,0,$\omega$), {\tt T} = (0,1,$\omega$,0,0,1),\linebreak  {\tt U} = (0,$\omega$,1,0,0,1), \ {\tt V} = (1,0,0,1,$\omega$,0), \ {\tt W} = (1,0,0,$\omega$,1,0), \ {\tt X} = ($\omega$,0,0,1,1,0)

\bigskip
{\bf 24-11}\hfil ($\overline{\rm subhypergraph}$ of 32-11)\hfil   {\tt 123456,2568BM,MKAGCH,CGHJ9I,I7FNDO,NDOEL1,3LDJCK,\linebreak 456789,ABCDEF,A5,A6.}

\bigskip
{\bf 32-11} {\tt 123456,2568BM,MKAGCH,CGHJ9I,I7FNDO,NDOEL1,3LDJCK,456789,ABCDEF, \linebreak AWVUT5,ASRQP6.} {\tt 1} = (1,1,$-$1,0,0,1), {\tt 2} = (1,0,0,0,0,$-$1), {\tt 3} = (1,$-$1,1,0,0,1), {\tt 4} = (0,1,1,0,0,0),\linebreak  {\tt 5} = (0,0,0,0,1,0), {\tt 6} = (0,0,0,1,0,0), {\tt 7} = (1,$-$1,1,0,0,$-$1), {\tt 8} = (1,0,0,0,0,1), {\tt 9} = (1,1,$-$1,0,0,$-$1), {\tt A} = (0,0,0,0,0,1), {\tt B} = (0,0,1,0,0,0), {\tt C} = (0,0,0,1,$-$1,0), {\tt D} = (0,0,0,1,1,0), {\tt E} = (1,$-$1,0,0,0,0),\linebreak  {\tt F} = (1,1,0,0,0,0), {\tt G} = (1,0,1,1,1,0), {\tt H} = (1,0,1,$-$1,$-$1,0), {\tt I} = (1,$-$1,$-$1,0,0,1), {\tt J} = (0,1,0,0,0,1),\linebreak  {\tt K} = (1,0,$-$1,0,0,0), {\tt L} = (1,1,1,0,0,$-$1), {\tt M} = (0,1,0,0,0,0), {\tt N} = (0,0,1,1,$-$1,1), {\tt O} = (0,0,1,$-$1,1,1),\linebreak  {\tt P} = (0,1,0,0,1,0), {\tt Q} = (1,0,1,0,0,0), {\tt R} = (1,1,$-$1,0,$-$1,0), {\tt S} = (1,$-$1,$-$1,0,1,0), {\tt T} = (0,1,1,1,0,0),\linebreak  {\tt U} = (1,0,1,$-$1,0,0), {\tt V} = (1,1,$-$1,0,0,0), {\tt W} = (1,$-$1,0,1,0,0).

\subsection{\label{app:5} 8-dim~MMPHs}

{\bf 34-9} {\tt 12345678,9ABC5678,DEFG3478,HIJKLMFG,NOPQRMEC,STUQRLDB,VWUPJK9A, \linebreak XYWTOI28,XYVSNH17.} 1 = (0,0,1,1,1,1,0,0), {\tt 2} = (0,0,1,$-$1,1,$-$1,0,0), {\tt 3} = (0,0,0,1,0,$-$1,0,0),\linebreak  {\tt 4} = (0,0,1,0,$-$1,0,0,0), {\tt 5} = (0,1,0,0,0,0,0,0), {\tt 6} = (1,0,0,0,0,0,0,0), {\tt 7} = (0,0,0,0,0,0,0,1),\linebreak  {\tt 8} = (0,0,0,0,0,0,1,0), {\tt 9} = (0,0,0,1,0,0,0,0), {\tt A} = (0,0,1,0,0,0,0,0), {\tt B} = (0,0,0,0,0,1,0,0),\linebreak  {\tt C} = (0,0,0,0,1,0,0,0), {\tt D} = (1,$-$1,1,0,1,0,0,0), {\tt E} = (1,1,0,1,0,1,0,0), {\tt F} = (1,1,0,$-$1,0,$-$1,0,0),\linebreak  {\tt G} = ($-$1,1,1,0,1,0,0,0), {\tt H} = (0,1,$-$1,1,0,0,1,0), {\tt I} = (1,0,1,1,0,0,0,$-$1), {\tt J} = (1,0,0,0,1,1,0,1),\linebreak  {\tt K} = (0,1,0,0,$-$1,1,$-$1,0), {\tt L} = (0,0,1,0,$-$1,0,1,1), {\tt M} = (0,0,0,1,0,$-$1,$-$1,1), {\tt N} = (1,0,1,0,0,$-$1,1,0),\linebreak  {\tt O} = (0,$-$1,1,0,0,1,0,1), {\tt P} = ($-$1,1,0,0,0,0,1,1), {\tt Q} = (1,0,$-$1,$-$1,0,0,0,1), {\tt R} = (0,1,1,$-$1,0,0,$-$1,0),\linebreak  {\tt S} = (1,0,0,1,$-$1,0,$-$1,0), {\tt T} = (0,1,0,1,1,0,0,1), {\tt U} = (1,1,0,0,0,0,1,$-$1), {\tt V} = (0,1,0,0,1,$-$1,$-$1,0),\linebreak  {\tt W} = (1,0,0,0,$-$1,$-$1,0,1), {\tt X} = (1,1,0,$-$1,0,1,0,0), {\tt Y} = (1,$-$1,$-$1,0,1,0,0,0).

\bibliographystyle{quantum}

\begin{thebibliography}{999}

\section*{References}
  
\bibitem[Zimba and Penrose(1993)]{zimba-penrose}
Zimba, J.; Penrose, R.
\newblock On {B}ell Non-Locality without Probabilities: {M}ore Curious
Geometry.
\newblock {\em {\it Stud. Hist. Phil. Sci.}} {\bf 1993}, {\em 
24},~697--720.
\newblock {\url{https://doi.org/10.1016/0039-3681(93)90061-N Get}}.

\bibitem[Matsuno(2007)]{matsuno-07}
Matsuno, S.
\newblock The Construction of {K}ochen--{S}pecker Noncolourable Sets in
Higher-Dimensional Space from Corresponding Sets in Lower Dimension:
Modification of {C}abello, {E}stebaranz and {G}arc{\'\i}a-{A}lcaine’s
Method.
\newblock {\em {\it J. Phys. A}} {\bf 2007}, {\em 40},~9507--9513.
\newblock {\url{https://doi.org/10.1088/1751-8113/40/31/024}}.

\bibitem[Pavi{\v c}i{\'c}(2023{\natexlab{a}})]{pavicic-quantum-23}
Pavi{\v c}i{\'c}, M.
\newblock Quantum Contextuality.
\newblock {\em {\it Quantum}} {\bf 2023}, {\em 7},~953.
\newblock {\url{https://doi.org/10.22331/q-2023-03-17-953}}.

\bibitem[Pavi{\v c}i{\'c}(2023{\natexlab{b}})]{pavicic-entropy-23}
Pavi{\v c}i{\'c}, M.
\newblock Non-{K}ochen-{S}pecker Contextuality.
\newblock {\em {\it Entropy}} {\bf 2023}, {\em 25},~1117.
\newblock {\url{https://doi.org/10.3390/e25081117}}.

\bibitem[Simon et~al.(2000)Simon, ${\rm\dot{Z}}$ukowski, Weinfurter, and
Zeilinger]{simon-zeil00}
Simon, C.; ${\rm\dot{Z}}$ukowski, M.; Weinfurter, H.; Zeilinger, A.
\newblock Feasible {K}ochen-{S}pecker Experiment with Single Particles.
\newblock {\em {\it Phys. Rev. Lett.}} {\bf 2000}, {\em 85},~1783--1786.
\newblock {\url{https://doi.org/10.1103/PhysRevLett.85.1783}}.

\bibitem[Michler et~al.(2000)Michler, Weinfurter, and ${\dot{\rm
Z}}$ukowski]{michler-zeil-00}
Michler, M.; Weinfurter, H.; ${\rm\dot{Z}}$ukowski, M.
\newblock Experiments towards Falsification of Noncontextual Hidden Variables.
\newblock {\em {\it Phys. Rev. Lett.}} {\bf 2000}, {\em 84},~5457--5461.
\newblock {\url{https://doi.org/10.1103/PhysRevLett.84.5457}}.

\bibitem[Amselem et~al.(2009)Amselem, R{\aa}dmark, Bourennane, and
Cabello]{amselem-cabello-09}
Amselem, E.; R{\aa}dmark, M.; Bourennane, M.; Cabello, A.
\newblock State-Independent Quantum Contextuality with Single Photons.
\newblock {\em {\it Phys. Rev. Lett.}} {\bf 2009}, {\em 
103},~160405.
\newblock {\url{https://doi.org/10.1103/PhysRevLett.103.160405}}.

\bibitem[Liu et~al.(2009)Liu, Huang, Gong, Sun, Zhang, Li, and Guo]{liu-09}
Liu, B.H.; Huang, Y.F.; Gong, Y.X.; Sun, F.W.; Zhang, Y.S.; Li, C.F.; Guo, G.C.
\newblock Experimental Demonstration of Quantum Contextuality with Nonentangled
Photons.
\newblock {\em {\it Phys. Rev. A}} {\bf 2009}, {\em 80},~044101.
\newblock {\url{https://doi.org/10.1103/PhysRevA.80.044101}}.

\bibitem[{D'A}mbrosio et~al.(2013){D'A}mbrosio, Herbauts, Amselem, Nagali,
Bourennane, Sciarrino, and Cabello]{d-ambrosio-cabello-13}
{D'A}mbrosio, V.; Herbauts, I.; Amselem, E.; Nagali, E.; Bourennane, M.;
Sciarrino, F.; Cabello, A.
\newblock Experimental Implementation of a {K}ochen-{S}pecker Set of Quantum
Tests.
\newblock {\em {\it Phys. Rev. X}} {\bf 2013}, {\em 3},~011012.
\newblock {\url{https://doi.org/10.1103/PhysRevX.3.011012}}.

\bibitem[Huang et~al.(2003)Huang, Li, Zhang, Pan, and Guo]{ks-exp-03}
Huang, Y.F.; Li, C.F.; Zhang, Y.S.; Pan, J.W.; Guo, G.C.
\newblock Experimental Test of the {K}ochen-{S}pecker Theorem with Single
Photons.
\newblock {\em {\it Phys. Rev. Lett.}} {\bf 2003}, {\em 
90},~250401.
\newblock {\url{https://doi.org/10.1103/PhysRevLett.90.250401}}.

\bibitem[Lapkiewicz et~al.(2011)Lapkiewicz, Li, Schaeff, Langford, Ramelow,
Wie{\'s}niak, and Zeilinger]{lapkiewicz-11}
Lapkiewicz, R.; Li, P.; Schaeff, C.; Langford, N.K.; Ramelow, S.; Wie{\'s}niak,
M.; Zeilinger, A.
\newblock Experimental Non-Classicality of an Indivisible Quantum System.
\newblock {\em {\it Nat.}} {\bf 2011}, {\em 474},~490--493.
\newblock {\url{https://doi.org/10.1038/nature10119}}.

\bibitem[Zu et~al.(2012)Zu, Wang, Deng, Chang, Liu, Hou, Yang, and
Duan]{zu-wang-duan-12}
Zu, C.; Wang, Y.X.; Deng, D.L.; Chang, X.Y.; Liu, K.; Hou, P.Y.; Yang, H.X.;
Duan, L.M.
\newblock State-Independent Experimental Test of Quantum Contextuality in an
Indivisible System.
\newblock {\em {\it Phys. Rev. Lett.}} {\bf 2012}, {\em 
109},~150401.
\newblock {\url{https://doi.org/10.1103/PhysRevLett.109.150401}}.

\bibitem[Ca{\~n}as et~al.(2014{\natexlab{a}})Ca{\~n}as, Etcheverry, G{\'o}mez,
Saavedra, Xavier, Lima, and Cabello]{canas-cabello-8d-14}
Ca{\~n}as, G.; Etcheverry, S.; G{\'o}mez, E.S.; Saavedra, C.; Xavier, G.B.;
Lima, G.; Cabello, A.
\newblock Experimental Implementation of an Eight-Dimensional
{K}ochen-{S}pecker Set and Observation of Its Connection with the
{G}reenberger-{H}orne-{Z}eilinger Theorem.
\newblock {\em {\it Phys. Rev. A}} {\bf 2014}, {\em 90},~012119.
\newblock {\url{https://doi.org/10.1103/PhysRevA.90.012119}}.

\bibitem[Ca{\~n}as et~al.(2014{\natexlab{b}})Ca{\~n}as, Arias, Etcheverry,
G{\'o}mez, Cabello, Saavedra, Xavier, and Lima]{canas-cabello-14}
Ca{\~n}as, G.; Arias, M.; Etcheverry, S.; G{\'o}mez, E.S.; Cabello, A.;
Saavedra, C.; Xavier, G.B.; Lima, G.
\newblock Applying the Simplest {K}ochen-{S}pecker Set for Quantum Information
Processing.
\newblock {\em {\it Phys. Rev. Lett.}} {\bf 2014}, {\em 
113},~090404.
\newblock {\url{https://doi.org/10.1103/PhysRevLett.113.090404}}.

\bibitem[Zhan et~al.(2016)Zhan, Zhang, Li, Zhang, Sanders, and
Xue]{zhan-sanders-16}
Zhan, X.; Zhang, X.; Li, J.; Zhang, Y.; Sanders, B.C.; Xue, P.
\newblock Realization of the Contextuality-Nonlocality Tradeoff with a
Qubit-Qutrit Photon Pair.
\newblock {\em {\it Phys. Rev. Lett.}} {\bf 2016}, {\em 116},~090401.
\newblock {\url{https://doi.org/10.1103/PhysRevLett.116.090401}}.

\bibitem[Li et~al.(2017)Li, Zeng1, Song, and Zhang]{li-zeng-17}
Li, T.; Zeng1, Q.; Song, X.; Zhang, X.
\newblock Experimental Contextuality in Classical Light.
\newblock {\em {\it Sci. Rep.}} {\bf 2017}, {\em 
7},~44467.
\newblock {\url{https://doi.org/10.1038/srep44467}}.

\bibitem[Li et~al.(2019)Li, Zeng, Zhang, Chen, and Zhang]{li-zeng-zhang-19}
Li, T.; Zeng, Q.; Zhang, X.; Chen, T.; Zhang, X.
\newblock State-Independent Contextuality in Classical Light.
\newblock {\em {\it Sci. Rep.}} {\bf 2019}, {\em 
9},~17038.
\newblock {\url{https://doi.org/10.1038/s41598-019-51250-5}}.

\bibitem[Frustaglia et~al.(2016)Frustaglia, Baltan{\'a}s,
Vel{\'a}zquez-Ahumada, Fern{\'a}ndez-Prieto, Lujambio, Losada, Freire, and
Cabello]{frustaglia-cabello-16}
Frustaglia, D.; Baltan{\'a}s, J.P.; Vel{\'a}zquez-Ahumada, M.C.;
Fern{\'a}ndez-Prieto, A.; Lujambio, A.; Losada, V.; Freire, M.J.; Cabello, A.
\newblock Classical Physics and the Bounds of Quantum Correlations.
\newblock {\em {\it Phys. Rev. Lett.}} {\bf 2016}, {\em 
116},~250404.
\newblock {\url{https://doi.org/10.1103/PhysRevLett.116.250404}}.

\bibitem[Zhang et~al.(2004)Zhang, Xu, Xie, Zhang, Smith, Kim, and
Zhang]{zhang-zhang-19}
Zhang, A.; Xu, H.; Xie, J.; Zhang, H.; Smith, B.J.; Kim, M.S.; Zhang, L.
\newblock Experimental Test of Contextuality in Quantum and Classical Systems.
\newblock {\em {\it Phys. Rev. Lett.}} {\bf 2004}, {\em 
122},~080401.
\newblock {\url{https://doi.org/10.1103/PhysRevLett.122.080401}}.

\bibitem[Hasegawa et~al.(102006)Hasegawa, Loidl, Badurek, Baron, and
Rauch]{h-rauch06}
Hasegawa, Y.; Loidl, R.; Badurek, G.; Baron, M.; Rauch, H.
\newblock Quantum Contextuality in a Single-Neutron Optical Experiment.
\newblock {\em {\it Phys. Rev. Lett.}} {\bf 102006}, {\em 
97},~230401.
\newblock {\url{https://doi.org/10.1103/PhysRevLett.97.230401}}.

\bibitem[Cabello et~al.(2008)Cabello, Filipp, Rauch, and
Hasegawa]{cabello-fillip-rauch-08}
Cabello, A.; Filipp, S.; Rauch, H.; Hasegawa, Y.
\newblock Proposed Experiment for Testing Quantum Contextuality with Neutrons.
\newblock {\em {\it Phys. Rev. Lett.}} {\bf 2008}, {\em 
100},~130404.
\newblock {\url{https://doi.org/10.1103/PhysRevLett.100.130404}}.

\bibitem[Bartosik et~al.(2009)Bartosik, Klepp, Schmitzer, Sponar, Cabello,
Rauch, and Hasegawa]{b-rauch-09}
Bartosik, H.; Klepp, J.; Schmitzer, C.; Sponar, S.; Cabello, A.; Rauch, H.;
Hasegawa, Y.
\newblock Experimental Test of Quantum Contextuality in Neutron Interferometry.
\newblock {\em {\it Phys. Rev. Lett.}} {\bf 2009}, {\em 
103},~040403.
\newblock {\url{https://doi.org/0.1103/PhysRevLett.103.040403}}.

\bibitem[Kirchmair et~al.(2009)Kirchmair, Z{\"a}hringer, Gerritsma, Kleinmann,
G{\"u}hne, Cabello, Blatt, and Roos]{k-cabello-blatt-09}
Kirchmair, G.; Z{\"a}hringer, F.; Gerritsma, R.; Kleinmann, M.; G{\"u}hne, O.;
Cabello, A.; Blatt, R.; Roos, C.F.
\newblock State-Independent Experimental Test of Quantum Contextuality.
\newblock {\em {\it Nat.}} {\bf 2009}, {\em 460},~494--497.
\newblock {\url{https://doi.org/10.1038/nature08172}}.

\bibitem[Moussa et~al.(2010)Moussa, Ryan, Cory, and Laflamme]{moussa-09}
Moussa, O.; Ryan, C.A.; Cory, D.G.; Laflamme, R.
\newblock Testing Contextuality on Quantum Ensembles with One Clean Qubit.
\newblock {\em {\it Phys. Rev. Lett.}} {\bf 2010}, {\em 
104},~160501.
\newblock {\url{https://doi.org/10.1103/PhysRevLett.104.160501}}.

\bibitem[Jerger et~al.(2016)Jerger, Reshitnyk, Oppliger, Poto{\v c}nik, Mondal,
Wallraff, Goodenough, Wehner, Juliusson, Langford, and Fedorov]{jerger-16}
Jerger, M.; Reshitnyk, Y.; Oppliger, M.; Poto{\v c}nik, A.; Mondal, M.;
Wallraff, A.; Goodenough, K.; Wehner, S.; Juliusson, K.; Langford, N.K.;
et~al.
\newblock Contextuality without Nonlocality in a Superconducting Quantum
System.
\newblock {\em {\it Nat. Commun.}} {\bf 2016}, {\em 7},~12930.
\newblock {\url{https://doi.org/10.1038/ncomms12930}}.

\bibitem[Cabello et~al.(2011)Cabello, {D'A}mbrosio, Nagali, and
Sciarrino]{cabello-dambrosio-11}
Cabello, A.; {D'A}mbrosio, V.; Nagali, E.; Sciarrino, F.
\newblock Hybrid Ququart-Encoded Quantum Cryptography Protected by
{K}ochen-{S}pecker Contextuality.
\newblock {\em {\it Phys. Rev. A}} {\bf 2011}, {\em 84},~030302(R).
\newblock {\url{https://doi.org/10.1103/PhysRevA.84.030302}}.

\bibitem[Nagata(2005)]{nagata-05}
Nagata, K.
\newblock {K}ochen-{S}pecker Theorem as a Precondition for Secure Quantum Key
Distribution.
\newblock {\em {\it Phys. Rev. A}} {\bf 2005}, {\em 72},~012325
\newblock {\url{https://doi.org/10.1103/PhysRevA.72.012325}}.

\bibitem[Saha et~al.(2019)Saha, Horodecki, and Paw{\l}owski]{saha-hor-19}
Saha, D.; Horodecki, P.; Paw{\l}owski, M.
\newblock State Independent Contextuality Advances One-Way Communication.
\newblock {\em {\it New J. Phys.}} {\bf 2019}, {\em 21},~093057.
\newblock {\url{https://doi.org/10.1088/1367-2630/ab4149}}.

\bibitem[Howard et~al.(2014)Howard, Wallman, Veitech, and Emerson]{magic-14}
Howard, M.; Wallman, J.; Veitech, V.; Emerson, J.
\newblock Contextuality Supplies the `Magic' for Quantum Computation.
\newblock {\em {\it Nat.}} {\bf 2014}, {\em 510},~351--355.
\newblock {\url{https://doi.org/10.1038/nature13460}}.

\bibitem[Bartlett(2014)]{bartlett-nature-14}
Bartlett, S.D.
\newblock Powered by Magic.
\newblock {\em {\it Nat.}} {\bf 2014}, {\em 510},~345--346.
\newblock {\url{https://doi.org/10.1038/nature13504}}.

\bibitem[Kurzy{\'n}ski et~al.(2014)Kurzy{\'n}ski, Cabello, and
Kaszlikowski]{kurz-cabello-14}
Kurzy{\'n}ski, P.; Cabello, A.; Kaszlikowski, D.
\newblock Fundamental Monogamy Relation between Contextuality and Nonlocality.
\newblock {\em {\it Phys. Rev. Lett.}} {\bf 2014}, {\em 
112},~100401.
\newblock {\url{https://doi.org/10.1103/PhysRevLett.112.100401}}.

\bibitem[Tavakoli and Uola(2020)]{tavakoli-20}
Tavakoli, A.; Uola, R.
\newblock Measurement Incompatibility and Steering Are Necessary and Sufficient
for Operational Contextuality.
\newblock {\em {\it Phys. Rev. Res.}} {\bf 2020}, {\em 
2},~013011.
\newblock {\url{https://doi.org/10.1103/PhysRevResearch.2.013011}}.

\bibitem[Megill and Pavi{\v c}i{\'c}(2011)]{mp-7oa}
Megill, N.D.; Pavi{\v c}i{\'c}, M.
\newblock Kochen-{S}pecker Sets and Generalized {O}rthoarguesian Equations.
\newblock {\em {\it Ann. {H}enri {P}oinc.}} {\bf 2011}, {\em 
12},~1417--1429.
\newblock {\url{https://doi.org/10.1007/s00023-011-0109-0}}.

\bibitem[Klyachko et~al.(2008)Klyachko, Can, Binicio{\u g}lu, and
Shumovsky]{klyachko-08}
Klyachko, A.A.; Can, M.A.; Binicio{\u g}lu, S.; Shumovsky, A.S.
\newblock Simple Test for Hidden Variables in Spin-1 Systems.
\newblock {\em {\it Phys. Rev. Lett.}} {\bf 2008}, {\em 
101},~020403.
\newblock {\url{https://doi.org/10.1103/PhysRevLett.101.020403}}.

\bibitem[Yu and Oh(2012)]{yu-oh-12}
Yu, S.; Oh, C.H.
\newblock State-Independent Proof of {K}ochen-{S}pecker Theorem with 13 Rays.
\newblock {\em {\it Phys. Rev. Lett.}} {\bf 2012}, {\em 
108},~030402.
\newblock {\url{https://doi.org/10.1103/PhysRevLett.108.030402}}.

\bibitem[Bengtsson et~al.(2012)Bengtsson, Blanchfield, and
Cabello]{beng-blan-cab-pla12}
Bengtsson, I.; Blanchfield, K.; Cabello, A.
\newblock A {K}ochen–{S}pecker Inequality from a {SIC}.
\newblock {\em {\it Phys. Lett. A}} {\bf 2012}, {\em 376},~374.
\newblock {\url{https://doi.org/10.1016/j.physleta.2011.12.011}}.

\bibitem[Xu et~al.(2015)Xu, Chen, and Su]{xu-chen-su-pla15}
Xu, Z.P.; Chen, J.L.; Su, H.Y.
\newblock State-independent contextuality sets for a qutrit.
\newblock {\em {\it Phys. Lett. A}} {\bf 2015}, {\em 379},~1868--1870.
\newblock {\url{https://doi.org/10.1016/j.physleta.2015.04.024}}.

\bibitem[Ramanathan and Horodecki(2014)]{ram-hor-14}
Ramanathan, R.; Horodecki, P.
\newblock Necessary and Sufficient Condition for State-Independent Contextual
Measurement Scenarios.
\newblock {\em {\it Phys. Rev. Lett.}} {\bf 2014}, {\em 
112},~040404.
\newblock {\url{https://doi.org/10.1103/PhysRevLett.112.040404}}.

\bibitem[Cabello et~al.(2014)Cabello, Kleinmann, and
Budroni]{cabell-klein-budr-prl-14}
Cabello, A.; Kleinmann, M.; Budroni, C.
\newblock Necessary and Sufficient Condition for Quantum State-Independent
Contextuality.
\newblock {\em {\it Phys. Rev. Lett.}} {\bf 2014}, {\em 
114},~250402.
\newblock {\url{https://doi.org/10.1103/PhysRevLett.114.250402}}.

\bibitem[Cabello et~al.(2018)Cabello, Portillo, Sol{\'\i}s, and
Svozil]{cabello-svozil-18}
Cabello, A.; Portillo, J.R.; Sol{\'\i}s, A.; Svozil, K.
\newblock Minimal True-Implies-False and True-Implies-True Sets of Propositions
in Noncontextual Hidden-Variable Theories.
\newblock {\em {\it Phys. Rev. A}} {\bf 2018}, {\em 98},~012106.
\newblock {\url{https://doi.org/10.1103/PhysRevA.98.012106}}.

\bibitem[Svozil(2021)]{svozil-21}
Svozil, K.
\newblock Extensions of {H}ardy-Type True-Implies-False Gadgets to Classically
Obtain Indistinguishability.
\newblock {\em {\it Phys. Rev. A}} {\bf 2021}, {\em 103},~022204.
\newblock {\url{https://doi.org/10.1103/PhysRevA.103.022204}}.

\bibitem[Berge(1973)]{berge-73}
Berge, C.
\newblock {\em Graphs and Hypergraphs};  North-Holland Mathematical
Library, North-Holland: Amsterdam, The Netherlands, 1973; Volume~6.

\bibitem[Berge(1989)]{berge-89}
Berge, C.
\newblock {\em Hypergraphs: Combinatorics of Finite Sets}; 
North-Holland Mathematical Library, North-Holland: Amsterdam, \mbox{The Netherlands}, 1989; Volume~45.

\bibitem[Bretto(2013)]{bretto-13}
Bretto, A.
\newblock {\em Hypergraph Theory: An Introduction}; Springer:  {Berlin/Heidelberg, Germany,} 
2013.
\newblock {\url{https://doi.org/10.1007/978-3-319-00080-0}}.

\bibitem[Voloshin(2009)]{voloshin-09}
Voloshin, V.I.
\newblock {\em Introduction to Graph and Hypergraph Theory}; Nova Science: New
York, NY, USA, 2009.

\bibitem[Pavi{\v c}i{\'c}(2019)]{pavicic-entropy-19}
Pavi{\v c}i{\'c}, M.
\newblock Hypergraph Contextuality.
\newblock {\em {\it Entropy}} {\bf 2019}, {\em 21},~1107.
\newblock {\url{https://doi.org/10.3390/e21111107}}.

\bibitem[Pavi{\v c}i{\'c} and Waegel(2024)]{pw-23a}
Pavi{\v c}i{\'c}, M.; Waegel, M.
\newblock Generation {K}ochen-{S}pecker Contextual Sets in Higher Dimensions by
Dimensional Upscaling Whose Complexity Does not Scale with Dimension and
Their Application.
\newblock {\em {\it Phys. Rev. A}} {\bf 2024}, {\em 110},~012205.
\newblock {\url{https://doi.org/10.1103/PhysRevA.110.012205}}.

\bibitem[Gr{\"o}tschel et~al.(1981)Gr{\"o}tschel, Lov{\'a}sz, and
Schrijver]{gro-lovasz-schr-81}
Gr{\"o}tschel, M.; Lov{\'a}sz, L.; Schrijver, A.
\newblock The Ellipsoid Method and Its Consequences in Combinatorial
Optimization.
\newblock {\em {\it Combinatorica}} {\bf 1981}, {\em 1},~169.
\newblock {\url{https://doi.org/10.1007/BF02579273}}.

\bibitem[Amaral and Cunha(2018)]{amaral-cunha-18}
Amaral, B.; Cunha, M.T.
\newblock {\em On Graph Approaches to Contextuality and Their Role in Quantum
Theory}; {SBMAC} Springer:  {Berlin/Heidelberg, Germany,}  2018.
\newblock {\url{https://doi.org/10.1007/978-3-319-93827-1}}.

\bibitem[Pavi{\v c}i{\'c} and Megill(2022)]{pavicic-pra-22}
Pavi{\v c}i{\'c}, M.; Megill, N.D.
\newblock Automated Generation of Arbitrarily Many {K}ochen-{S}pecker and Other
Contextual Sets in Odd Dimensional {H}ilbert Spaces.
\newblock {\em {\it Phys. Rev. A}} {\bf 2022}, {\em 106},~L060203.
\newblock {\url{https://doi.org/10.1103/PhysRevA.106.L060203}}.

\bibitem[Kochen and Specker(1967)]{koch-speck}
Kochen, S.; Specker, E.P.
\newblock The problem of hidden variables in quantum mechanics.
\newblock {\em {\it J. Math. Mech.}} {\bf 1967}, {\em 17},~59--87.
\newblock {\url{http://www.jstor.org/stable/24902153}}.

\bibitem[Budroni et~al.(2022)Budroni, Cabello, G{\"u}hne, Kleinmann, and
Larsson]{budroni-cabello-rmp-22}
Budroni, C.; Cabello, A.; G{\"u}hne, O.; Kleinmann, M.; Larsson, J.{\AA}.
\newblock {K}ochen-{S}pecker Contextuality.
\newblock {\em {\it Rev. Mod. Phys.}} {\bf 2022}, {\em 
94},~0450007. 
\newblock {\url{https://doi.org/10.1103/RevModPhys.94.045007}}.

\bibitem[Cabello(2013)]{cabello-13a}
Cabello, A.
\newblock Twin Inequality for Fully Contextual Quantum Correlations.
\newblock {\em {\it Phys. Rev. A}} {\bf 2013}, {\em 87},~010104(R).
\newblock {\url{https://doi.org/10.1103/PhysRevA.87.010104}}.

\bibitem[Pavi{\v c}i{\'c} and Megill(2018)]{pm-entropy18}
Pavi{\v c}i{\'c}, M.; Megill, N.D.
\newblock Vector Generation of Quantum Contextual Sets in Even Dimensional
{H}ilbert Spaces.
\newblock {\em {\it Entropy}} {\bf 2018}, {\em 20},~928.
\newblock {\url{https://doi.org/10.3390/e20120928}}.

\bibitem[Pavi{\v c}i{\'c}(2017)]{pavicic-pra-17}
Pavi{\v c}i{\'c}, M.
\newblock Arbitrarily Exhaustive Hypergraph Generation of 4-, 6-, 8-, 16-, and
32-Dimensional Quantum Contextual Sets.
\newblock {\em {\it Phys. Rev. A}} {\bf 2017}, {\em 95},~062121.
\newblock {\url{https://doi.org/10.1103/PhysRevA.95.062121}}.

\bibitem[Pavi{\v c}i{\'c} et~al.(2019)Pavi{\v c}i{\'c}, Waegel, Megill, and
Aravind]{pwma-19}
Pavi{\v c}i{\'c}, M.; Waegel, M.; Megill, N.D.; Aravind, P.
\newblock Automated Generation of {K}ochen-{S}pecker Sets.
\newblock {\em {\it Sci. Rep.}} {\bf 2019}, {\em 
9},~6765.
\newblock {\url{https://doi.org/10.1038/s41598-019-43009-9}}.

\bibitem[Cabello and {Garc{\'\i}a-Alcaine}(1996)]{cabell-garc96}
Cabello, A.; {Garc{\'\i}a-Alcaine}, G.
\newblock {B}ell-{K}ochen-{S}pecker Theorem for Any Finite Dimension $n>3$.
\newblock {\em {\it J. Phys. A}} {\bf 1996}, {\em 29},~1025.
\newblock {\url{https://doi.org/10.1088/0305-4470/29/5/016}}.

\bibitem[Cabello et~al.(2005)Cabello, Estebaranz, and
{Garc{\'\i}a-Alcaine}]{cabell-est-05}
Cabello, A.; Estebaranz, J.M.; {Garc{\'\i}a-Alcaine}, G.
\newblock Recursive Proof of the {B}ell-{K}ochen-{S}pecker Theorem in any
Dimension $n>3$.
\newblock {\em {\it Phys. Lett. A}} {\bf 2005}, {\em 339},~425--429.
\newblock {\url{https://doi.org/10.1016/j.physleta.2005.03.067}}.

\bibitem[Yu and Tong(2014)]{yu-tong-14}
Yu, X.D.; Tong, D.M.
\newblock Coexistence of {K}ochen-{S}pecker inequalities and noncontextuality
inequalities.
\newblock {\em {\it Phys. Rev. A}} {\bf 2014}, {\em 89},~010101(R).
\newblock {\url{https://doi.org/10.1103/PhysRevA.89.010101}}.

\bibitem[Yu et~al.(2015)Yu, Guo, and Tong]{yu-tong-15}
Yu, X.D.; Guo, Y.Q.; Tong, D.M.
\newblock A Proof of the {K}ochen–{S}pecker Theorem Can Always Be Converted
to a State-Independent Noncontextuality Inequality.
\newblock {\em {\it New J. Phys.}} {\bf 2015}, {\em 17},~093001.
\newblock {\url{https://doi.org/10.1088/1367-2630/17/9/093001}}.

\bibitem[Cabello(2008)]{cabello-08}
Cabello, A.
\newblock Experimentally Testable State-Independent Quantum Contextuality.
\newblock {\em {\it Phys. Rev. Lett.}} {\bf 2008}, {\em 
101},~210401.
\newblock {\url{https://doi.org/10.1103/PhysRevLett.101.210401}}.

\bibitem[Qu et~al.(2021)Qu, Wang, Xiao, Zhan, and Xue]{qu-24}
Qu, D.; Wang, K.; Xiao, L.; Zhan, X.; Xue, P.
\newblock State-Independent Test of Quantum Contextuality with Either Single
Photons or Coherent Light.
\newblock {\em {\it NPJ Quantum Inf.}} {\bf 2021}, {\em 7},~154.
\newblock {\url{https://doi.org/10.1038/s41534-021-00492-1}}.

\bibitem[Melnikov et~al.(1998)Melnikov, Sarvanov, Tysbkevich, Yemelichev, and
Zverovich]{melnikov-98}
Melnikov, O.; Sarvanov, V.; Tysbkevich, R.; Yemelichev, V.; Zverovich, I.
\newblock {\em Exercises in Graph Theory}; Kluwer: Dordrecht, \mbox{The Netherlands}, 1998.
\newblock {\url{https://doi.org/10.1007/978-94-017-1514-0}}.

\bibitem[Horodecki et~al.(2022)Horodecki, Zhou, Stankiewicz, Salazar,
Horodecki, Raussendorf, Horodecki, Ramanathan, and Tyhurst]{horod-22}
Horodecki, K.; Zhou, J.; Stankiewicz, M.; Salazar, R.; Horodecki, P.;
Raussendorf, R.; Horodecki, R.; Ramanathan, R.; Tyhurst, E.
\newblock The Rank of Contextuality.
\newblock {\em New J. Phys., {\bf 2023}, 25}, 073003.
\newblock {\url{https://doi.org/10.1088/1367-2630/acdf78}}.

\bibitem[Weisstein(2024)]{wolfram-alpha-24}
Weisstein, E.W.
\newblock Fractional Independence Number.
\newblock MathWorld---A Wolfram Web Resource. 2024. Available online: 
\newblock \url{https://mathworld.wolfram.com/FractionalIndependenceNumber.html} (accessed on 1 January 2025). 

\bibitem[Cabello et~al.(2014)Cabello, Severini, and
Winter]{cabello-severini-winter-14}
Cabello, A.; Severini, S.; Winter, A.
\newblock Graph-Theoretic Approach to Quantum Correlations.
\newblock {\em {\it Phys. Rev. Lett.}} {\bf 2014}, {\em 
112},~040401.
\newblock {\url{https://doi.org/10.1103/PhysRevLett.112.040401}}.

\bibitem[Feynman et~al.(1965)Feynman, Leighton, and Sands]{feynmanIII}
Feynman, R.P.; Leighton, R.B.; Sands, M.
\newblock {\em The {F}eynman Lectures on Physics; {V}olume {III}. {Q}uantum
Mechanics}; Addison-Wesley: Reading, MA, USA, 1965.

\bibitem[Peres(1991)]{peres}
Peres, A.
\newblock Two Simple Proofs of the {B}ell-{K}ochen-{S}pecker Theorem.
\newblock {\em {\it J. Phys. A}} {\bf 1991}, {\em 24},~L175--L178.
\newblock {\url{https://doi.org/10.1088/0305-4470/24/4/003}}.

\bibitem[Kernaghan(1994)]{kern}
Kernaghan, M.
\newblock {B}ell-{K}ochen-{S}pecker Theorem for 20 Vectors.
\newblock {\em {\it J. Phys. A}} {\bf 1994}, {\em 27},~L829--L830.
\newblock {\url{https://doi.org/10.1088/0305-4470/27/21/007}}.

\bibitem[Cabello et~al.(1996)Cabello, Estebaranz, and
{Garc{\'\i}a-Alcaine}]{cabell-est-96a}
Cabello, A.; Estebaranz, J.M.; {Garc{\'\i}a-Alcaine}, G.
\newblock {B}ell-{K}ochen-{S}pecker Theorem: {A} Proof with 18 Vectors.
\newblock {\em {\it Phys. Lett. A}} {\bf 1996}, {\em 212},~183--187.
\newblock {\url{https://doi.org/10.1016/0375-9601(96)00134-X}}.

\bibitem[Pavi{\v c}i{\'c}(2019)]{pavicic-paris-video-2019}
Pavi{\v c}i{\'c}, M.
\newblock Vector Generation of Quantum Contextual Sets: {QT}ech2018, {P}aris.
\newblock {\em {\it EPJ Web Conf.}} {\bf 2019}, {\em 
  198},~00009. 
{\url{https://doi.org/10.1051/epjconf/201919800009};
  video: \url{https://www.youtube.com/watch?v=Bw2vItz5trE}
(accessed on 1 January 2025)}.

\bibitem[Waegell and Aravind(2010)]{aravind10}
Waegell, M.; Aravind, P.K.
\newblock Critical Noncolorings of the 600-Cell Proving the
{B}ell-{K}ochen-{S}pecker Theorem.
\newblock {\em {\it J. Phys. A}} {\bf 2010}, {\em 43},~105304.
\newblock {\url{https://doi.org/10.1088/1751-8113/43/10/105304}}.

\bibitem[Waegell and Aravind(2011)]{waeg-aravind-jpa-11}
Waegell, M.; Aravind, P.K.
\newblock Parity Proofs of the {K}ochen-{S}pecker Theorem Based on 60 Complex
Rays in Four Dimensions.
\newblock {\em {\it J. Phys. A}} {\bf 2011}, {\em 44},~505303.
\newblock {\url{https://doi.org/10.1088/1751-8113/44/50/505303}}.

\bibitem[Waegell and Aravind(2014)]{waeg-aravind-fp-14}
Waegell, M.; Aravind, P.K.
\newblock Parity Proofs of the {K}ochen-{S}pecker Theorem Based on 120-Cell.
\newblock {\em {\it Found. Phys.}} {\bf 2014}, {\em 44},~1085.
\newblock {\url{https://doi.org/10.1007/s10701-014-9830-0}}.

\bibitem[Waegell and Aravind(2015)]{waeg-aravind-jpa-15}
Waegell, M.; Aravind, P.K.
\newblock Parity Proofs of the {K}ochen-{S}pecker Theorem Based on the {L}ie
Algebra {E}8.
\newblock {\em {\it J. Phys. A}} {\bf 2015}, {\em 48},~225301.
\newblock {\url{https://doi.org/10.1088/1751-8113/48/22/225301}}.

\bibitem[Elford and Lison{\v e}k(2020)]{elford-lisonek-20}
Elford, B.; Lison{\v e}k, P.
\newblock Kochen-{S}pecker sets in four-dimensional spaces.
\newblock {\em {\it ar{X}iv}} {\bf 2020}.
\newblock {\url{https://doi.org/10.48550/arXiv.1905.09443}}.

\bibitem[Hofmann(2024)]{hofmann-24}
Hofmann, H.F.
\newblock Statistical Signatures of Quantum Contextuality.
\newblock {\em {\it Entropy}} {\bf 2024}, {\em 26},~725.
\newblock {\url{https://doi.org/10.3390/e26090725}}.

\bibitem[Planat and Saniga(2012)]{planat-saniga-12}
Planat, M.; Saniga, M.
\newblock Five-Qubit Contextuality, Noise-Like Distribution of Distances
Between Maximal Bases and Finite Geometry.
\newblock {\em {\it Phys. Lett. A}} {\bf 2012}, {\em 376},~3485--3490.
\newblock {\url{https://doi.org/10.1016/j.physleta.2012.10.020}}.

\bibitem[Bechmann-Pasquinucci and Tittel(2000)]{bech-00}
Bechmann-Pasquinucci, H.; Tittel, W.
\newblock Quantum Cryptography Using Larger Alphabets.
\newblock {\em {\it Phys. Rev. A}} {\bf 2000}, {\em 61},~062308.
\newblock {\url{https://doi.org/10.1103/PhysRevA.61.062308}}.

\bibitem[Svozil(2010)]{svozil-10}
Svozil, K.
\newblock Bertlmann’s Chocolate Balls and Quantum Type Cryptography.
\newblock In \emph{Proceedings of the Physics and Computation 2010; 3rd International Workshop; Luxor / Aswan, Egypt, Aug 30 - Sep 6}; Guerra, H., Ed.;
Department of Mathematics, University of Azores, Portugal, 
 2010; pp. 235--249. \url{https://www.pc2010.uac.pt/contents/pc2010_proceedings.pdf} (accessed on 1 January 2025).
 
\bibitem[Krenn et~al.(2014)Krenn, Huber, Fickler, Lapkiewicz, Ramelow, and
Zeilinger]{krenn-zeilinger-14}
Krenn, M.; Huber, M.; Fickler, R.; Lapkiewicz, R.; Ramelow, S.; Zeilinger, A.
\newblock Generation and Confirmation of a (100$\times$100)-Dimensional
Entangled Quantum System.
\newblock {\em {\it Proc. Natl. Acad. Sci. USA}} {\bf 2014}, {\em 111},~6243--6247. \url{https://doi.org/10.1073/pnas.1402365111}

\bibitem[Stip{\v c}evi{\'c}(2016)]{tekin-16}
Tekin, B.
\newblock Stern-{G}erlach Experiment with Higher Spins.
\newblock {\em {\it Eur. J. Phys.}} {\bf 2016}, {\em 
37},~035401.
\newblock {\url{https://doi.org/10.1088/0143-0807/37/3/035401}}.

\bibitem[Stip{\v c}evi{\'c}(2016)]{stipcevic-16}
Stip{\v c}evi{\'c}, M.
\newblock Quantum Random Flip-Flop and Its Applications in Random Frequency
Synthesis and True Random Number Generation.
\newblock {\em {\it Rev. Sci. Instrum.}} {\bf 2016}, {\em 
87},~035113.
\newblock {\url{https://doi.org/10.1063/1.4943668}}.

\bibitem[Lison{\v e}k(2019)]{lisonek-19}
Lison{\v e}k, P.
\newblock Kochen-{S}pecker sets and {H}adamard matrices.
\newblock {\em {Theor. Comp. Sci.}} {\bf 2019}, {\em 800},~042101.
\newblock {\url{https://doi.org/10.1016/j.tcs.2019.10.021}}.
\end{thebibliography}

\end{document}